
\documentclass[conference,compsoc]{IEEEtran}
\usepackage{prologue, graphicx}
\usepackage{mdwlist}
\usepackage{epstopdf}
\usepackage{bm}
\usepackage{makecell}
\usepackage{color}
\usepackage{multirow}
\usepackage{enumitem}
\usepackage{bbm}
\usepackage{array}
\usepackage{soul}


%

%
\ifCLASSOPTIONcompsoc
  \usepackage[nocompress]{cite}
\else
  \usepackage{cite}
\fi
%

%
\ifCLASSINFOpdf
\else
\fi
\hyphenation{op-tical net-works semi-conduc-tor}

\begin{document}
%
\title{From Randomized Response to Randomized Index: Answering Subset Counting Queries with Local Differential Privacy}

%
%
%
%
%
%
%
%
%
%
%


%

\IEEEoverridecommandlockouts

\author{
	\IEEEauthorblockN{
Qingqing Ye\IEEEauthorrefmark{1},
Liantong Yu\IEEEauthorrefmark{1},
Kai Huang\IEEEauthorrefmark{2},
Xiaokui Xiao\IEEEauthorrefmark{3}, 
Weiran Liu\IEEEauthorrefmark{4},
Haibo Hu\IEEEauthorrefmark{1}
}
\IEEEauthorblockA{\IEEEauthorrefmark{1}Department of Electrical and Electronic Engineering, 
The Hong Kong Polytechnic University\\ Email: qqing.ye@polyu.edu.hk, liantong2001.yu@connect.polyu.hk, haibo.hu@polyu.edu.hk}
\IEEEauthorblockA{\IEEEauthorrefmark{2}Faculty of Information Technology, Macau University of Science and Technology\\
Email: kylehuangk@gmail.com}
\IEEEauthorblockA{\IEEEauthorrefmark{3}School of Computing,	National University of Singapore \\
	Email: xkxiao@nus.edu.sg}
\IEEEauthorblockA{\IEEEauthorrefmark{4}Alibaba Group \\
	Email: weiran.lwr@alibaba-inc.com}
}



\maketitle

\begin{abstract}
	Local Differential Privacy (LDP) is the predominant privacy model for safeguarding individual data privacy. Existing perturbation mechanisms typically require perturbing the original values to ensure acceptable privacy, which inevitably results in value distortion and utility deterioration. In this work, we propose an alternative approach --- instead of perturbing values, we apply randomization to indexes of values while ensuring rigorous LDP guarantees.	
	
	Inspired by the deniability of randomized indexes, we present CRIAD for answering subset counting queries on set-value data. By integrating a multi-dummy, multi-sample, and multi-group strategy, CRIAD serves as a fully scalable solution that offers flexibility across various privacy requirements and domain sizes, and achieves more accurate query results than any existing methods. Through comprehensive theoretical analysis and extensive experimental evaluations, we validate the effectiveness of CRIAD and demonstrate its superiority over traditional value-perturbation mechanisms.
\end{abstract}


%
\IEEEpeerreviewmaketitle

\section{Introduction}
\label{sec:introduction}
As big data analytics become more prevalent, service providers are increasingly eager to collect extensive usage data to improve their services. However, much of this data, particularly when gathered from individuals, includes sensitive information such as biometrics, personal identification, health data, financial transactions, and location trajectories. Directly collecting such data for model training or statistical analysis raises significant privacy concerns.

To address these privacy challenges, Local Differential Privacy  (LDP)~\cite{duchi2013local,kasiviswanathan2011can} has emerged as the predominant privacy model in many large-scale distributed systems, used by tech giants such as Apple~\cite{thakurta2017learning}, Google~\cite{erlingsson2014rappor} and Microsoft~\cite{ding2017collecting}, to protect end users' data. Despite its benefits, LDP has faced criticism for its low utility, as the values collected must undergo significant or even unbounded perturbation by noise injection~\cite{dwork2006calibrating,dwork2014algorithmic,wang2019collecting} or value randomization~\cite{warner1965randomized,kairouz2014extremal,wang2017locally} to ensure acceptable privacy.

In the literature, there are many existing works on LDP for set-value data~\cite{qin2016heavy, wang2018locally, wang2018privset, wang2019locally, ye2019privkv, du2024top}. Specifically, each user possesses a set of private items, and the data collector aims to estimate the frequency of a specific item or identify the most frequent items among users. However, in real-world applications, items are often organized into categories and there is a need to estimate statistics for the set of items belonging to a specific category. The following are two examples.

\begin{itemize}
	\item 	
	Amazon sells millions of books, each belonging to a specific category, e.g., fiction, poetry. To predict sales and manage inventory effectively, Amazon needs to obtain the sales volume for a specific category among thousands of titles.
	
	\item
 	To optimize advertising strategies, IMDb needs to determine which movie genre attracts the largest audience. This involves analyzing the number of users interested in a specific genre (e.g., thriller) that encompasses a set of movies.
\end{itemize}

In this study, we formulate such problem as a {\bf subset counting query}, which returns the total count of items within a subset of the item domain. To answer this query in the context of LDP, there are two solutions adapted from existing works. First, each user counts her items belonging to that subset and then employs a numerical-value perturbation mechanism (e.g., Laplace Mechanism~\cite{dwork2006calibrating} or Piecewise Mechanism~\cite{wang2019collecting}) to inject random noise to the count and then reports the sanitized count. Alternatively, each user employs a categorical-value perturbation mechanism (e.g., Randomized Response~\cite{warner1965randomized, kairouz2014extremal}) to perturb her item set and then reports the perturbed items.

However, both two solutions introduce perturbation errors into the original value. In mission-critical applications, particularly in medical and financial applications, value perturbation does not apply as distorted values become useless or even detrimental. In fact, significant distortion of values can overshadow the original value and potentially make it unbounded. In this work, we propose an alternative approach: instead of perturbing values, we apply randomization to indexes of values, while ensuring a rigorous LDP guarantee. The following example illustrates how randomized index ensures plausible deniability, while the original values remain intact.

{\bf Example.} {\it In a ballot with $10$ candidates, each voter independently votes yes/no to each candidate. An interviewer wants to estimate the average number of ``yes" votes of all voters. To ensure privacy of these votes, each voter randomly samples a candidate index from $\{1, 2, ..., 10\}$, and then faithfully reports her vote (i.e., yes/no) for the selected candidate, but without indicating the index she has sampled. }

Inspired by the deniability of the above randomized index, we propose CRI (short for \underline{C}ounting via \underline{R}andomized \underline{I}ndex) protocol for answering subset counting queries, while satisfying $\epsilon$-LDP. Although perturbation noise is not necessary in most cases, to ensure an unbiased estimation, CRI still suffers from utility loss by re-introducing certain perturbation noise in few extreme cases. To address this issue, we develop a dummy-assisted solution CRIAD (short for \underline{CRI} with \underline{A}ugmented \underline{D}ummies) to eliminate perturbation noise, which thus achieves significantly higher accuracy. On the other hand, we further enhance the scalability of CRIAD by developing a multi-dummy, multi-sample and multi-group strategy that can support a wide range of privacy requirements (specified by $\epsilon$) and domain sizes. Through theoretical analysis and empirical studies, we show the effectiveness of CRIAD. 
To summarize, our contributions are three-folded.
\begin{itemize}
	\item
	We formulate the problem of answering subset counting queries under LDP, and design randomized-index-based solutions that can ensure rigorous LDP guarantees.
	To the best of our knowledge, this is the first LDP mechanism based on the deniability of randomized indexes.
	
	\item
	We develop a scalable solution, CRIAD, which accommodates flexible privacy requirements and domain sizes. By leveraging augmented dummy items, CRIAD eliminates perturbation noise injected into the original value while satisfying $\epsilon$-LDP.    	
	
	\item
	Through comprehensive theoretical and experimental analysis, we demonstrate that CRIAD achieves significantly higher accuracy than existing value-perturbation LDP mechanisms.
\end{itemize}

The rest of the paper is organized as follows. Section~\ref{sec:background} introduces preliminaries, problem definition, and existing solutions. Section~\ref{sec:baseline} presents our baseline solution CRI via randomized index deniability. Section~\ref{sec:rid} proposes the optimized solution CRIAD. Section~\ref{sec:experiment} presents experimental evaluations. Finally, we review existing work in Section~\ref{sec:related_work} and conclude this paper in Section~\ref{sec:conclusion}.

%

\section{Preliminaries and Problem Definition}
\label{sec:background}
In this section, we introduce preliminaries on LDP, formulate the problem of answering a subset counting query under LDP, and then present two naive solutions that are directly adapted from existing works.

\subsection{Local Differential Privacy}
Differential privacy (DP)~\cite{dwork2006differential,dwork2006calibrating} works in both centralized and local settings. Centralized DP~\cite{li2016differential} requires the data curator to be fully trusted to collect all data, while local DP does not rely on this assumption. In the local setting~\cite{duchi2013local,kasiviswanathan2011can}, each user locally perturbs her data before reporting them to an untrusted data collector, which makes it more secure and practical in real-world applications. The formal definition is as follows.
\begin{definition}
	\label{def:ldp}
	({\bf Local Differential Privacy, LDP}) A randomized algorithm $\mathcal{A}$ satisfies $\epsilon$-LDP if for any two input records $w$ and $w'$, and any set $W$ of possible outputs of $\mathcal{A}$, the following inequality holds.
	\begin{align}
		\label{eq:ldp}
		\frac{\mathrm{Pr}[\mathcal{A}(w)\in W]}{\mathrm{Pr}[\mathcal{A}(w') \in W]}\le e^\epsilon 
	\end{align}
\end{definition}

In the above definition, $\epsilon$ is called the privacy budget, which controls the deniability of a randomized algorithm taking $w$ or $w'$ as its input. As with centralized DP, LDP also has the property of sequential composition~\cite{mcsherry2009privacy} as below, which guarantees the overall privacy for a sequence of randomized algorithms.
\begin{theorem}
	\label{theorem:composition}
	({\bf Sequential Composition}) Given $t$ randomized algorithms $\mathcal{A}_i(1\leq i\leq t)$, each providing $\epsilon_i$-local differential privacy, then the sequence of algorithms $\mathcal{A}_i(1\leq i\leq t)$ collectively provides $(\Sigma \epsilon_i)$-local differential privacy.
\end{theorem}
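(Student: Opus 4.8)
The plan is to reduce the statement to the single-output case and then lift it to arbitrary output sets $W$ by summation. I would begin by defining the composite mechanism $\mathcal{A} = (\mathcal{A}_1, \ldots, \mathcal{A}_t)$, whose output is the tuple $o = (o_1, \ldots, o_t)$ in which $o_i$ denotes the report produced by $\mathcal{A}_i$. The central structural fact I would exploit is that the internal coin tosses of the $t$ algorithms are independent, so for every input $w$ the joint output probability factorizes as $\mathrm{Pr}[\mathcal{A}(w) = o] = \prod_{i=1}^t \mathrm{Pr}[\mathcal{A}_i(w) = o_i]$.

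With this factorization in hand, the core estimate follows by a term-by-term application of Definition~\ref{def:ldp}. For any two inputs $w, w'$ and any fixed tuple $o$, I would write the likelihood ratio as a product of per-algorithm ratios and bound each factor by $e^{\epsilon_i}$:
\begin{align}
	\frac{\mathrm{Pr}[\mathcal{A}(w)=o]}{\mathrm{Pr}[\mathcal{A}(w')=o]} = \prod_{i=1}^t \frac{\mathrm{Pr}[\mathcal{A}_i(w)=o_i]}{\mathrm{Pr}[\mathcal{A}_i(w')=o_i]} \le \prod_{i=1}^t e^{\epsilon_i} = e^{\sum_{i=1}^t \epsilon_i}.
\end{align}
To finish, I would extend this pointwise bound to an arbitrary output set $W$. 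Since $\mathrm{Pr}[\mathcal{A}(w) = o] \le e^{\sum_{i=1}^t \epsilon_i}\,\mathrm{Pr}[\mathcal{A}(w')=o]$ holds for every individual tuple, summing over all $o \in W$ (or integrating, in the continuous case) preserves the inequality and yields $\mathrm{Pr}[\mathcal{A}(w)\in W] \le e^{\sum_{i=1}^t \epsilon_i}\,\mathrm{Pr}[\mathcal{A}(w') \in W]$, which is precisely the definition of $(\sum_{i=1}^t \epsilon_i)$-LDP.

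The step I expect to require the most care is justifying the probability factorization, that is, the independence of the randomization across the $t$ mechanisms. In the purely non-adaptive setting independence is immediate and the product form is exact. If instead the mechanisms are adaptive, so that each $\mathcal{A}_i$ may observe the earlier reports $o_1, \ldots, o_{i-1}$, I would replace the plain product with a chain of conditional probabilities $\mathrm{Pr}[\mathcal{A}(w)=o] = \prod_{i=1}^t \mathrm{Pr}[\mathcal{A}_i(w)=o_i \mid o_1, \ldots, o_{i-1}]$ and argue that each conditional factor still obeys the $e^{\epsilon_i}$ bound, because the $\epsilon_i$-LDP guarantee of $\mathcal{A}_i$ holds uniformly for every fixed value of the conditioning history. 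By contrast, the summation over $W$ is entirely routine and presents no difficulty.
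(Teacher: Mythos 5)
Your proof is correct, but note that the paper itself does not prove Theorem~\ref{theorem:composition}: it is stated as a preliminary and attributed to prior work (the sequential composition result of McSherry), so there is no in-paper argument to compare against. Your route --- factorizing the joint output probability of the composed mechanism, bounding each factor by $e^{\epsilon_i}$ via Definition~\ref{def:ldp} applied to singleton output sets, and then summing (or integrating) the pointwise inequality over an arbitrary output set $W$ --- is exactly the canonical proof of this result, and your extension to the adaptive case via conditional probabilities, each bounded uniformly in the conditioning history, is the standard way to handle the only genuinely delicate point. The lone technicality you leave implicit is the degenerate case $\mathrm{Pr}[\mathcal{A}_i(w')=o_i]=0$: since Definition~\ref{def:ldp} is symmetric in $w$ and $w'$, the supports of $\mathcal{A}_i(w)$ and $\mathcal{A}_i(w')$ coincide, so such terms contribute $0$ to both sides and the ratio convention causes no harm; this is routine and does not constitute a gap.
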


\subsection{Problem Definition}
\label{sec:definition}
We assume there are $n$ users $\mathcal{U}=\{u_1, u_2, ..., u_n\}$, and each user $u_i$ possesses a set of private items $R_i \subseteq D$, where $D=\{r_1, r_2, ..., r_{|D|}\}$ is the item domain. The domain has some categories $\{c_1, c_2, ...\}$, and each item belongs to one or more categories. In other words, items belonging to a category is a subset of the domain. For ease of presentation, we use $d=|c_i|$ to denote the sub-domain size of category $c_i$, i.e., the number of items in the domain $D$ belonging to category $c_i$. 
Now we formally define the subset counting query as follows.

\begin{definition}
	({\bf Subset Counting Query}) Denoted by $Q(c)$, a subset counting query takes as input a category $c$, and returns the count of items belonging to $c$ among the user population. Formally,
	\begin{align}
		\label{eq:subset_counting_query}
		Q(c) = \sum_{i=1}^{n}\sum_{r\in R_i} \mathbbm{1}_c(r),
	\end{align}
	where $\mathbbm{1}_c(r)$ is an indicator function, which returns $1$ if an item $r$ belongs to the category $c$, and returns $0$ otherwise. 
\end{definition}


\begin{table}
	\caption{Notations}
	\label{table:notation}
	\centering
	\begin{tabular}{|c|c|c|}
		\hline
		\bf Symbol          & \bf Description   \\ \hline
		$\epsilon$      & the privacy budget \\ \hline
 		$n$                 & the number of users        \\ \hline
		$u_i$    			& the $i$-th user in user population, $i\in \{1,..., n\}$ \\ \hline		
		$D$                 & item domain \\ \hline
		$R_i$    			& a set of items possessed by user $u_i$, $R_i \subseteq D$ \\ \hline
		$c_i$               & the $i$-th category of items in the domain, $i\in \{1, 2, ...\}$ \\ \hline		
		$d$             & the size of category, $d=|c_i|$ \\ \hline
		$m$	                & the number of dummy items in CRIAD \\ \hline
		$g$	                & the number of groups in CRIAD \\ \hline
		$s$	                & the number of samples in CRIAD \\ \hline
	\end{tabular}
\end{table}

\subsection{Solutions from Existing Work}
\label{sec:numerical_ldp}
To answer a subset counting query $Q(c)$ of a category $c$, there are two naive solutions adapted from existing works.

{\bf Solution 1: Numerical Value Perturbation (NVP)}.
Each user $u_i$ locally counts her items that belong to category $c$. Then she perturbs it and reports a sanitized count $t_i^*$ by a numerical-value perturbation mechanism $\mathcal{A}(\cdot)$, e.g., Laplace Mechanism (LM)~\cite{dwork2006calibrating}, Piecewise Mechanism (PM)~\cite{wang2019collecting} or Square Wave mechanism (SW)~\cite{li2020estimating}. Formally,
\begin{align*}
	t_i^* = \mathcal{A}\left(\sum\nolimits_{r\in R_i} \mathbbm{1}_c(r)\right),
\end{align*}
Then the subset counting query result is the summation of noisy reports from all users, i.e., $\sum_{i=1}^{n}t_i^*$.


{\bf Solution 2: Padding-and-Sampling Perturbation (PSP)}.
Each user first pads (with dummy items) or truncates the item set in her records that belongs to category $c$ into a fixed padding length $\eta \ll |c|$. Then she samples one item from $\eta$ and perturbs it by a categorical-value perturbation mechanism, e.g., $k$-ary Randomized Response (kRR)~\cite{kairouz2014extremal}, Optimized Unary Encoding~\cite{wang2017locally} or Optimal Local Hashing (OLH)~\cite{wang2017locally}. To compensate the effect of sampling, upon receiving the sanitized reports from all users, the data collector aggregates the count and scales it up by a factor of $\eta$. Then the subset counting query $Q(c)$ can be estimated by summing up the counts of all items. This method is adapted from the padding-and-sampling protocol~\cite{wang2018locally}, which has been proved to achieve good performance when the domain size is large.

\subsection{Pitfalls of Existing Solutions}
\label{sec:pitfalls}
NVP intuitively treats the local count as a numerical value and employs a numerical-value perturbation mechanism. However, such perturbation may incur large noise, as the local count can vary a lot among users, which could be as low as $0$ (i.e., a user has no items belonging to a specif category $c$) or as high as $|c|$ (i.e., a user has all items belonging to the category $c$). This inherently results in a large sensitivity for the perturbation mechanism and thus a low utility of subset counting query result.

PSP applies perturbation to a single item instead of the local count on item set, which enables users' reports more informative. However, the effectiveness depends on an appropriate padding length $\eta$. Generally, a large $\eta$ reduces the number of valid items and increases the estimation variance, whereas a small $\eta$ underestimates the subset counting query result~\cite{wang2018locally}. In practice, setting an appropriate value for $\eta$ a priori can be challenging, which hinders its practical application. In addition, for each user it only samples one item and scales it up by $\eta$, which will incur large estimation variance.

To summarize, both solutions suffer from large utility loss due to the noise introduced by heavy perturbation. Additionally, an inappropriate parameter setting further diminishes the utility of PSP.

\section{Randomized Index for Subset Counting}
\label{sec:baseline}
In this section, we present a novel design for answering subset counting query under LDP, namely \underline{C}ounting via \underline{R}andomized \underline{I}ndex (CRI).  We first elaborate on its design rationale, and present a new solution for count aggregation. Then we show the implementation details, followed by comprehensive privacy and utility analysis.

\subsection{Randomized Response vs. Randomized Index}
\label{sec:rr}

In general, given a set of items from each user, a subset counting query returns the total counts of items that belong to a given category. 
In the literature, all existing works study either frequency estimation of a specific item or top-frequent ones (i.e., heavy hitter identification)~\cite{qin2016heavy, wang2018locally, du2024top}. There is no work in counting a set of items that belong to a category. 

The existing methods randomize a user's real data to ensure ``response-level'' deniability, which inevitably incurs utility loss to the query result. Our key idea is to randomize the indexes of the items being counted to ensure the ``index-level'' deniability, so that the item itself does not need perturbation. The following two examples illustrate the difference between the traditional response-level deniability (Example \uppercase\expandafter{\romannumeral1}) and index-level deniability (Example \uppercase\expandafter{\romannumeral2}).

{\bf Example \uppercase\expandafter{\romannumeral1}.} {\it In a survey there are $10$ sensitive yes/no questions. Each user adopts PSP in Section~\ref{sec:numerical_ldp}. A user first samples one question from them, randomize her true response, and then reports the sanitized response and the question index to which she answers. So the deniability is guaranteed by the randomized response.}

{\bf Example \uppercase\expandafter{\romannumeral2}.} {\it In the same survey, each user randomly samples a question, and then sends her true answer to the data collector, without indicating which question she answers to. So the deniability is guaranteed by the randomized index.}

We observe that Example \uppercase\expandafter{\romannumeral1} exhibits a larger variance and consequently lower accuracy compared to Example \uppercase\expandafter{\romannumeral2}. This is because the former involves both sampling and perturbation error, whereas the latter has the same amount of sampling error but zero perturbation error. This observation motivates us to develop a Counting via Randomized Index (CRI) protocol for answering subset counting queries under the $\epsilon$-LDP guarantee.

\subsection{Overview of CRI Protocol}
As shown in Figure~\ref{fig:overview}, the CRI protocol for subset counting queries consists of three steps. Step \textcircled{\small 1} pre-processes each user's item set by filtering items unrelated to category $c$ and grouping the filtered items. Step \textcircled{\small 2} produces a {\it bit vector} of values in category $c$, each bit corresponding to the existence of one item. Then a bit is sampled from the vector, and sent to the data collector. Upon receiving the sampled bits from all users, the collector estimates the subset counting query result $Q(c)$ in Step \textcircled{\small 3}.

\begin{figure*}
	\centering
	\subfigure{
		\includegraphics[width=\linewidth]{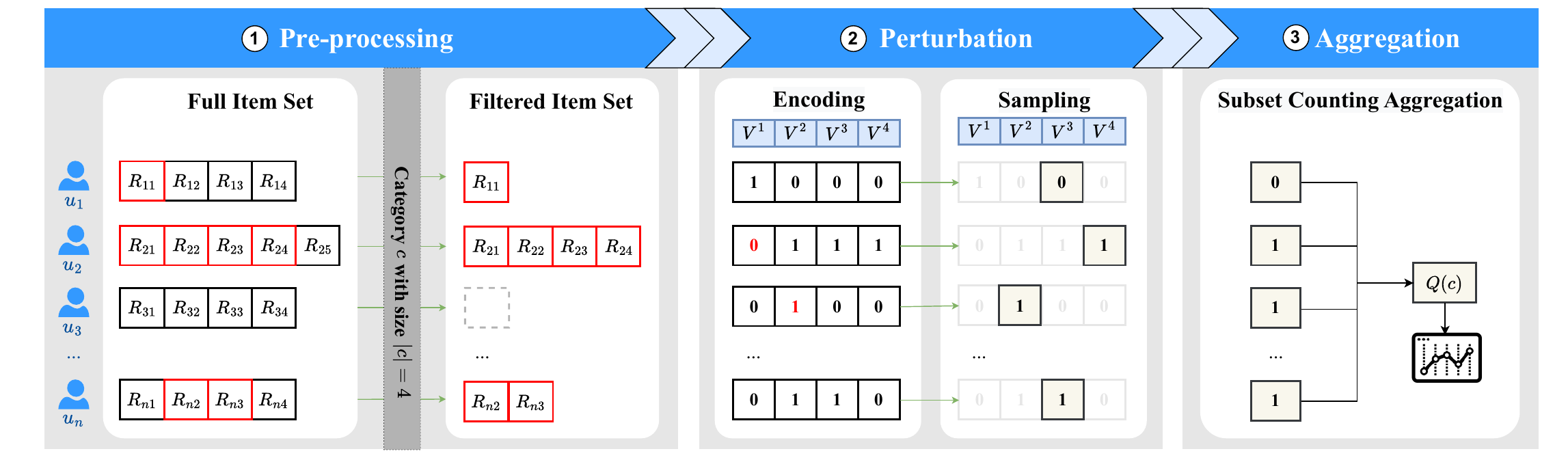}
	}
	
	\caption{Workflow of CRI for Answering Subset Counting Queries.}
	\vspace{-0.1in}
	\label{fig:overview}
\end{figure*}

There remains a privacy issue in the above procedure. Recall that in Figure~\ref{fig:overview}, the subset size of the specified category $c$ is $4$. The user $u_1$ has only one item (i.e., $R_{11}$) belonging to $c$, resulting in the encoded bit vector `1000'. Thus the sampled bit can be either `1' or `0' with some probability. However, for user $u_2$, who has all the items in $c$ and an encoded bit vector of `1111', and user $u_3$, who has none of the items in $c$ and an encoded bit vector of `0000', their sampled bits must be `1' and `0' respectively, i.e., without any deniability. This absolutely violates $\epsilon$-LDP. To address this issue, for those all `1' and all `0' cases, we can randomly flip a bit to ensure there are both bits `1' and `0' in each vector. In Figure~\ref{fig:overview}, the flipped bits are shown in red. 

In what follows, we will elaborate on the CRI protocol, with a focus on its correctness and privacy analysis. 

\subsection{CRI: Counting via Randomized Index}

Recall that given a subset counting query $Q(c)$ on category $c$, each user $u_i$ first filters those unrelated items of $c$ and then encodes her filtered item set into a bit vector $V_i=\{V_i^1, V_i^2, ..., V_i^d\}$ of length $d=|c|$, where each bit $V_i^l$ ($l \in \{1, 2, ..., d\}$) is defined as
\begin{align}
	\label{eq:encoding}
	V^l_i = \begin{cases}
		1, &\text{if} \ \ \exists r \in R_i, \mathbbm{1}_c(r)=1,     \\
		0, &\text{otherwise}.
	\end{cases}
\end{align}
Table~\ref{table:cases} shows different cases of bit vectors according to the number of bit `$1$', where $\pi_t$ denotes the proportion of those cases whose number of bit `$1$' is $t$. Note that each case $\pi_t$ involves up to $\tbinom{d}{t}$ different bit vectors. For example, $\pi_1$ is the proportion of vectors `100...0', `010...0', `001...0', ..., `000...1' among all $n$ users.  
Thus, we have $\sum_{t=0}^{d}\pi_t= 1$. 

\begin{table}[h]
	\caption{Cases of Users' Encoded Bit Vectors}
	\label{table:cases}
	\centering
	\begin{tabular}{|c|c|c|c|c|}
		\hline
		{\bf Proportion} &{\bf \# of $\bm{1}$} &{\bf \# of $\bm{0}$} &{\bf Pr}$\bm{[1]}$ & {\bf Pr}$\bm{[0]}$   \\ \hline
		$\pi_0$ &$0$ &$d$ &$0$ &$1$   \\ \hline
		$\pi_1$ &$1$ &$d-1$ &$1/d$ &$(d-1)/d$   \\ \hline
		$\pi_2$ &$2$ &$d-2$ &$2/d$ &$(d-2)/d$   \\ \hline
		$\pi_3$ &$3$ &$d-3$ &$3/d$ &$(d-3)/d$   \\ \hline
		... &... &... &... &...   \\ \hline
		$\pi_{d-1}$ &$d-1$ &$1$ &$(d-1)/d$ &$1/d$   \\ \hline
		$\pi_d$ &$d$ &$0$ &$1$ &$0$   \\ \hline
	\end{tabular}
\end{table}

Based on the above $d+1$ cases, according to Eq.~\ref{eq:subset_counting_query}, the counting query result on category $c$ is
\begin{align}
	\label{eq:baseline_fre}
	Q(c) = \sum\nolimits_{t=0}^{d} n\pi_t \cdot t
	= n\sum\nolimits_{t=1}^{d} t \pi_t.
\end{align}

In Table~\ref{table:cases}, we also show $\mathrm{Pr}[1]$ (resp. $\mathrm{Pr}[0]$), the probability when a user randomly samples and reports bit `$1$' (resp. `$0$') from the encoded bit vector. Except for the cases of $\pi_0$ and $\pi_d$, all cases have non-zero probabilities to report either `$0$' or `$1$'. To satisfy $\epsilon$-LDP, a random `$0$' (resp. `$1$') should be flipped to `$1$' (resp. `$0$') in the case of $\pi_0$ (resp. $\pi_d$) before sampling. Let $z_i$ denote the sampled index, then the data collector can derive a noisy count from these sampled bits as
\begin{align}
	\label{eq:baseline_fre_observed}
	\bar{\theta} = d\sum\nolimits_{i=1}^{n} V_i^{z_i}.
\end{align}
And its expectation is
\begin{align}
	\label{eq:baseline_fre_observed_expectation1}
		\mathbb{E}[\bar{\theta}] &= n\pi_0 \cdot 1 + \sum\nolimits_{t=1}^{d-1} n\pi_t \cdot t + n\pi_d\cdot(d-1)  \\
	\label{eq:baseline_fre_observed_expectation2}
		&= n \left(
			\sum\nolimits_{t=1}^{d-1} t \pi_t + \pi_0 + (d-1) \pi_d
		\right).
\end{align}

The term $n\pi_0 \cdot 1$ in Eq.~\ref{eq:baseline_fre_observed_expectation1} means there is a bit `1' in the case of $\pi_0$ after flipping, while the term $n\pi_d\cdot(d-1)$ means there are $d-1$ bit `1' in the case of $\pi_d$ after flipping.
The gap between Eqs.~\ref{eq:baseline_fre} and \ref{eq:baseline_fre_observed_expectation2} must be calibrated from $\bar{\theta}$ in Eq.~\ref{eq:baseline_fre_observed} to ensure an unbiased estimation $\tilde{\theta}$:
\begin{align}
	\label{eq:baseline_calibrated}
	\tilde{\theta} = \bar{\theta} + n(\pi_d - \pi_0).
\end{align}

We are yet to derive $\pi_d-\pi_0=\Delta \pi$ in Eq.~\ref{eq:baseline_calibrated}, which is estimated by a privacy budget $\epsilon'$ allocated from the overall budget $\epsilon$. Each user reports a local status flag $y_i$ that indicates whether her case is $\pi_d$, $\pi_0$, or neither of them:
\begin{align}
	\label{eq:pid_pi0_s}
	y_i = \begin{cases}
		1, &\text{if} \ V_i = \{1\}^d, \\
		-1, &\text{if} \ V_i = \{0\}^d,  \\
		0, &\text{otherwise}.
	\end{cases}
\end{align}
The flag is sanitized into $y'_i$ by kRR~\cite{kairouz2014extremal} with the privacy budget $\epsilon'$ : 
\begin{align}
	\label{eq:pid_pi0_grr}
	\textrm{Pr}[y'_i=x] = \begin{cases}
		\frac{e^{\epsilon'}}{2+e^{\epsilon'}}, &\text{if} \ x=y_i, \\
		\frac{1}{2+e^{\epsilon'}}, &\text{if} \ x \in \{1, -1, 0\}\backslash y_i.
	\end{cases}
\end{align}

Upon receiving the sanitized status flags of all users, we can estimate $\Delta \pi$ as
\begin{align}
	\label{eq:pid_pi0}
	\Delta\pi = \frac{(2+e^{\epsilon'})\sum_{i=1}^{n}y'_i}{n(e^{\epsilon'}-1)}.
\end{align}

Therefore, the subset counting query result on the category $c$ can be estimated as
\begin{align}
	\label{eq:baseline_calibrated_estimate}
	\tilde{Q}(c) = \tilde{\theta} = \bar{\theta} + n\Delta\pi.
\end{align}
In Section~\ref{sec:baseline_analysis}, we will provide the proof of Eq.~\ref{eq:pid_pi0} together with the proof of unbiasedness of Eq.~\ref{eq:baseline_calibrated_estimate}.

\begin{algorithm}
	\footnotesize
	\caption{Workflow of CRI Protocol}
	\begin{tabular}{ll}
		{\bf Input:}  & A category $c$\\
		& All users' item sets $\{R_{i}, R_{i}, ..., R_{n}\}$ \\
		& Privacy budget $\epsilon$ \\
		{\bf Output:} & Estimated subset counting query result $\tilde{Q}(c)$  \\
		{\bf Procedure:} & \\
		\multicolumn{2}{l}{\quad  $//$\textbf{\emph{User side}}}
	\end{tabular}
	\label{alg:baseline}
	\begin{algorithmic}[1]
		\vspace{-0.06in}
		\FOR{each user $u_i \in \mathcal{U}$}
		\STATE Extract the item set $R^*_i$ from $R_i$ with items belonging to $c$ 
		\STATE Encode $R_i^*$ into a bit vector $V_i=\{V_i^1, V_i^2, ..., V_i^d\}$ by Eq.~\ref{eq:encoding}, where $d=|c|$
		\IF{$V_i=\{1\}^d$}
		\STATE Set flag $y_i=1$
		\STATE Randomly flip a bit to $0$
		\ELSIF{$V_i=\{0\}^d$}
		\STATE Set flag $y_i=-1$
		\STATE Randomly flips a bit to $1$
		\ELSE
		\STATE Set flag $y_i=0$
		\ENDIF
		
		\STATE Randomly sample an index $z_i \in \{1, 2, ..., d\}$
		\STATE Perturb $y_i$ to $y'_i$ by Eq.~\ref{eq:pid_pi0_grr} with budget $\epsilon'=\epsilon-\log{(d-1)}$
		\STATE Send $V_i^{z_i}$ and $y'_i$ to the data collector
		\ENDFOR
		
		\vspace{0.05in}
		$//$\textbf{\emph{Collector side}}
		\STATE Calculate the noisy count $\bar{\theta}$ by Eq.~\ref{eq:baseline_fre_observed}
		\STATE Calculate $\Delta\pi$ by Eq.~\ref{eq:pid_pi0}
		\STATE Estimate the counting query result $\tilde{Q}(c)$ by Eq.~\ref{eq:baseline_calibrated_estimate}
		\RETURN Query result $\tilde{Q}(c)$
	\end{algorithmic}
\end{algorithm}

Algorithm~\ref{alg:baseline} summarizes the workflow of CRI protocol for answering a subset counting query. Given a query on category $c$, each user $u_i$ first extracts a filtered record set $R_i^*$ from her original $R_i$ (Line 2) and then encodes the filtered item set $R_i^*$ into a bit vector $V_i$ with length $d=|c|$ (Line 3). Subsequently, the user randomly flips a bit if all bits are $1$ or $0$ (Lines 6 and 9). Meanwhile, the user also sets her local status flag $y_i$ according to Eq.~\ref{eq:pid_pi0_s} (Lines 5, 8 and 11). Then the user randomly samples an index $z_i \in \{1, 2, ..., d\}$ (Line 12) and perturbs her status flag $y_i$ to $y'_i$ by Eq.~\ref{eq:pid_pi0_grr} with privacy budget $\epsilon'$ (Line 13). The computation of $\epsilon'$ will be elaborated in Theorem \ref{thm:baseline_privacy}. Finally, the sampled bit and the sanitized status are sent to the data collector (Line 14). Upon receiving all reports from users, the collector calculates a noisy count $\bar{\theta}$ of bit `$1$' from all sampled bits by Eq.~\ref{eq:baseline_fre_observed}, and calculates $\Delta \pi$ from all status flags by Eq.~\ref{eq:pid_pi0}, and then estimates the counting query result $\tilde{Q}(c)$ (Lines 15-17). 





\subsection{Privacy and Utility Analysis}
\label{sec:baseline_analysis}

In this subsection, we establish the privacy and utility guarantee of our CRI protocol for subset counting query. In particular, Theorem~\ref{thm:baseline_privacy} proves that Algorithm~\ref{alg:baseline} satisfies $\epsilon$-LDP. Theorem~\ref{thm:baseline_unbiased} ensures the estimated result is unbiased, and Theorems~\ref{thm:baseline_variance} and \ref{thm:baseline_variance} provide the error bound of the estimation variance.

\begin{theorem}
	\label{thm:baseline_privacy}
Algorithm~\ref{alg:baseline} satisfies $\epsilon$-LDP, where $\epsilon=\ln(d-1) + \epsilon'$, $d$ is the size of query category, and $\epsilon'$ is the privacy budget for perturbing the status flag by Eq.~\ref{eq:pid_pi0_grr}.
\end{theorem}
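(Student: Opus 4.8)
The plan is to decompose the user-side report of Algorithm~\ref{alg:baseline} into two parts that are generated with independent randomness --- the sampled bit $V_i^{z_i}$ and the sanitized status flag $y_i'$ --- to bound the privacy loss of each part separately, and then to combine them via sequential composition (Theorem~\ref{theorem:composition}). Because the bit-flipping/sampling randomness is drawn independently of the kRR randomness, the joint likelihood of a report $(b,f)$ factorizes into the product of the two per-channel likelihoods. Hence, for any two input records $R$ and $R'$, the overall likelihood ratio is the product of the bit-channel ratio and the flag-channel ratio, and it suffices to show that the bit channel is $\ln(d-1)$-LDP and that the flag channel is $\epsilon'$-LDP.

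For the bit channel, the crucial observation is that after the flipping step (Lines 6 and 9) every user's bit vector contains at least one $1$ and at least one $0$. A vector in case $\pi_0$ is turned into one with exactly a single $1$; a vector in case $\pi_d$ is turned into one with exactly $d-1$ ones; and any vector in an intermediate case $\pi_t$ with $1\le t\le d-1$ is left untouched with $t$ ones. Thus, for every possible input the number of $1$'s in the post-flip vector lies in $\{1,2,\dots,d-1\}$, so the probability of sampling and reporting a $1$ ranges over $\{1/d,2/d,\dots,(d-1)/d\}$, and symmetrically for reporting a $0$. I would then bound the worst-case ratio by
\begin{align*}
	\frac{\max_{R}\mathrm{Pr}[V^{z}=b\mid R]}{\min_{R'}\mathrm{Pr}[V^{z}=b\mid R']} \le \frac{(d-1)/d}{1/d} = d-1, \quad \text{for } b\in\{0,1\},
\end{align*}
which establishes $\ln(d-1)$-LDP for this channel.

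For the flag channel, $y_i$ is a deterministic function of $R_i$ taking values in $\{1,-1,0\}$, and $y_i'$ is obtained by applying the $3$-ary randomized response of Eq.~\ref{eq:pid_pi0_grr} with budget $\epsilon'$. The standard kRR ratio equals $\frac{e^{\epsilon'}/(2+e^{\epsilon'})}{1/(2+e^{\epsilon'})}=e^{\epsilon'}$ when the true flags differ and equals $1$ otherwise, so this channel is $\epsilon'$-LDP. Combining the two bounds through sequential composition then yields $\epsilon=\ln(d-1)+\epsilon'$, matching the claimed privacy level and the budget split $\epsilon'=\epsilon-\ln(d-1)$ used in the algorithm. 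The main obstacle, and the real content of the proof, is the bit-channel argument: one must verify that the flipping rule for the two extreme cases $\pi_0$ and $\pi_d$ removes the deterministic ($0$- or $1$-probability) reports that would otherwise make the likelihood ratio unbounded, and confirm that the resulting reporting probabilities are tightly confined to the interval $[1/d,(d-1)/d]$, so that the ratio is exactly $d-1$ rather than anything larger.
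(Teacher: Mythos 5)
Your proposal is correct and follows essentially the same route as the paper's proof: bound the sampled-bit channel by $\ln(d-1)$-LDP using the post-flip sampling probabilities confined to $[1/d,(d-1)/d]$, bound the flag channel by $\epsilon'$-LDP via the kRR ratio, and combine by sequential composition. Your version is in fact slightly more explicit than the paper's in stating that the flipping step is precisely what eliminates the deterministic $\pi_0$/$\pi_d$ reports that would otherwise make the likelihood ratio unbounded.
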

\begin{proof}
	For a specific category $c$, each user sends a sampled bit and her sanitized status flag to the data collector. For the bit sampling, we know that each user may sample a bit '$1$' or '$0$'. From Table~\ref{table:cases}, the highest probability to sample a bit $1$ (or $0$) is $\frac{d-1}{d}$, while the lowest probability is $\frac{1}{d}$. Therefore, for any two users with filtered item sets $R^*_i$ and $R^*_j$ regarding to category $c$, and for any sampled bit $b \in \{0, 1\}$, we have
	\begin{align*}
		\frac{\mathrm{Pr}[\textrm{CRI}(R^*_i)=b]}{\mathrm{Pr}[\textrm{CRI}(R^*_j)=b]}
		\le \frac{(d-1)/d}{1/d} = e^{\ln(d-1)}.
	\end{align*}
	Therefore, sampling a bit by CRI satisfies $\ln(d-1)$-LDP. On the other hand, for any status $y'$ reported by CRI, we know from Eq.~\ref{eq:pid_pi0_grr} that
	\begin{align*}
		\frac{\mathrm{Pr}[\textrm{CRI}(R^*_i)=y']}{\mathrm{Pr}[\textrm{CRI}(R^*_j)=y']}
		\le \frac{e^{\epsilon'}/(2+e^{\epsilon'})}{1/(2+e^{\epsilon'})} = e^{\epsilon'}.
	\end{align*}
It means reporting the status flag by CRI satisfies $\epsilon'$-LDP. Then according to the sequential composition in Theorem~\ref{theorem:composition}, Algorithm~\ref{alg:baseline} satisfies $\epsilon$-LDP, where $\epsilon=\ln(d-1)+\epsilon'$.
\end{proof}

\begin{theorem}
	\label{thm:baseline_unbiased}
	The estimated counting query result on any category $c$ by Eq.~\ref{eq:baseline_calibrated_estimate} is unbiased, i.e., $\mathbb{E}[\tilde{Q}(c)] = Q(c)$.
\end{theorem}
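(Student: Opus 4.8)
The plan is to exploit linearity of expectation to split the estimator in Eq.~\ref{eq:baseline_calibrated_estimate} into the noisy-count part $\bar{\theta}$ and the calibration part $n\Delta\pi$, establish the expectation of each separately, and then show the calibration term exactly cancels the bias that the bit-flipping introduces into $\bar{\theta}$. Formally, since $\tilde{Q}(c) = \bar{\theta} + n\Delta\pi$, it suffices to show $\mathbb{E}[\bar{\theta}] + n\,\mathbb{E}[\Delta\pi] = n\sum_{t=1}^{d} t\pi_t = Q(c)$, where the last equality is just Eq.~\ref{eq:baseline_fre}.

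First I would verify $\mathbb{E}[\bar{\theta}]$ as given in Eq.~\ref{eq:baseline_fre_observed_expectation2}. Grouping users by their case $\pi_t$, a user holding $t$ ones samples bit `$1$' with probability $t/d$; for the two extreme cases the flip changes this, since a $\pi_0$ user (after flipping one `$0$' to `$1$') samples `$1$' with probability $1/d$, and a $\pi_d$ user (after flipping one `$1$' to `$0$') with probability $(d-1)/d$. Multiplying by the factor $d$ from Eq.~\ref{eq:baseline_fre_observed} and summing over the $n\pi_t$ users in each case reproduces $\mathbb{E}[\bar{\theta}] = n\big(\sum_{t=1}^{d-1} t\pi_t + \pi_0 + (d-1)\pi_d\big)$.

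The heart of the argument --- and simultaneously the proof of Eq.~\ref{eq:pid_pi0} promised in the text --- is computing $\mathbb{E}[\Delta\pi]$. Writing $p = e^{\epsilon'}/(2+e^{\epsilon'})$ and $q = 1/(2+e^{\epsilon'})$ for the two kRR probabilities in Eq.~\ref{eq:pid_pi0_grr}, I would compute $\mathbb{E}[y'_i]$ conditioned on each status value from Eq.~\ref{eq:pid_pi0_s}: a $\pi_d$ user ($y_i=1$) yields $p-q$, a $\pi_0$ user ($y_i=-1$) yields $q-p$, and every other user ($y_i=0$) yields $0$ by the symmetry of its $+1$ and $-1$ outcomes. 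Summing over the population gives $\mathbb{E}[\sum_i y'_i] = n(\pi_d-\pi_0)(p-q) = n(\pi_d-\pi_0)\cdot\frac{e^{\epsilon'}-1}{2+e^{\epsilon'}}$, and substituting into Eq.~\ref{eq:pid_pi0} makes the prefactor $\frac{2+e^{\epsilon'}}{n(e^{\epsilon'}-1)}$ cancel exactly, leaving $\mathbb{E}[\Delta\pi] = \pi_d-\pi_0$.

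Finally I would combine the two pieces: $\mathbb{E}[\tilde{Q}(c)] = n\big(\sum_{t=1}^{d-1} t\pi_t + \pi_0 + (d-1)\pi_d\big) + n(\pi_d - \pi_0)$, where the $+n\pi_0$ and $-n\pi_0$ terms cancel and $(d-1)\pi_d + \pi_d = d\pi_d$ completes the sum to $n\sum_{t=1}^{d} t\pi_t = Q(c)$. I do not anticipate a serious obstacle, as the derivation is a chain of routine expectation computations with no inequality or limiting argument. The only delicate bookkeeping is tracking that the forced flipping in the extreme cases is precisely what injects the $n(\pi_d-\pi_0)$ bias into $\mathbb{E}[\bar{\theta}]$, so the genuine content lies in the three-valued kRR expectation that justifies the $\Delta\pi$ estimator; everything else is a cancellation.
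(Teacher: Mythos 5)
Your proof is correct and takes essentially the same route as the paper's: both establish $\mathbb{E}[\Delta\pi]=\pi_d-\pi_0$ via the three-valued kRR expectations ($p-q$, $q-p$, and $0$ for the three status values) and then add this calibration to $\mathbb{E}[\bar{\theta}]$ from Eq.~\ref{eq:baseline_fre_observed_expectation2}, where the $\pi_0$ and $(d-1)\pi_d$ terms cancel and complete to $n\sum_{t=1}^{d}t\pi_t=Q(c)$. The only (harmless) difference is that you re-derive $\mathbb{E}[\bar{\theta}]$ inside the proof, whereas the paper cites it from the derivation preceding the theorem.
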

\begin{proof}
  As for the count estimation, according to Eq.~\ref{eq:pid_pi0_grr}, we know when $y_i=1$, $\mathbb{E}[y'_i]=\frac{e^{\epsilon'}}{2+e^{\epsilon'}}\cdot 1 + \frac{1}{2+e^{\epsilon'}}\cdot (-1) + \frac{1}{2+e^{\epsilon'}}\cdot 0 = \frac{e^{\epsilon'}-1}{2+e^{\epsilon'}}$. Similarly, when $y_i=-1$, $\mathbb{E}[y'_i]=\frac{1-e^{\epsilon'}}{2+e^{\epsilon'}}$, and when $y_i=0$, $\mathbb{E}[y'_i]=0$. Therefore,
	\begin{align*}
		\sum_{i=1}^{n} \mathbb{E}[y'_i]
		&= c_1 \cdot \frac{e^{\epsilon'}-1}{2+e^{\epsilon'}}
			+ c_{-1} \cdot \frac{1-e^{\epsilon'}}{2+e^{\epsilon'}} \\
		&= \frac{(c_1 - c_{-1})(e^{\epsilon'}-1)}{2+e^{\epsilon'}},
	\end{align*}
	where $c_1$ and $c_{-1}$ are the real counts of status flags $1$ and $-1$ respectively among all users. Then by Eq.~\ref{eq:pid_pi0}, we have
	\begin{align*}
		\mathbb{E}[\Delta \pi]
		&= \frac{(2+e^{\epsilon'})\sum_{i=1}^{n}\mathbb{E}[y'_i]}{n(e^{\epsilon'}-1)}
		= \frac{c_1-c_{-1}}{n} \\
		&= \frac{\sum_{i=1}^{n} \mathbbm{1}(V_i=\{1\}^d)}{n}
			- \frac{\sum_{i=1}^{n} \mathbbm{1}(V_i=\{0\}^d)}{n} \\
		&= \pi_d - \pi_0,
	\end{align*}
	which mean $\Delta \pi$ by Eq.~\ref{eq:pid_pi0} is an unbiased estimation of $\pi_d-\pi_0$.
	By substituting the above $\mathbb{E}[\Delta \pi]$ and $\mathbb{E}[\tilde{\theta}]$ in Eq.~\ref{eq:baseline_fre_observed_expectation2} to Eq.~\ref{eq:baseline_calibrated_estimate}, we have
	\begin{align*}
		&\quad \mathbb{E}[\tilde{Q}(c)]
		= \mathbb{E}[\bar{\theta}] + n(\mathbb{E}[\Delta\pi])  \\
		&= n \left(
				\sum\nolimits_{t=1}^{d-1} t \pi_t + \pi_0 + (d-1) \pi_d
			\right)  + n (\pi_d - \pi_0) \\
		&=  n\sum\nolimits_{t=1}^{d} t \pi_t
		= Q(c).
	\end{align*}
	As such, the estimation of the subset counting query by Eq.~\ref{eq:baseline_calibrated_estimate} is unbiased. 
\end{proof}

\begin{theorem}
	\label{thm:baseline_variance}
	Given a category $c$, the number of users $n$, and privacy budget $\epsilon'$ for status flag perturbation, the estimation variance of the counting query result by CRI in Algorithm~\ref{alg:baseline} is bounded by
	$\frac{1}{4}nd^2 + \frac{2n(e^{\epsilon'}+2)}{(e^{\epsilon'}-1)^2}$.
\end{theorem}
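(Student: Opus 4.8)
The plan is to express the estimator $\tilde{Q}(c) = \bar{\theta} + n\Delta\pi$ as a single sum of independent per-user contributions and then bound its variance term by term. Substituting Eq.~\ref{eq:baseline_fre_observed} and Eq.~\ref{eq:pid_pi0} into Eq.~\ref{eq:baseline_calibrated_estimate} gives
\begin{align*}
\tilde{Q}(c) = \sum_{i=1}^{n}\left( d\,V_i^{z_i} + \frac{2+e^{\epsilon'}}{e^{\epsilon'}-1}\,y'_i \right).
\end{align*}
Since each user perturbs her data with fresh, independent randomness, the $n$ summands are mutually independent, so $\mathrm{Var}[\tilde{Q}(c)] = \sum_{i=1}^{n}\mathrm{Var}[\,d\,V_i^{z_i} + \frac{2+e^{\epsilon'}}{e^{\epsilon'}-1}y'_i\,]$. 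Within a single user the bit-sampling step (the index $z_i$ and the random flip) and the status-flag perturbation (the kRR draw of $y'_i$) draw on disjoint sources of randomness conditioned on the fixed data $V_i$, so these two contributions are independent and the per-user variance splits as $\mathrm{Var}[d\,V_i^{z_i}] + \left(\frac{2+e^{\epsilon'}}{e^{\epsilon'}-1}\right)^2\mathrm{Var}[y'_i]$.

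First I would bound the sampling term. After the possible flip, the reported bit $V_i^{z_i}$ is a Bernoulli variable whose success probability $p_i$ equals $t/d$ for a vector with $t$ ones ($1\le t\le d-1$), and $(d-1)/d$ or $1/d$ in the flipped all-ones / all-zeros cases; in every case $p_i \in [1/d,(d-1)/d]$. Hence $\mathrm{Var}[d\,V_i^{z_i}] = d^2 p_i(1-p_i) \le d^2/4$, the maximum of $p(1-p)$, which is attainable since $\tfrac12 \in [1/d,(d-1)/d]$ for $d\ge 2$. Summing over the $n$ users contributes at most $\tfrac14 n d^2$, the first term of the claimed bound.

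Next I would bound the flag term. From the kRR distribution in Eq.~\ref{eq:pid_pi0_grr} I would compute $\mathrm{Var}[y'_i]$ for each possible true flag. When $y_i = 0$ (the case of every user outside $\pi_0$ and $\pi_d$) one gets $\mathbb{E}[y'_i]=0$ and $\mathbb{E}[(y'_i)^2] = \frac{2}{2+e^{\epsilon'}}$, hence $\mathrm{Var}[y'_i] = \frac{2}{2+e^{\epsilon'}}$. Multiplying by the coefficient $\left(\frac{2+e^{\epsilon'}}{e^{\epsilon'}-1}\right)^2$ collapses to $\frac{2(e^{\epsilon'}+2)}{(e^{\epsilon'}-1)^2}$ per user, and summing over $n$ users yields the second term $\frac{2n(e^{\epsilon'}+2)}{(e^{\epsilon'}-1)^2}$. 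Adding the two contributions gives the stated bound.

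The main obstacle is the flag term rather than the sampling term. The delicate step is justifying that $\frac{2}{2+e^{\epsilon'}}$ is the value of $\mathrm{Var}[y'_i]$ that drives the bound: a direct computation for the all-ones / all-zeros flags $y_i=\pm 1$ instead gives $\frac{5e^{\epsilon'}+1}{(2+e^{\epsilon'})^2}$, which strictly exceeds $\frac{2}{2+e^{\epsilon'}}$ once $\epsilon'>0$. The clean bound therefore relies on the overwhelming majority of users falling into the $y_i=0$ case (equivalently, $\pi_0$ and $\pi_d$ being negligible), so that $\frac{2}{2+e^{\epsilon'}}$ is the representative per-user flag variance; making this assumption explicit, or otherwise absorbing the $O(\pi_0+\pi_d)$ correction, is the crux of the argument. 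The remaining care is simply verifying the independence claims that license the term-by-term variance decomposition.
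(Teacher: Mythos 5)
Your computations are all correct, and your route is essentially the paper's own: it too splits $\mathrm{Var}[\tilde{Q}(c)] = \mathrm{Var}[\bar{\theta}] + n^2\,\mathrm{Var}[\Delta\pi]$ (your per-user sum is the same decomposition written user by user), bounds the sampling term by $d^2 p_i(1-p_i) \le d^2/4$ per user, and then analyzes the kRR noise on the flags --- the paper merely phrases the flag analysis through the aggregate counts $c'_1$, $c'_{-1}$ and their covariance rather than through per-user variances.

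The substantive point is the ``crux'' you flagged, and you are right to flag it: it is a genuine gap, and the paper does not actually close it. The paper derives (correctly, and in agreement with your per-user calculation)
\begin{align*}
\mathrm{Var}\Bigl[\sum\nolimits_{i=1}^{n}y'_i\Bigr] = \frac{n(1+5e^{\epsilon'})+3(n-c_1-c_{-1})(1-e^{\epsilon'})}{(2+e^{\epsilon'})^2},
\end{align*}
and then ``bounds'' it by replacing $n-c_1-c_{-1}$ with $n$. Since $1-e^{\epsilon'}<0$ and $n-c_1-c_{-1}\le n$, this substitution makes the expression smaller, not larger, so the inequality runs the wrong way: it silently evaluates the variance at $c_1+c_{-1}=0$, i.e., exactly under your assumption that no user is in the $\pi_0$ or $\pi_d$ case, where every flag has variance $\frac{2}{2+e^{\epsilon'}}$. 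In the true worst case $c_1+c_{-1}=n$, the flag contribution is $\frac{n(5e^{\epsilon'}+1)}{(e^{\epsilon'}-1)^2}$, which exceeds the theorem's second term by $\frac{3n}{e^{\epsilon'}-1}$. So your formulation --- the stated bound plus an explicit hypothesis that $\pi_0+\pi_d$ is negligible, or else an added correction of order $(c_1+c_{-1})/(e^{\epsilon'}-1)$ --- is the sound version of the claim; the theorem as written holds only under that assumption, and the paper's proof conceals this behind a reversed inequality.
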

\begin{proof}
	According to Eq.~\ref{eq:baseline_fre_observed},
	\begin{align*}
		\mathrm{Var}[\bar{\theta}] &= d^2 \cdot \mathrm{Var}[\sum\nolimits_{i=1}^{n} V_i^{z_i}]
		= d^2 \cdot \sum\nolimits_{i=1}^{n} \mathrm{Var}[V_i^{z_i}]  \\
		&= n d^2 \left(
			(\pi_0+\pi_d)\frac{d-1}{d^2} + \sum\nolimits_{t=1}^{d-1} \pi_t \frac{t(d-t)}{d^2}
		\right) \\
		&\le nd^2 \left( \frac{\pi_0+\pi_d}{4} + \sum\nolimits_{t=1}^{d-1} \frac{\pi_t}{4} \right) \\
		&= \frac{1}{4}nd^2.
	\end{align*}
	
	Let $c_1$ and $c_{-1}$ denote the real counts of status flags $1$ and $-1$ respectively, and $c'_1$ and $c'_{-1}$ denote observed counts based on users' reports. Then we know
	\begin{align*}
		& \mathrm{Var}[c'_1]
		= \frac{2e^{\epsilon'} \cdot c_1}{(2+e^{\epsilon'})^2}
			+ \frac{(1+e^{\epsilon'}) (n-c_1)}{(2+e^{\epsilon'})^2},  \\
		& \mathrm{Var}[c'_{-1}]
		=  \frac{2e^{\epsilon'} \cdot c_{-1}}{(2+e^{\epsilon'})^2}
			+ \frac{(1+e^{\epsilon'})(n-c_{-1})}{(2+e^{\epsilon'})^2},  \\
		& \mathrm{Cov}[c'_1, c'_{-1}] = -\frac{e^{\epsilon'} (c_1+c_{-1})}{(2+e^{\epsilon'})^2}
			- \frac{n-c_1-c_{-1}}{(2+e^{\epsilon'})^2}.
	\end{align*}
	
	Therefore,
	\begin{align*}
		\mathrm{Var}[\sum_{i=1}^{n}y'_i] &= \mathrm{Var}[c'_1-c'_{-1}]  \\
		&= \mathrm{Var}[c'_1] + \mathrm{Var}[c'_{-1}] - 2\mathrm{Cov}[c'_1, c'_{-1}] \\
		&= \frac{n(1+5e^{\epsilon'})+3(n-c_1-c_{-1})(1-e^{\epsilon'})}{(2+e^{\epsilon'})^2}.
	\end{align*}
	
	According to Eqs.~\ref{eq:pid_pi0_grr} and \ref{eq:pid_pi0},
	\begin{align*}
		\mathrm{Var}[\Delta\pi] &= \frac{
			(2+e^{\epsilon'})^2 \cdot \mathrm{Var}[\sum_{i=1}^{n}y'_i]
		}{n^2(e^{\epsilon'}-1)^2} \\
		&= \frac{
			n(1+5e^{\epsilon'})+3(n-c_1-c_{-1})(1-e^{\epsilon'})
		}{n^2(e^{\epsilon'}-1)^2} \\
		&\le \frac{
			n(1+5e^{\epsilon'})+3n(1-e^{\epsilon'})
		}{n^2(e^{\epsilon'}-1)^2} \\
		&= \frac{
			2(e^{\epsilon'}+2)
		}{n(e^{\epsilon'}-1)^2}.
	\end{align*}
	
	According to Eq.~\ref{eq:baseline_calibrated_estimate}, we have
	\begin{align*}
		\mathrm{Var}[\tilde{Q}(c)]
		&= \mathrm{Var}[\bar{\theta}] + n^2 \cdot \mathrm{Var}[\Delta\pi] \\
		&\le \frac{1}{4}nd^2 + \frac{2n(e^{\epsilon'}+2)}{(e^{\epsilon'}-1)^2},
	\end{align*}
	which proves that the variance of frequency estimation by CRI is bounded by
	$\frac{1}{4}nd^2 + \frac{2n(e^{\epsilon'}+2)}{(e^{\epsilon'}-1)^2}$.
\end{proof}


By Theorem~\ref{thm:baseline_variance}, we observe that the estimation error of subset counting query by CRI comes from two sources, namely the sampling and perturbation process. While ensuring deniability for extreme cases where bits are all `$1$'s or `$0$'s, the perturbation comes at a price of an estimation variance of $\frac{2n(e^{\epsilon'}+2)}{(e^{\epsilon'}-1)^2}$. In other words, to derive an estimation of $\Delta\pi$ in Eq.~\ref{eq:pid_pi0} and thus make the counting query result unbiased, {\bf CRI sacrifices its utility by re-introducing the perturbation error}. In the next section, we find an alternative way to ensure deniability for extreme cases, and propose CRIAD which eliminates the perturbation error and thus enhances the utility.

\section{CRIAD: Counting via Randomized Index with Augmented Dummies}
\label{sec:rid}
In this section, we present a utility-enhanced CRI solution to counting queries. The main idea is to augment a user's encoded bit vector with dummy bits, so the protocol is called \underline{C}ounting via \underline{R}andomized \underline{I}ndex with \underline{A}ugmented \underline{D}ummies (CRIAD). In this section, we first present the skeleton of CRIAD, followed by its customization to cope with various privacy requirements and category sizes. Finally, we summarize its overall procedure, together with the privacy and utility analysis.

\subsection{Randomized Index with Augmented Dummies}
\label{sec:rid_plus_dummy}

To ensure the deniability of two extreme cases (all `$0$'s and `$1$'s), we augment a user's bit vector with dummy $0/1$ bits, so that either bit can be sampled in both extreme cases and therefore perturbation is no longer needed. Table~\ref{table:cases_01pair} illustrates this effect when a bit `$1$' is added to each bit vector, so in the case of $\pi_0$, both {\bf Pr}$\bm{[1]}$ and {\bf Pr}$\bm{[0]}$ are non-zero.  

\begin{table}[h]
	\scriptsize
	\caption{Cases of Bit Vectors with an Augmented Bit `1'}
	\label{table:cases_01pair}
	\centering
	\begin{tabular}{|@{ }p{1.0cm}<{\centering}|@{ }p{1.0cm}<{\centering}|@{ }p{1.0cm}<{\centering}|@{ }p{1.8cm}<{\centering}|@{ }p{1.7cm}<{\centering}|}
		\hline
		{\bf Proportion} &{\bf No. of $\bm{1}$} &{\bf No. of $\bm{0}$} &{\bf Pr}$\bm{[1]}$ & {\bf Pr}$\bm{[0]}$   \\ \hline
		$\pi_0$ &$1$ &$d$ &$1/(d+1)$ &$d/(d+1)$   \\ \hline
		$\pi_1$ &$2$ &$d-1$ &$2/(d+1)$ &$(d-1)/(d+1)$   \\ \hline
		$\pi_2$ &$3$ &$d-2$ &$3/(d+1)$ &$(d-2)/(d+1)$   \\ \hline
		$\pi_3$ &$4$ &$d-3$ &$4/(d+1)$ &$(d-3)/(d+1)$   \\ \hline
		... &... &... &... &...   \\ \hline
		$\pi_{d-1}$ &$d$ &$1$ &$d/(d+1)$ &$1/(d+1)$   \\ \hline
		$\pi_d$ &$d+1$ &$0$ &$1$ &$0$   \\ \hline
	\end{tabular}
\end{table}

Apparently we can add another dummy bit `$0$' to each bit vector to fix the case of $\pi_d$ as well. Nonetheless, these dummies come at a price of sampling error, because they dilute the original bit distribution in these vectors. For example, in the $\pi_1$ case, the sampling probability of bit `$1$' changes from $\frac{1}{d}$ (Table~\ref{table:cases}) to $\frac{2}{d+1}$ (Table~\ref{table:cases_01pair}), and will further change to  $\frac{3}{d+2}$ if $2$ dummies are added. This obviously increases the sampling variance. Our key idea is that in real-world applications, the $\pi_d$ case (i.e., all bits are `$1$'s) is too rare to contribute to the overall count. This is especially true when the category size is large. Therefore, we can just suppress a $\pi_d$ cases to a $\pi_{d-1}$ case by randomly flipping one bit $1$ to $0$ to refrain from adding a dummy bit `$0$'. 

\begin{algorithm}
	\footnotesize
	\caption{Procedure of CRIAD with One Dummy Bit}
	\label{alg:rid_1dummy}
	\begin{tabular}{ll}
		{\bf Input:} & A category $c$\\
		& All users' item sets $\{R_{i}, R_{i}, ..., R_{n}\}$ \\
		{\bf Output:} & A sampled bit $V_i^{z_i}$ \\
		{\bf Procedure:} & \\
		\multicolumn{2}{l}{\quad  $//$\textbf{\emph{User side}}}
	\end{tabular}
	\begin{algorithmic}[1]
		\vspace{-0.06in}
		\FOR{each user $u_i \in \mathcal{U}$}
			\STATE Extract the item set $R^*_i$ from $R_i$ with items belonging to $c$ 
			\STATE Encode $R_i^*$ into a binary vector $V_i=\{V_i^1, V_i^2, ..., V_i^d, 1\}$ by Eq.~\ref{eq:encoding}, where $d=|c|$
			\IF{$V_i=\{1\}^{d+1}$}
				\STATE Randomly flip a bit to $0$
			\ENDIF
		
			\STATE Randomly sample an index $z_i \in \{1, 2, ..., d+1\}$
			\STATE Report $V_i^{z_i}$ to the data collector 
		\ENDFOR
		
		\vspace{0.05in}
		$//$\textbf{\emph{Collector side}}
		\STATE Estimate the subset counting query result $\tilde{Q}(c)$ by Eq.~\ref{eq:rid_plus_1dummy}
		\RETURN Query result $\tilde{Q}(c)$
	\end{algorithmic}
\end{algorithm}

Algorithm~\ref{alg:rid_1dummy} shows the pseudo-code of the above procedure, where one dummy bit `$1$' is appended as the $(d+1)$-th bit in each user's bit vector (Line 3). If all bits in $V_i$ are `$1$'s, the user randomly flips one of them to $0$ (Lines 4-5). Then each user randomly samples an index $z_i \in \{1, 2, ..., d+1\}$ and reports the sampled bit $V_i^{z_i}$ to the data collector (Lines 6-7). At the collector side, the impact of added dummy bits `$1$' on the counting query can be eliminated by subtracting $n$ from the aggregated bits, as each of $n$ user contributes a dummy bit `$1$',
\begin{align}
	\label{eq:rid_plus_1dummy}
	\tilde{Q}(c) = (d+1)\sum\nolimits_{i=1}^{n} V_i^{z_i}-n,
\end{align}
where $V_i^{z_i}$ is the reported bit from user $u_i$. The following two theorems establish the privacy and correctness guarantee of Algorithm~\ref{alg:rid_1dummy}, respectively.


\begin{theorem}
	\label{thm:rid_plus_1dummy}
	Algorithm~\ref{alg:rid_1dummy} satisfies $\ln d$-LDP.
\end{theorem}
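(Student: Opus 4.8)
The plan is to reduce the LDP analysis to bounding, over all possible inputs, the probability that the reported bit equals a fixed value $b \in \{0,1\}$, since the only thing the collector observes from Algorithm~\ref{alg:rid_1dummy} is a single sampled bit. Let $\mathcal{A}$ denote the mechanism of Algorithm~\ref{alg:rid_1dummy}. Because all randomness visible to the collector is folded into this one bit, I would first characterize the full set of achievable report probabilities and then take the worst-case ratio across any two inputs.

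First I would track how the number of `$1$'s in the length-$(d+1)$ vector behaves across all inputs. If the filtered set $R_i^*$ encodes $t$ ones in the original length-$d$ vector ($0 \le t \le d$), then appending the dummy bit `$1$' yields $t+1$ ones. The crucial observation is that both extreme cases are repaired: the dummy `$1$' guarantees at least one `$1$' even when the original vector is all `$0$'s, while the conditional flip in Lines~4--5 turns the all-`$1$' vector (which would otherwise have $d+1$ ones) into one with exactly $d$ ones. Hence, after processing, the number of ones always lies in $\{1, 2, \ldots, d\}$, and by complementation the number of zeros lies in the same range.

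Next I would translate this into probabilities. Since the index $z_i$ is sampled uniformly over $d+1$ positions, the probability of reporting `$1$' equals (number of ones)$/(d+1)$, which therefore ranges over the interval $[\,1/(d+1),\, d/(d+1)\,]$; by symmetry the probability of reporting `$0$' lies in the same interval. In particular both are strictly positive, so no input forces a deterministic output. The worst-case privacy loss for any fixed $b$ and any two inputs $R_i^*, R_j^*$ is then
\begin{align*}
	\frac{\mathrm{Pr}[\mathcal{A}(R_i^*)=b]}{\mathrm{Pr}[\mathcal{A}(R_j^*)=b]} \le \frac{d/(d+1)}{1/(d+1)} = d = e^{\ln d},
\end{align*}
which is exactly the claimed $\ln d$-LDP bound.

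The main obstacle is not the arithmetic but arguing rigorously that the count of ones can never escape the interval $[1,d]$ after the flip step---in particular that neither $0$ nor $d+1$ ones is reachable---since a single unhandled boundary case would drive one report probability to zero and collapse the LDP guarantee. Verifying that the dummy augmentation and the conditional flip \emph{jointly} eliminate precisely these two extreme cases is therefore the heart of the argument; once that is established, the ratio bound follows immediately, and unlike the CRI analysis in Theorem~\ref{thm:baseline_privacy} no sequential-composition step or separate status-flag budget is needed, since here the single reported bit already enjoys deniability by itself.
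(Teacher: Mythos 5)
Your proof is correct and takes essentially the same route as the paper's: both arguments observe that the dummy bit `$1$' together with the conditional flip confines the number of `$1$'s (and `$0$'s) in the processed vector to $\{1,\dots,d\}$, so every report probability lies in $\left[\tfrac{1}{d+1},\tfrac{d}{d+1}\right]$, and the worst-case ratio $\frac{d/(d+1)}{1/(d+1)}=d=e^{\ln d}$ yields the claim. Your write-up is merely more explicit than the paper's about verifying that neither $0$ nor $d+1$ ones is reachable, but the substance is identical.
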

\begin{proof}
	By Algorithm~\ref{alg:rid_1dummy}, the case of $\pi_d$ is reduced to $\pi_{d-1}$ by randomly flipping a bit '$1$' to '$0$'. So for any two filtered item sets $R^*_i$ and $R^*_j$ regarding category $c$ from users, and any bit $b\in \{0, 1\}$ reported, we know
	\begin{align*}
		\frac{\mathrm{Pr}[\textrm{CRI}(R^*_i)=b]}{\mathrm{Pr}[\textrm{CRI}(R^*_j)=b]}
		& \le \frac{
			\mathrm{Pr}[b=1|V_i\in \pi_{d-1}]
		}{
			\mathrm{Pr}[b=1|V_j\in \pi_0]
		} \\
		&= \frac{d/(d+1)}{1/(d+1)}
		= e^{\ln d}.
	\end{align*}
	Therefore, Algorithm~\ref{alg:rid_1dummy} satisfies $\ln d$-LDP.
\end{proof}

\begin{theorem}
	\label{theorem:unbias_rid_plus}
	The estimated counting query result $\tilde{Q}(c)$ by Eq.~\ref{eq:rid_plus_1dummy} is unbiased if each user's number of bit '$1$' in the bit vector does not exceed $d-1$, and the estimation variance is bounded by $\frac{1}{4} n (d+1)^2$.
\end{theorem}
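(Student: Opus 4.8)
The plan is to model each reported bit $V_i^{z_i}$ as an independent Bernoulli trial whose success probability is fixed by the composition of user $u_i$'s augmented vector, and then aggregate over users. The crucial first observation is how the hypothesis is used: if user $u_i$'s original length-$d$ vector contains at most $d-1$ ones, then she is \emph{not} in the $\pi_d$ case of Table~\ref{table:cases_01pair} (which requires all $d$ original bits to be `$1$'), so after appending the dummy bit `$1$' her augmented vector is never all-ones and the flipping step in Lines~4--5 of Algorithm~\ref{alg:rid_1dummy} never fires. Hence, writing $t_i$ for the number of ones in $u_i$'s original vector ($t_i \le d-1$), the augmented vector has exactly $t_i+1$ ones among $d+1$ bits, and uniform index sampling gives the clean probability $\mathrm{Pr}[V_i^{z_i}=1] = (t_i+1)/(d+1)$.

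For unbiasedness I would compute $\mathbb{E}[V_i^{z_i}] = (t_i+1)/(d+1)$ and substitute into Eq.~\ref{eq:rid_plus_1dummy}, obtaining
\[
\mathbb{E}[\tilde{Q}(c)] = (d+1)\sum_{i=1}^{n} \frac{t_i+1}{d+1} - n = \sum_{i=1}^{n} (t_i+1) - n = \sum_{i=1}^{n} t_i = Q(c),
\]
where the final equality is just Eq.~\ref{eq:subset_counting_query}, since $\sum_i t_i$ is the total number of category-$c$ items. The key point is that each user's dummy contributes exactly $+1$ to the expectation, and these $n$ dummy contributions are cancelled precisely by the $-n$ calibration term.

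For the variance bound I would use independence across users (each $z_i$ is drawn independently) together with the Bernoulli variance of each indicator, giving
\[
\mathrm{Var}[\tilde{Q}(c)] = (d+1)^2 \sum_{i=1}^{n} \mathrm{Var}[V_i^{z_i}] = (d+1)^2 \sum_{i=1}^{n} \frac{(t_i+1)(d-t_i)}{(d+1)^2} = \sum_{i=1}^{n} (t_i+1)(d-t_i).
\]
Setting $x = t_i+1 \in \{1,\dots,d\}$, each summand equals $x(d+1-x)$, a downward parabola maximized at $x=(d+1)/2$ with value $(d+1)^2/4$. Applying this per-user bound and summing over the $n$ users yields $\mathrm{Var}[\tilde{Q}(c)] \le \tfrac{1}{4} n(d+1)^2$.

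The only genuine obstacle is the careful handling of the hypothesis rather than any difficult calculation: I must argue that $t_i \le d-1$ is exactly the condition that eliminates the $\pi_d$ case and therefore the flip in Lines~4--5, because if flipping could occur the reported bit's success probability would deviate from $(t_i+1)/(d+1)$ and both the expectation and the variance would pick up correction terms --- precisely the situation that forced the extra perturbation-error variance in the CRI analysis of Theorem~\ref{thm:baseline_variance}. Once this is established, the remaining expectation/variance computations and the one-line parabola maximization are routine.
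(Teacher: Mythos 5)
Your proof is correct and takes essentially the same approach as the paper's: the sampling probability $(t+1)/(d+1)$ per augmented vector, cancellation of the $n$ dummy contributions by the $-n$ calibration term, and the per-user Bernoulli variance $(t+1)(d-t)/(d+1)^2$ bounded by $1/4$ via the parabola maximum. The only cosmetic differences are that you work with individual counts $t_i$ rather than the population proportions $\pi_t$, and you invoke the hypothesis up front to rule out the flipping step, whereas the paper computes the general expectation $Q(c)-n\pi_d$ and then uses the hypothesis to set $\pi_d=0$.
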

\begin{proof}
	According to Eq.~\ref{eq:rid_plus_1dummy}, the expectation of the subset counting query result is 	
	\begin{align*}
		\mathbb{E}[\tilde{Q}(c)]
		&= (d+1)\cdot \mathbb{E}[\sum\nolimits_{i=1}^{n} V_i^{z_i}] - n  \\
		&= (d+1) \left(
		 	\sum\nolimits_{t=0}^{d-1} \frac{t+1}{d+1} \cdot n\pi_t + \frac{d}{d+1} n\pi_d
		 \right)   -  n \\
		&= n \left(
			\sum\nolimits_{t=1}^{d}t \pi_t + \sum\nolimits_{t=0}^{d}\pi_t - \pi_d
		\right) - n  \\
		&= Q(c) - n \pi_d.
	\end{align*}
	Since the number of bit `$1$' does not exceed $d-1$, $\pi_d=0$. Therefore, $\mathbb{E}[\tilde{Q}(c)] = Q(c)$, i.e., $\tilde{Q}(c)$ is unbiased.
	
	As for the estimation variance, we have
	\begin{align*}
		\mathrm{Var}[\tilde{Q}(c)]
		&= (d+1)^2 \mathrm{Var}[\sum\nolimits_{i=1}^{n} V_i^{z_i}] \\
		&= (d+1)^2 \sum\nolimits_{t=0}^{d} n\pi_t \frac{(t+1)(d-t)}{(d+1)^2}  \\
		&\le  (d+1)^2\sum\nolimits_{t=0}^{d}  \frac{n\pi_t}{4}  \\
		&= \frac{1}{4} n (d+1)^2.
	\end{align*}
	Therefore, the estimation variance of the subset counting query is bounded by $\frac{1}{4} n (d+1)^2$.
\end{proof}


\subsection{Customizing CRIAD}
\label{sec:rid_plus_scalability}
CRIAD is generic in terms of the number of dummies and samples in each user. In this subsection, we show the customization of CRIAD to support a wide range of privacy requirements and category sizes.

\subsubsection{Multiple Dummies}
\label{sec:rid_plus_multi_dummy}
In Algorithm~\ref{alg:rid_1dummy}, only one dummy bit `$1$' is added to each user's bit vector. Table~\ref{table:cases_multi_dummy} shows the impact of $m$ dummies on the sampling probabilities of bits `$1$'s and `$0$'s respectively. We observe that for the first $d+1-m$ cases (i.e., from $\pi_0$ to $\pi_{d-m}$), the sampling probability of bit `$1$' (resp. $0$) ranges from $\frac{m}{d+m}$ to $\frac{d}{d+m}$ (resp. from $\frac{d}{d+m}$ to $\frac{m}{d+m}$). As more dummy `$1$'s are added, the sampling probability gradually approaches $1$ (resp. $0$). This motivates us to confine the number of bit `$1$'s to $d-m$. As such, the probability ratio of any two sampled bits can be bounded by $d/m$ and thus $m$ dummies can satisfy ($ln\frac{d}{m}$)-LDP (see Theorem~\ref{thm:rid_plus_privacy} for complete proof).

\begin{table}[h]
	\scriptsize
	\caption{Cases of Bit Vectors with $m$ Dummy Bits `1'}
	\label{table:cases_multi_dummy}
	\centering
	\begin{tabular}{|@{ }p{1.0cm}<{\centering}|@{ }p{0.98cm}<{\centering}|@{ }p{0.9cm}<{\centering}|@{ }p{2.17cm}<{\centering}|@{ }p{1.81cm}<{\centering}|}
		\hline
		{\bf Proportion} &{\bf No. of $\bm{1}$} &{\bf No. of $\bm{0}$} &{\bf Pr}$\bm{[1]}$ & {\bf Pr}$\bm{[0]}$   \\ \hline
		$\pi_0$ &$m$ &$d$ &$m/(d+m)$ &$d/(d+m)$   \\ \hline
		$\pi_1$ &$1+m$ &$d-1$ &$(1+m)/(d+m)$ &$(d-1)/(d+m)$   \\ \hline
		$\pi_2$ &$2+m$ &$d-2$ &$(2+m)/(d+m)$ &$(d-2)/(d+m)$   \\ \hline
		... &... &... &... &...   \\ \hline
		$\pi_{d-m}$ &$d$ &$m$ &$d/(d+m)$ &$m/(d+m)$   \\ \hline
		$\pi_{d-m+1}$ &$d+1$ &$m-1$ &$(d+1)/(d+m)$ &$(m-1)/(d+m)$   \\ \hline
		... &... &... &... &...   \\ \hline
		$\pi_{d-1}$ &$d+m-1$ &$1$ &$(d+m-1)/(d+m)$ &$1/(d+m)$   \\ \hline
		$\pi_d$ &$d+m$ &$0$ &$1$ &$0$   \\ \hline
	\end{tabular}
\end{table}

Then each user $u_i$ randomly samples an index $z_i \in \{1, 2, ..., d+m\}$ and reports the bit $V_i^{z_i}$ to the data collector. Based on the reports from all users, the estimated subset counting query result $\tilde{Q}(c)$ can be derived as
\begin{align*}
	\tilde{Q}(c) = (d+m)\sum\nolimits_{i=1}^{n} V_i^{z_i}-mn,
\end{align*}
where the second term $mn$ is the number of dummy `$1$'s added by all users.

\subsubsection{Multiple Samples}
In Algorithm~\ref{alg:rid_1dummy}, each user randomly samples and reports one bit to the data collector, and the overall algorithm satisfies $\ln d$-LDP (see Theorem~\ref{thm:rid_plus_1dummy}). However, this becomes a privacy bottleneck when the privacy budget $\epsilon>\ln d$, as the extra budget has to be wasted. CRIAD can benefit from a large privacy budget by having users report multiple samples. Note that this is different from directly applying sequential composition (i.e., Theorem~\ref{theorem:composition}) to repeatedly perform one-bit sampling multiple times, as in CRIAD, bits are sampled without replacement to cover as many data bits as possible. Table~\ref{table:cases_multi_dummies_samples} shows the impact of number of samples $s$ on the probabilities of bits `$1$'s and `$0$'s respectively, where $m$ ($m\ge s$) dummies are added. Note that the table only shows the first $d-m+1$ cases (i.e., from $\pi_0$ to $\pi_{d-m}$), as the others are reduced to the case of $\pi_{d-m}$ before sampling, in the same way as in Section~\ref{sec:rid_plus_multi_dummy}.

\renewcommand\arraystretch{1.3}
\begin{table}[h]
	\scriptsize
	\caption{Cases of Bit Vectors with $m$ Dummy Bits `1' and $s$ Samples}
	\label{table:cases_multi_dummies_samples}
	\centering
	\begin{tabular}{|@{ }p{1.0cm}<{\centering}|@{ }p{0.98cm}<{\centering}|@{ }p{0.9cm}<{\centering}|@{ }p{2.17cm}<{\centering}|@{ }p{1.81cm}<{\centering}|}
		\hline
		{\bf Proportion} &{\bf No. of $\bm{1}$} &{\bf No. of $\bm{0}$} &{\bf Pr}$\bm{[\{1\}^s]}$ & {\bf Pr}$\bm{[\{0\}^s]}$   \\ \hline
		$\pi_0$ &$m$ &$d$ &$\tbinom{m}{s}\big/\tbinom{d+m}{s}$ &$\tbinom{d}{s}\big/\tbinom{d+m}{s}$   \\ \hline
		$\pi_1$ &$1+m$ &$d-1$ &$\tbinom{m+1}{s}\big/\tbinom{d+m}{s}$ &$\tbinom{d-1}{s}\big/\tbinom{d+m}{s}$   \\ \hline
		$\pi_2$ &$2+m$ &$d-2$ &$\tbinom{m+2}{s}\big/\tbinom{d+m}{s}$ &$\tbinom{d-2}{s}\big/\tbinom{d+m}{s}$   \\ \hline
		... &... &... &... &...   \\ \hline
		$\pi_{d-m}$ &$d$ &$m$ &$\tbinom{d}{s}\big/\tbinom{d+m}{s}$ &$\tbinom{m}{s}\big/\tbinom{d+m}{s}$   \\ \hline
	\end{tabular}
\end{table}
\renewcommand\arraystretch{1.0}

Let $z_i=\{z_i[1], z_i[2], ..., z_i[s]\}$ denote the $s$ indexes sampled by user $u_i$. Based on all users' reports, the estimated counting query result $\tilde{Q}(c)$ can be derived as
\begin{align*}
	\tilde{Q}(c) = \frac{d+m}{s}\sum\nolimits_{i=1}^{n} \sum\nolimits_{x=1}^{s}V_i^{z_i[x]}-mn.
\end{align*}

\subsubsection{Multiple Groups}
Although $m$ dummies can satisfy  ($ln\frac{d}{m}$)-LDP, when the category size $d$ is large, it is difficult to satisfy a small privacy budget. To address this issue, we further propose a grouping strategy to divide a large category over $\{1, 2, ..., d\}$ into $g$ disjoint and equal-sized groups $\{G_1, G_2, ..., G_g\}$, i.e., $|G_r|=\frac{d}{g}$ and $\cup_{r=1}^{g}G_r=\{1, 2, ..., d\}$. Each group becomes a new (sub)category and thus the above multi-dummy and multi-sample strategies can still work in each group. The users are also divided into $g$ equal-sized groups $\{U_1, U_2, ..., U_g\}$, i.e., $\frac{n}{g}$ users for each group, and each user reports $s$ samples drawn from her bit vector with $m$ dummy `$1$'s in her corresponding group.

Upon receiving reports from all users, the data collector first counts bit `$1$'s in each group, and then collectively derives the estimated counting query result based on the counts from $g$ groups. Specifically, for group $G_r$, the count in user group $U_r$ can be estimated as
\begin{align*}
	\gamma_r &= \frac{d+gm}{gs}\sum_{i=1}^{|U_r|}\sum_{x=1}^{s}V_{U_r[i]}^{z_i[x]} - m\cdot |U_r|,
\end{align*}
where $U_r[i]$ denotes the $i$-th user in $U_r$, and
$\sum_{i=1}^{|U_r|}\sum_{x=1}^{s}V_{U_r[i]}^{z_i[x]}$ is the sum of all returned samples in group $U_r$. Finally, the estimated counting query result becomes
\begin{align}
	\label{eq:rid_plus_frequency_all}
	\tilde{Q}(c) = \sum_{r=1}^{g} \gamma_r \cdot g = \frac{d+gm}{s}\sum_{i=1}^{n}\sum_{x=1}^{s}V_i^{z_i[x]} - nmg.
\end{align}

\subsection{CRIAD: Putting Things Together}
\label{sec:rid_plus_together}

Algorithm~\ref{alg:criad} shows the complete CRIAD procedure with a multi-dummy, multi-sample, and multi-group strategy. As such, Algorithm~\ref{alg:rid_1dummy} can be considered as a special case where $m=s=g=1$. Given a subset counting query on the category $c$, the data collector first derives three parameters, namely, the number of dummies $m$, samples $s$ and groups $g$, based on the given privacy budget $\epsilon$ and category size $d=|c|$ (Line 1), which will be elaborated by Theorem~\ref{thm:rid_plus_parameter} in Section~\ref{sec:rid_plus_analysis}. 
The collector then broadcasts $m$, $s$ and $g$ to all users (Line 2). 
At the user side, the domain $\{1, 2, ..., d\}$ of category $c$ is first divided into $g$ groups $\{G_1, G_2, ..., G_g\}$ uniformly at random (Line 3), then each user samples a group $G_r$ for reporting (Line 5). 
Each user extracts a filtered record set $R_i^*$ from $R_i$ with items belonging to $G_r$ (Line 6), and then encodes it into a bit vector with length $|G_r|$ (Line 7). Then $m$ dummy bits $1$ are added to the encoded bit vector (Line 8). If the number of bit `0's is fewer than $m$, the user needs to randomly flip some `1's to ensure at least $m$ `0's (Lines 9-11). Then $s$ indexes are randomly sampled from $\{1, 2, ..., d+m\}$ and the user sends these sampled bits to the data collector (Lines 12-13). Finally, based on all the reports from users, the collector estimates the subset counting query result by Eq.~\ref{eq:rid_plus_frequency_all}.

\begin{algorithm}[t]
	\footnotesize
	\caption{Workflow of CRIAD}
	\begin{tabular}{ll}
		{\bf Input:} & A category $c$\\
					& All users' item set $\{R_{i}, R_{i}, ..., R_{n}\}$ \\
					 & Privacy budgets $\epsilon_1$ and $\epsilon_2$ for count and mean estimation \\
		{\bf Output:} & Estimated subset counting query result $\tilde{Q}(c)$  \\
		{\bf Procedure:} & \\
		\multicolumn{2}{l}{\quad  $//$\textbf{\emph{Collector side}}}
	\end{tabular}
	\label{alg:criad}
	\begin{algorithmic}[1]
		\vspace{-0.06in}
		\STATE Set parameters: $m, s, g \leftarrow ParaSelect(d,\epsilon)$ by Theorem~\ref{thm:rid_plus_parameter}		
		
		\STATE Broadcast $m$, $s$ and $g$ to all users
		
		\vspace{0.05in}
		$//$\textbf{\emph{User side}}
		\STATE Divide the full domain $\{1, 2, ..., d\}$ of category $c$ into $g$ groups $\{G_1, G_2, ..., G_g\}$ uniformly at random
		\FOR{each user $u_i$ ($1 \le i \le n$)}
			\STATE Randomly sample a group $G_r$ for $r\in \{1, 2, ..., g\}$
			\STATE Extract the item set $R^*_i$ from $R_i$ with items belonging to $G_r$ 
			\STATE Encode $R^*_i$  into a binary vector $V_i=\{0, 1\}^{|G_r|}$ by Eq.~\ref{eq:encoding}
			\STATE Add $m$ dummy bits (i.e., $\{1\}^m$) to $V_i$
			\STATE Set $c'$ as the number of bit `0's in $V_i$
			\IF{$c'<m$}
				\STATE Randomly flip $m-c'$ bits `1' in $V_i$
			\ENDIF
			
			\STATE Randomly sample $s$ indices $z_i = \{z_i[1], ..., z_i[s]\}$ from $\{1, 2, ..., d+m\}$
			\STATE Send $V_i^{z_i}$ to the data collector
		\ENDFOR
			
		\vspace{0.05in}
		$//$\textbf{\emph{Collector side}}
		\STATE Estimate the counting query result $\tilde{Q}(c)$ by Eq.~\ref{eq:rid_plus_frequency_all}
		\RETURN Query result $\tilde{Q}(c)$
	\end{algorithmic}
\end{algorithm}

\subsection{Privacy and Utility Analysis of CRIAD}
\label{sec:rid_plus_analysis}

In this subsection, we will address the pending problem of choosing parameters $m$, $s$ and $g$, given category $c$ and privacy budget $\epsilon$. We will first provide privacy and utility analysis in Theorems~\ref{thm:rid_plus_privacy} to \ref{thm:rid_plus_variance}, based on which we derive the optimal setting for three parameters in Theorem~\ref{thm:rid_plus_parameter}.

\begin{theorem}
	\label{thm:rid_plus_privacy}
	With $m$ dummies, $s$ samples and $g$ groups, CRIAD satisfies $\ln \left(\tbinom{d/g}{s} \big/ \tbinom{m}{s}\right)$-LDP.
\end{theorem}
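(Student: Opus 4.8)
The plan is to reduce the claim to a single within-group report and then bound a hypergeometric likelihood ratio. First I would note that the partition of the domain into $g$ groups (Line~3 of Algorithm~\ref{alg:criad}) and each user's choice of a reporting group $G_r$ (Line~5) are made independently of the user's data, and that every user contributes samples drawn from exactly one group. Hence, for any fixed output --- a realized group index together with the reported bits --- the factor $1/g$ from the group choice cancels in the ratio of Definition~\ref{def:ldp}, and it suffices to prove the bound for the sub-problem inside a single group whose sub-domain has size $d/g$. This is precisely why $d/g$, rather than $d$, appears in the statement, and why no cross-group sequential composition is incurred.

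Within a group, after appending the $m$ dummy `$1$'s and performing the flipping step (Lines~8--11), the augmented vector has fixed length $N = d/g + m$, and its number of `$1$'s, call it $a$, satisfies $m \le a \le d/g$: the lower bound comes from the $m$ dummies, and the upper bound because at least $m$ zeros are enforced. By the same token the number of `$0$'s also lies in $[m, d/g]$. Since the $s$ indices are sampled without replacement and the indices themselves are never reported, the only data-dependent information in the output is the number $k$ of sampled `$1$'s, which follows the hypergeometric law
\begin{align*}
	\Pr[k \mid a] = \binom{a}{k}\binom{N-a}{s-k}\Big/\binom{N}{s}.
\end{align*}
Because the output space is discrete, establishing $\epsilon$-LDP is equivalent to bounding the point-mass ratio $\Pr[k \mid a]/\Pr[k \mid a']$ over all $k \in \{0,\dots,s\}$ and all admissible $a, a' \in [m, d/g]$.

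The crucial step is to locate the maximizer of this ratio. I would invoke the monotone-likelihood-ratio property of the hypergeometric family in the success parameter: for $a > a'$ the ratio $\Pr[k \mid a]/\Pr[k \mid a']$ is increasing in $k$, so over $k \in \{0,\dots,s\}$ it is maximized at the extreme output $k = s$, where it equals $\binom{a}{s}\big/\binom{a'}{s}$. Maximizing this over $a \le d/g$ and $a' \ge m$ yields $\binom{d/g}{s}\big/\binom{m}{s}$; the symmetric case $a < a'$ (maximized at $k = 0$, the all-`$0$' output) gives the same value by the `$0$'/`$1$' symmetry of the construction. Thus the global supremum of the ratio equals $\binom{d/g}{s}\big/\binom{m}{s}$, which is the claimed $e^{\epsilon}$.

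I expect the main obstacle to be rigorously establishing that the extremal outputs $\{1\}^s$ and $\{0\}^s$ dominate all intermediate outputs. The monotone-likelihood-ratio claim can be proved by checking that the discrete quotient $\big(\Pr[k{+}1 \mid a]/\Pr[k \mid a]\big)\big/\big(\Pr[k{+}1 \mid a']/\Pr[k \mid a']\big)$ exceeds $1$ for $a > a'$, a short manipulation of binomial coefficients. A second, easy-to-overlook point is that the design constraint $m \ge s$ (assumed throughout the multi-sample construction) is essential: it guarantees $\binom{m}{s} \ge 1$, so both extreme outputs retain strictly positive probability under every admissible input and the ratio never becomes unbounded. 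As a sanity check, setting $m = s = g = 1$ gives $\binom{d}{1}\big/\binom{1}{1} = d$, recovering the $\ln d$ bound of Theorem~\ref{thm:rid_plus_1dummy}.
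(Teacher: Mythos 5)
Your proof is correct and arrives at the paper's bound through the same extremal comparison (fullest-allowed input versus all-dummy input, evaluated at the all-`$1$'s report), but it is a genuine strengthening of the paper's argument rather than a restatement of it. The paper's proof proceeds incrementally --- first $s=1,g=1$, then general $s$ via Table~\ref{table:cases_multi_dummies_samples}, then general $g$ by replacing $d$ with $d/g$ --- and at the crucial step simply \emph{asserts} the inequality that for an arbitrary reported vector $\bm{b}$ and an arbitrary pair of inputs, the likelihood ratio is dominated by its value at the output $\{1\}^s$ under the extreme pair of inputs; it never argues why intermediate outputs (those with $0<k<s$ ones) or intermediate inputs cannot be worse. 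That dominance claim is precisely your monotone-likelihood-ratio lemma for the hypergeometric family, proved by the consecutive-ratio manipulation, after which your optimization over admissible $a,a' \in [m, d/g]$ legitimately identifies $\tbinom{d/g}{s}\big/\tbinom{m}{s}$ as the global supremum; your observation that $m \ge s$ keeps every output in the common support (so the ratio never blows up) is likewise a point the paper uses implicitly but never states. Similarly, your conditioning argument that the data-independent partition and group choice cancel in the ratio of Definition~\ref{def:ldp} is the rigorous version of the paper's one-line ``the group size changes from $d+m$ to $d/g+m$.'' In short: the paper's route buys brevity and an incremental narrative tied to its tables, while yours supplies the actual worst-case-over-all-outputs argument that $\epsilon$-LDP formally requires, and it correctly degenerates to Theorem~\ref{thm:rid_plus_1dummy} when $m=s=g=1$.
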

\begin{proof}
	Recall that in Table~\ref{table:cases_multi_dummy}, the last $m$ cases (i.e., $\pi_{d-m+1}, \pi_{d-m+2}, ..., \pi_d$) are reduced to $\pi_{d-m}$ by suppressing the number of bit `$1$'s in the encoded bit vector. Therefore, for any two filtered item sets $R^*_i$ and $R^*_j$ regarding a subset, and any bit $b\in \{0, 1\}$ reported by CRIAD, we have
	\begin{align*}
		\frac{\mathrm{Pr}[\textrm{CRIAD}(R^*_i)=b]}{\mathrm{Pr}[\textrm{CRIAD}(R^*_j)=b]}
		& \le \frac{
			\mathrm{Pr}[\textrm{CRIAD}(V_i\in \pi_{d-m})=1]
		}{
			\mathrm{Pr}[\textrm{CRIAD}(V_j\in \pi_0)=1]
		} \\
		&= \frac{d/(d+m)}{m/(d+m)}
		= \frac{d}{m}.
	\end{align*}

	Then with increasing $s>1$, let $\bm{b}=\{0,1\}^s$ denote a bit vector of length $s$ reported by a user. Recall that in Table~\ref{table:cases_multi_dummies_samples}, the above ratio further becomes
	\begin{align*}
		\frac{\mathrm{Pr}[\textrm{CRIAD}(R^*_i)=\bm{b}]}{\mathrm{Pr}[\textrm{CRIAD}(R^*_j)=\bm{b}]}
		& \le \frac{
			\mathrm{Pr}[\textrm{CRIAD}(V_i\in \pi_{d-m})=\{1\}^s]
		}{
			\mathrm{Pr}[\textrm{CRIAD}(V_j\in \pi_0)=\{1\}^s]
		} \\
		&= \frac{\tbinom{d}{s}/\tbinom{d+m}{s}}{\tbinom{m}{s}/\tbinom{d+m}{s}}
		= \frac{\tbinom{d}{s}}{\tbinom{m}{s}}.
	\end{align*}

	Then with increasing $g>1$, the group size changes from $d+m$ to $\frac{d}{g}+m$. So the above ratio becomes $\tbinom{d/g}{s} \big/ \tbinom{m}{s}$. Therefore, CRIAD satisfies $\epsilon$-LDP, where $\epsilon = \ln \left(\tbinom{d/g}{s} \big/ \tbinom{m}{s}\right)$.
\end{proof}

\begin{theorem}
	\label{thm:rid_plus_unbiasedness}
	The estimated count $\tilde{Q}(c)$ by Algorithm~\ref{alg:criad} is unbiased if the number of bits `$1$'s in the encoded bit vector does not exceed $d/g-m$. 
\end{theorem}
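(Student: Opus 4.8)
The plan is to compute $\mathbb{E}[\tilde{Q}(c)]$ directly from Eq.~\ref{eq:rid_plus_frequency_all} and show it equals $Q(c)=\sum_{i=1}^{n}T_i$, where $T_i$ denotes the true number of category-$c$ items held by user $u_i$ (equivalently the number of bit `$1$'s in her full encoded vector, so that $Q(c)=\sum_{i}\sum_{r\in R_i}\mathbbm{1}_c(r)=\sum_i T_i$). First I would fix notation: let $T_i^{(r)}$ be the number of bit `$1$'s in $u_i$'s encoded vector restricted to group $G_r$ before dummies are added, so that $T_i=\sum_{r=1}^{g}T_i^{(r)}$ because the $g$ groups partition the domain of $c$.

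The first substantive step is to invoke the hypothesis. Since $T_i^{(r)}\le d/g-m$, after appending $m$ dummy `$1$'s the group vector of length $d/g+m$ contains exactly $T_i^{(r)}+m$ ones and at least $(d/g+m)-(T_i^{(r)}+m)=d/g-T_i^{(r)}\ge m$ zeros. Hence the flipping branch (Lines 9--11 of Algorithm~\ref{alg:criad}) never triggers, and the count of `$1$'s stays exactly $T_i^{(r)}+m$. This is the crucial point: without it, suppression of `$1$'s would introduce a negative bias, exactly analogous to the $-n\pi_d$ term appearing in Theorem~\ref{theorem:unbias_rid_plus}.

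Next I would evaluate the expectation for a single user, taking it over both the group selection and the index sampling. Conditioning on $u_i$ selecting group $G_r$, the $s$ indices $z_i[1],\dots,z_i[s]$ are drawn without replacement from the $d/g+m$ positions, so by linearity of expectation and symmetry each sampled bit equals `$1$' with marginal probability $(T_i^{(r)}+m)/(d/g+m)$, giving $\mathbb{E}[\sum_{x=1}^{s}V_i^{z_i[x]}\mid G_r]=s(T_i^{(r)}+m)/(d/g+m)$. Averaging over the uniform group choice (probability $1/g$ each) and using the identity $d/g+m=(d+gm)/g$ collapses the per-group dummy contributions into $gm$ and the group counts into $T_i$, so that $\tfrac{d+gm}{s}\,\mathbb{E}[\sum_{x=1}^{s}V_i^{z_i[x]}]=T_i+gm$. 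Summing over all $n$ users and subtracting the offset $nmg$ from Eq.~\ref{eq:rid_plus_frequency_all} then yields $\mathbb{E}[\tilde{Q}(c)]=\sum_{i=1}^{n}(T_i+gm)-nmg=Q(c)$, establishing unbiasedness.

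I expect the main obstacle to be handling the sampling-without-replacement expectation correctly and tracking how the group-selection randomness interacts with the $\tfrac{d+gm}{s}$ scaling, so that the dummy offset $nmg$ cancels exactly. The no-flip argument drawn from the hypothesis must be stated carefully, since it is precisely the condition $T_i^{(r)}\le d/g-m$ that guarantees the number of `$1$'s is unperturbed and therefore keeps the estimator unbiased.
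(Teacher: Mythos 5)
Your proof is correct and follows essentially the same route as the paper's: compute each user's expected sampled sum $s(\text{ones}+m)/(d/g+m)$ via linearity, scale by $(d+gm)/s$, and observe that the dummy offset $nmg$ cancels exactly. The only differences are presentational — you work with per-user group counts $T_i^{(r)}$ and make explicit both the no-flip argument (at least $m$ zeros remain under the hypothesis) and the conditioning on the uniform group choice, whereas the paper aggregates users by the population proportions $\pi_t$ and defers the no-flip step to an analogy with Theorem~\ref{theorem:unbias_rid_plus}.
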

\begin{proof}
	By the grouping strategy, each encoded bit vector is split into $g$ sub-vectors. Therefore, for an encoded vector $V_i$ which contains $t$ bit `$1$'s, the length of each sub-vector is $d/g$ and the expected count of bit $1$ is $t/g$. 	
	In Eq.~\ref{eq:rid_plus_frequency_all}, $\sum_{x=1}^{s}V_i^{z_i[x]}$ is the count of bit `$1$'s in $s$ samples reported by the user $u_i$ in a group. Similar to Theorem~\ref{theorem:unbias_rid_plus}, if the number of bit `$1$'s in that group does not exceed $d/g-m$, the count $\sum_{x=1}^{s}V_i^{z_i[x]}$ from each user is an unbiased estimation of the true count in that group. Hence, in the case of $\pi_t$, the expectation of the count $\sum_{x=1}^{s}V_i^{z_i[x]}$ is 	
	
	\begin{align*}
		\mathbb{E}[\sum\nolimits_{x=1}^{s}V_i^{z_i[x]}|\pi_t]
		&= \frac{s(t/g+m)}{d/g+m}.
 	\end{align*}
 	Therefore,
	 \begin{align*}
	 	\mathbb{E}[\sum\nolimits_{x=1}^{s}V_i^{z_i[x]}]
	 	&= \sum\nolimits_{t=0}^{d} \pi_t \cdot \mathbb{E}[\sum\nolimits_{x=1}^{s}V_i^{z_i[x]}|\pi_t] \\
	 	&= \sum\nolimits_{t=0}^{d} \frac{s(t/g+m)}{d/g+m} \pi_t.
	 \end{align*}

	By Eq.~\ref{eq:rid_plus_frequency_all}, the expectation of the estimated count is 	
	\begin{align*}
		\mathbb{E}[\tilde{Q}(c)]
		&= \frac{d+gm}{s} \sum\nolimits_{i=1}^{n} \mathbb{E}[\sum\nolimits_{x=1}^{s}V_i^{z_i[x]}] - nmg \\
		&= \frac{d+gm}{s} \sum\nolimits_{i=1}^{n} \sum\nolimits_{t=0}^{d} \frac{s(t/g+m)}{d/g+m} \pi_t  - nmg \\
		&= n\sum\nolimits_{t=0}^{d} t \pi_t
		= Q(c).
	\end{align*}

	Therefore, $\mathbb{E}[\tilde{Q}(c)] = Q(c)$.		
\end{proof}

\begin{theorem}
	\label{thm:rid_plus_bias}
	The expected bias error for $\tilde{Q}(c)$ by Algorithm~\ref{alg:criad} is $n\sum_{t=d-mg}^{d} \pi_tf(t)$, where $f(t)=t-d+mg$.		
\end{theorem}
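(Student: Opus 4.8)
The plan is to compute the expected bias directly as $Q(c)-\mathbb{E}[\tilde{Q}(c)]$, reusing the machinery of Theorem~\ref{thm:rid_plus_unbiasedness} but now tracking the effect of the bit-flipping performed in Lines 9--11 of Algorithm~\ref{alg:criad}. The starting observation is that the flipping only matters for a group whose number of real bit `$1$'s exceeds the capacity $d/g-m$; below this threshold the analysis of Theorem~\ref{thm:rid_plus_unbiasedness} applies verbatim and the contribution is unbiased, so the bias must be supported on the overflow regime. I would therefore first isolate, for a user in case $\pi_t$, the expected number of real `$1$'s landing in her sampled group, which by the same symmetry argument as in Theorem~\ref{thm:rid_plus_unbiasedness} is $t/g$.

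Next I would incorporate the capping. When $t/g>d/g-m$, the flipping reduces the effective number of real `$1$'s in the group to exactly $d/g-m$, so the expected count of `$1$'s among the $s$ reported samples becomes $\frac{s(\min(t/g,\,d/g-m)+m)}{d/g+m}$, where the $+m$ accounts for the surviving dummy `$1$'s. Feeding this into the estimator of Eq.~\ref{eq:rid_plus_frequency_all} --- multiplying by $\frac{d+gm}{s}$ and subtracting the per-user dummy mass $mg$ --- and using the identity $\frac{d+gm}{d/g+m}=g$, the per-user contribution to $\mathbb{E}[\tilde{Q}(c)]$ collapses cleanly to $g\min(t/g,\,d/g-m)$.

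The bias per user in case $\pi_t$ is then the gap between the truth $t$ and this quantity, namely $t-g\min(t/g,\,d/g-m)=g\max\!\big(0,\,t/g-(d/g-m)\big)=\max(0,\,t-d+mg)$, which is exactly $f(t)$ when $t\ge d-mg$ and $0$ otherwise. Summing over the whole population, weighting case $\pi_t$ by $n\pi_t$, gives $n\sum_{t=d-mg}^{d}\pi_t f(t)$, the claimed expected bias error; the lower summation limit simply discards the terms in which no group ever overflows.

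The main obstacle, and the step I would be most careful about, is the transition from the random per-group count to its mean $t/g$. Because $\max(0,\cdot)$ is convex, replacing the hypergeometric number of real `$1$'s in the sampled group by its expectation $t/g$ is a Jensen-type simplification rather than an exact identity, so I would either present the result under the same mean-count approximation already used implicitly in Theorem~\ref{thm:rid_plus_unbiasedness}, or state it as the leading-order expected bias. Getting the dummy bookkeeping right through the flip --- so that the $+m$ inside the fraction and the $-mg$ correction cancel via $\frac{d+gm}{d/g+m}=g$ --- is the other place where an off-by-one in the dummy count would corrupt the final formula.
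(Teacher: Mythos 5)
Your proposal is correct and takes essentially the same route as the paper's proof: the paper likewise splits each encoded vector into $g$ sub-vectors, takes the per-group count of real `$1$'s to be $t/g$, observes that the suppression caps it at $d/g-m$, and multiplies the per-user deficit $(t/g-d/g+m)$ by $g$ to obtain $f(t)=t-d+mg$ for $t\ge d-mg$ and zero otherwise. Your closing caveat is well placed and in fact applies to the paper itself: its proof also silently replaces the (hypergeometric) number of real `$1$'s landing in the sampled group by its mean $t/g$ before applying the cap, so the stated formula is exact only when $g=1$ and is otherwise the same mean-count (Jensen-type) approximation you identify.
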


\begin{proof}
	By the grouping strategy, each encoded bit vector is split into $g$ sub-vectors. Therefore, for an encoded vector $V_i$ which contains $t$ bit `$1$'s, the length of each sub-vector is $d/g$ and the expected count of bit $1$ is $t/g$. 
	
	For the case where $t>d-mg$, since we suppress $\pi_{t/g}$ cases to $\pi_{d/g-m}$ case by randomly flipping one bit $1$ to $0$ to refrain from adding a dummy bit `$0$', the expected bias error after aggregation is
	\begin{align*}
	 n\sum\nolimits_{t=d-mg}^{d}\pi_t (\frac{t}{g}-\frac{d}{g}+m)g=n\sum\nolimits_{t=d-mg}^{d}\pi_tf(t),
	 \end{align*}
	 where $f(t)=t-d+mg$. Conversely, when $t\le d-mg$, the expected bias error after aggregation is $0$. In summary, the expected bias error for $\tilde{Q}(c)$ by Algorithm~\ref{alg:criad} is $n\sum_{t=d-mg}^{d} \pi_t f(t)$.
\end{proof}

\begin{theorem}
	\label{thm:rid_plus_variance}
	With $m$ dummies, $s$ samples and $g$ groups, the variance of the estimated count by CRIAD is bounded by $\frac{n(d+gm)^2}{4 s}$.
\end{theorem}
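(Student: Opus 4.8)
The plan is to reduce the variance of the estimator in Eq.~\ref{eq:rid_plus_frequency_all} to a sum of independent per-user contributions, and then bound each contribution by recognizing it as a hypergeometric random variable. First I would discard the deterministic term $nmg$, which does not affect the variance, and pull the multiplicative factor $\frac{d+gm}{s}$ out as its square, so that
\[
	\mathrm{Var}[\tilde{Q}(c)] = \left(\frac{d+gm}{s}\right)^2 \mathrm{Var}\!\left[\sum\nolimits_{i=1}^{n}\sum\nolimits_{x=1}^{s}V_i^{z_i[x]}\right].
\]
Since the users sample independently, the variance of the double sum decomposes into a sum of per-user variances.

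For a fixed user $u_i$, her reporting vector (within her group, after the $m$ dummy `$1$'s are appended) has length $L=\frac{d}{g}+m$ and contains some number $k_i$ of `$1$'s, and she draws $s$ indices uniformly \emph{without replacement}. Hence $X_i=\sum_{x=1}^{s}V_i^{z_i[x]}$ is exactly the number of `$1$'s in a size-$s$ subset drawn without replacement, i.e.\ a hypergeometric count with population $L$, $k_i$ successes, and $s$ draws. This identification is the crux of the argument: the without-replacement sampling emphasized in the multi-sample design is precisely what makes $X_i$ hypergeometric rather than binomial.

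I would then invoke the hypergeometric variance formula and bound it in two elementary steps:
\[
	\mathrm{Var}[X_i] = s\cdot\frac{k_i}{L}\cdot\frac{L-k_i}{L}\cdot\frac{L-s}{L-1} \le s\cdot\frac{k_i(L-k_i)}{L^2} \le \frac{s}{4},
\]
where the finite-population correction satisfies $\frac{L-s}{L-1}\le 1$ for any $s\ge 1$, and $\frac{k_i(L-k_i)}{L^2}\le\frac14$ by AM--GM (maximized at $k_i=L/2$). The key feature is that this ceiling $\frac{s}{4}$ is uniform in $k_i$ and even independent of $L$, so the group composition and the data distribution drop out entirely. Summing over the $n$ independent users gives $\mathrm{Var}[\sum_i X_i]\le\frac{ns}{4}$, and substituting back yields
\[
	\mathrm{Var}[\tilde{Q}(c)]\le\left(\frac{d+gm}{s}\right)^2\frac{ns}{4}=\frac{n(d+gm)^2}{4s},
\]
the claimed bound.

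The main obstacle is the per-user step: one must model the sampling correctly as hypergeometric and carry the finite-population correction, rather than naively treating the $s$ sampled bits as independent Bernoulli trials. A reassuring sanity check is that even the looser with-replacement model would give the same $\frac{s}{4}$ ceiling, because without-replacement sampling only shrinks the variance via the factor $\frac{L-s}{L-1}\le 1$; thus the bound is insensitive to this modeling choice. A secondary bookkeeping point is that each user samples from a length-$\frac{d}{g}+m$ vector while the global estimator carries the factor $\frac{d+gm}{s}=g\cdot\frac{d/g+m}{s}$; since the per-user bound $\frac{s}{4}$ does not depend on $L$, the $g$'s are absorbed automatically into the stated $(d+gm)^2$, so no separate accounting of the grouping is required.
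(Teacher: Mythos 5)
Your route is structurally the same as the paper's: discard the deterministic $nmg$ term, pull out the factor $\bigl(\tfrac{d+gm}{s}\bigr)^2$, decompose over independent users, and cap each user's sampling variance by the $p(1-p)\le\tfrac14$ ceiling. Within a group your treatment is in fact \emph{tighter} than the paper's: the paper writes the per-user term as ``$\lesssim$'' a with-replacement-style expression, silently dropping the finite-population correction, whereas you invoke the exact hypergeometric variance and bound the correction factor $\tfrac{L-s}{L-1}$ by $1$.

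There is, however, a genuine gap, and it sits exactly at your closing claim that ``no separate accounting of the grouping is required.'' The count $k_i$ of ones in the vector a user samples from is \emph{not} fixed: the user first draws her group uniformly at random (Algorithm~\ref{alg:criad}, Line 5), so $k_i=k_i(r)$ is itself random, and your hypergeometric identification holds only conditionally on $r$. By the law of total variance,
\[
\mathrm{Var}[X_i] \;=\; \mathbb{E}_r\bigl[\mathrm{Var}[X_i \mid r]\bigr] \;+\; \mathrm{Var}_r\bigl[\mathbb{E}[X_i \mid r]\bigr],
\]
and you bound only the first term. The second term equals $\tfrac{s^2}{L^2}\mathrm{Var}_r[k_i(r)]$ with $L=\tfrac{d}{g}+m$, is strictly positive whenever the user's items are unevenly spread across groups, and is not covered by the $\tfrac{s}{4}$ ceiling. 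Concretely, take $d=16$, $g=2$, $m=s=4$, and a user whose $8$ items are exactly the items of group $1$: after dummies and suppression her group-$1$ vector has $8$ ones out of $L=12$ and her group-$2$ vector has $4$ ones out of $12$, giving conditional variances $\tfrac{64}{99}$ and conditional means $\tfrac83$ and $\tfrac43$, hence $\mathrm{Var}[X_i]=\tfrac{64}{99}+\tfrac49=\tfrac{12}{11}>1=\tfrac{s}{4}$. To be fair, this blind spot is inherited from the paper itself: its claim that picking a group and then sampling $s$ bits within it ``is equivalent to directly selecting $s$ samples from the whole domain of length $d+gm$'' is exact only when $s=1$ (or $g=1$), and for $s,g>1$ it washes out the same between-group term. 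So your argument is airtight in the cases $g=1$ or $s=1$ and matches the paper's rigor elsewhere, but the assertion that the grouping is absorbed automatically is the one step that fails; closing it requires either restricting to those cases, assuming each user's items are spread evenly over the groups, or explicitly bounding $\mathrm{Var}_r[\mathbb{E}[X_i\mid r]]$ (which, as the example shows, would force a weaker constant or an extra additive term in the stated bound).
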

\begin{proof}
	With $g$ groups, $m$ dummies and $s$ samples, each user first selects a group uniformly at random, and then selects $s$ sample bits in that group. From the perspective of sampling variance, this is equivalent to directly selecting $s$ samples from the whole domain of length $d+gm$. For any encoded bit vector belonging to the case of $\pi_t \in\{\pi_0, \pi_1, ..., \pi_{d-m}\}$, the whole domain consists of $t+gm$ bit `$1$'s and $d-t$ bit `$0$'s. As such, the variance of the estimated count $\tilde{Q}(c)$ by Eq.~\ref{eq:rid_plus_frequency_all} is
	\begin{align*}
		\mathrm{Var}[\tilde{Q}(c)]
		&= 	\frac{(d+gm)^2}{s^2} \mathrm{Var}[\sum\nolimits_{i=1}^{n} \sum\nolimits_{x=1}^{s}V_i^{z_i[x]}] \\
		&=	\frac{n(d+gm)^2}{s^2} \sum\nolimits_{t=0}^{d} \pi_t \mathrm{Var}[\sum\nolimits_{x=1}^{s}V_i^{z_i[x]} | \pi_t]  \\
		&\lesssim \frac{n(d+gm)^2}{s} \cdot \frac{(t+gm)(d-t)}{(d+gm)^2}  \\
		&< \frac{n(d+gm)^2}{4 s}.
	\end{align*}
	
	Therefore, the variance of the estimated count by CRIAD is bounded by $\frac{n(d+gm)^2}{4 s}$.
%
\end{proof}

Finally, Theorem~\ref{thm:rid_plus_parameter} below derives an optimal approximation of $m$, $s$ and $g$ in terms of the estimation variance. 

\begin{theorem}
	\label{thm:rid_plus_parameter}	
	Given privacy budget $\epsilon$, the optimal $m$, $s$, and $g\in \mathbb{Z}^{+}$ of CRIAD can be approximated by

	\begin{align}
		\label{eq:parameter_rid_plus}
	    &\quad m,s,g = \mathop{\arg\min}_{m,s,g}\mathbb{E}[(\tilde{Q}(c)-Q(c))^2] \nonumber \\
	    &=  \mathop{\arg\min}_{m,s,g} \! \left(  n\frac{(d\!+\!gm)^2}{4s}+(n \! \sum_{t=d-mg}^{d} \! \pi_t f(t))^2\right)
	\end{align}

	\begin{align}
		s.t., \quad & \epsilon \ge  \ln \left(\tbinom{d/g}{s} \big/ \tbinom{m}{s}\right) ,\nonumber \\
		& 1<s\le m \le d/g.  \nonumber
	\end{align}
\end{theorem}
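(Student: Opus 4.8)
The plan is to obtain the objective in Eq.~\ref{eq:parameter_rid_plus} through the standard bias--variance decomposition of the mean squared error. For any estimator, $\mathbb{E}[(\tilde{Q}(c)-Q(c))^2]$ splits into the squared bias $(\mathbb{E}[\tilde{Q}(c)]-Q(c))^2$ plus the variance $\mathrm{Var}[\tilde{Q}(c)]$, since the cross term vanishes once we center $\tilde{Q}(c)$ at its own mean. First I would invoke Theorem~\ref{thm:rid_plus_variance}, which bounds the variance by $\frac{n(d+gm)^2}{4s}$, supplying the first term of the objective. Next I would invoke Theorem~\ref{thm:rid_plus_bias}, which gives the expected bias as $n\sum_{t=d-mg}^{d}\pi_t f(t)$ with $f(t)=t-d+mg$; squaring it supplies the second term. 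Summing these two contributions reproduces exactly the expression inside the $\arg\min$, so the objective is justified once the two cited theorems are in hand.

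The constraints follow from the construction rather than from any further computation. The privacy constraint $\epsilon\ge\ln\!\left(\tbinom{d/g}{s}\big/\tbinom{m}{s}\right)$ is Theorem~\ref{thm:rid_plus_privacy} read as a feasibility requirement against a target budget. The feasibility bounds $1<s\le m\le d/g$ are forced by the mechanism itself: sampling $s$ indices without replacement from the $m$ appended dummy bits requires $s\le m$, the per-group domain of size $d/g$ must accommodate the $m$ dummies so $m\le d/g$, and $s>1$ restricts attention to the genuine multi-sample regime (the single-sample case reduces to Algorithm~\ref{alg:rid_1dummy}). I would state each of these inline and point to the corresponding step of the CRIAD construction, without re-deriving them.

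The hard part will be being precise about the word \emph{approximated} in the statement, since Eq.~\ref{eq:parameter_rid_plus} is a surrogate and not an exact optimum. There are two sources of slack that the proof must acknowledge. First, Theorem~\ref{thm:rid_plus_variance} supplies only an upper bound on the variance (the $\tfrac14$ relaxation of $\frac{(t+gm)(d-t)}{(d+gm)^2}$), so the minimizer of the surrogate need not minimize the true MSE. Second, the admissible triples $(m,s,g)$ lie on a finite integer lattice bounded by $g\le d$, $m\le d/g$, and $s\le m$, which precludes a closed-form minimizer or any calculus-based optimality certificate; I would therefore frame the result as selecting the variance-minimizing feasible integer triple, noting that the bounded feasible region permits an exhaustive search. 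The most delicate point is the bias term $n\sum_{t=d-mg}^{d}\pi_t f(t)$, which depends on the unknown data distribution $\{\pi_t\}$: I would make explicit that in the target regime of large $d$ the upper tail $\pi_t$ for $t$ near $d$ is negligible (as argued when motivating the suppression of the $\pi_d$ case), so the bias term effectively drops and the objective reduces to minimizing the variance bound alone, which is what makes the stated $\arg\min$ a faithful approximation.
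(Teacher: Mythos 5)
Your proposal takes essentially the same route as the paper's own proof: the objective in Eq.~\ref{eq:parameter_rid_plus} is assembled by adding the variance bound from Theorem~\ref{thm:rid_plus_variance} to the square of the suppression bias from Theorem~\ref{thm:rid_plus_bias}, with the privacy constraint read off from Theorem~\ref{thm:rid_plus_privacy} and the remaining constraints attributed to the construction --- which is exactly what the paper does. Two small corrections to your constraint justifications: $s\le m$ is not a mechanical consequence of sampling ``from the dummy bits'' (samples are drawn without replacement from all $d/g+m$ positions); it is needed because an all-zero user holds only the $m$ dummy ones, so $\tbinom{m}{s}$ must be positive for her to have any chance of reporting $\{1\}^s$, otherwise the ratio in Theorem~\ref{thm:rid_plus_privacy} is unbounded; likewise $m\le d/g$ comes from the suppression step confining the number of true ones to $d/g-m\ge 0$, not from the domain ``accommodating'' the dummies. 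Finally, where you propose to discharge the unknown $\{\pi_t\}$ by arguing the upper tail is negligible for large $d$, the paper instead keeps the bias term, estimates $\{\pi_t\}$ empirically from a sacrificed fraction of users, and enumerates the (small) feasible integer lattice --- consistent with your exhaustive-search observation, but without assuming the bias away.
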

\begin{proof}	
	The optimal parameter setting of $m$, $s$ and $g$ can be derived by minimizing the expected squared error, i.e., $\mathop{\arg\min}_{m,s,g}\mathbb{E}[(\tilde{Q}(c)-Q(c))^2]$.
    Note that  the first term in Eq.~\ref{eq:parameter_rid_plus} is the overall variance derived from Theorem~\ref{thm:rid_plus_variance}, and the second term is the expected bias error due to suppression derived from Theorem~\ref{thm:rid_plus_bias}. As for the two conditions, the first is due to Theorem~\ref{thm:rid_plus_privacy}, where CRIAD satisfies $\ln \left(\tbinom{d/g}{s} \big/ \tbinom{m}{s}\right)$-LDP. The second is because even when the true bit vector has all bit `$0$'s, the number of bit `$1$' is still $m$ as $m$ dummy `$1$'s are added. And $m\le d/g$ is because in each group we confine the number of bit `$1$'s to $d/g-m$ .
\end{proof}

As for $\pi_t$, we can randomly allocate a portion of users to estimate the distribution, where each user only needs to report an integer. 
Then we conduct a greedy search to enumerate all combinations of $m$, $s$ and $g$, calculate their resulted variance by Eq. 15, and select the optimal one with the lowest variance. Note that $m, s, g\in \mathbb{Z}^{+}$, and $s$ and $g$ are typically small integers even for a large domain. As such, the number of combinations will not be too large, resulting in an efficient search.


\section{Experimental Evaluation}
\label{sec:experiment}
In this section, we evaluate the performance of our proposed solution CRIAD to validate its effectiveness in answering subset counting queries.

\begin{figure*}[ht]
	\centering
	\begin{minipage}{0.55\linewidth}
		\centerline{\includegraphics[width=\linewidth]{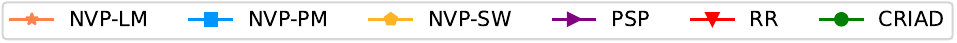}}
		\vspace{0.05in}
	\end{minipage}
	
	\begin{minipage}{0.28\linewidth}
		\centerline{\includegraphics[width=\linewidth]{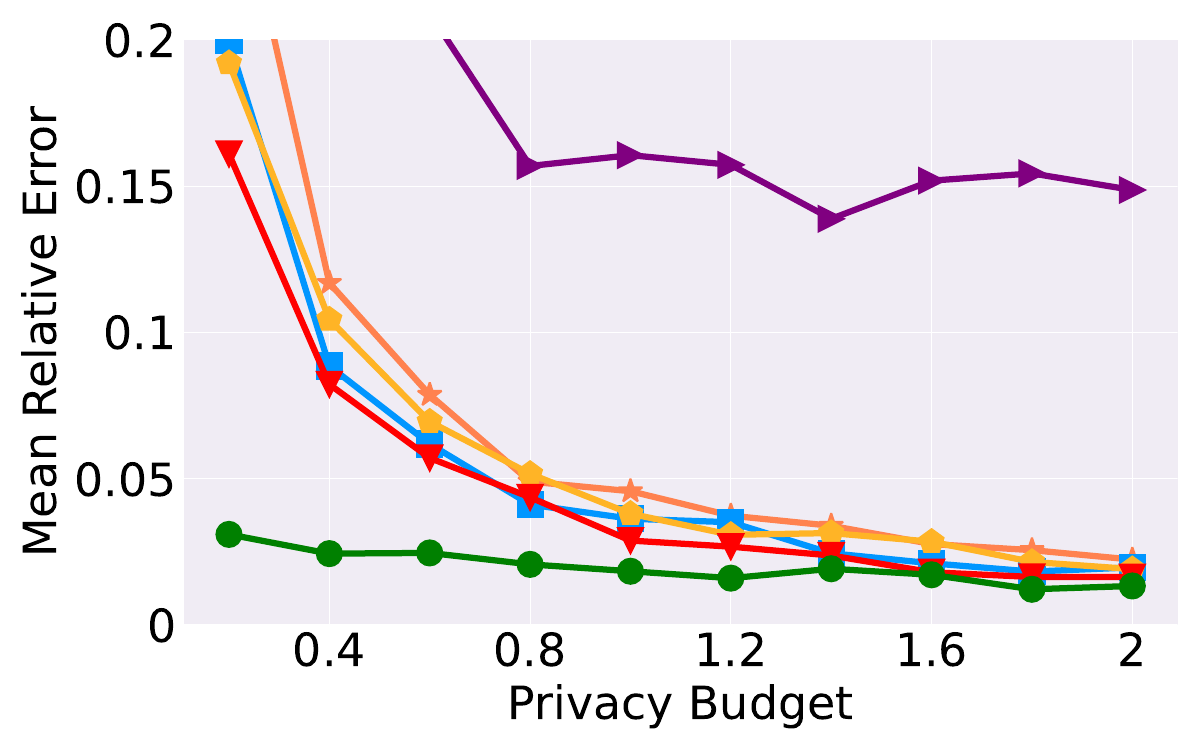}}
		\centerline{\footnotesize{(a) {\it Kosarak}, $[1,100]$}}
	\end{minipage}
	\begin{minipage}{0.28\linewidth}
		\centerline{\includegraphics[width=\linewidth]{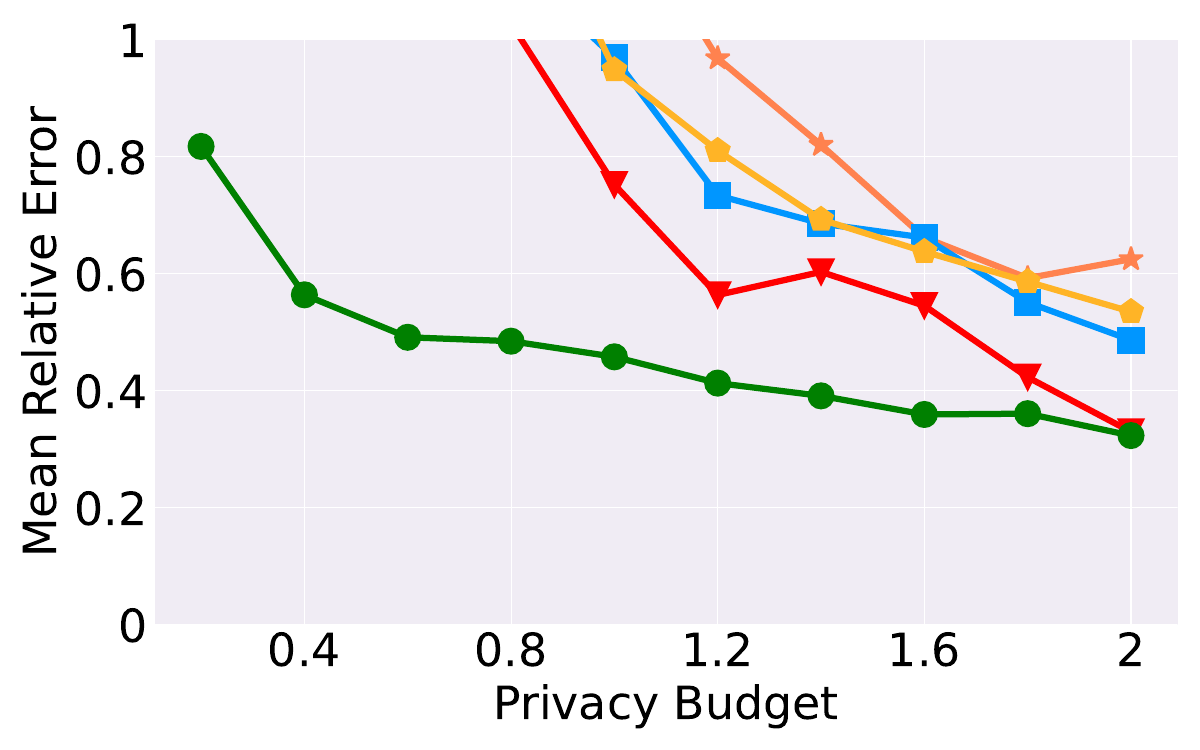}}
		\centerline{\footnotesize{(b) {\it OnlineRetail}, $[1,100]$}}
	\end{minipage}
	\begin{minipage}{0.28\linewidth}
		\centerline{\includegraphics[width=\linewidth]{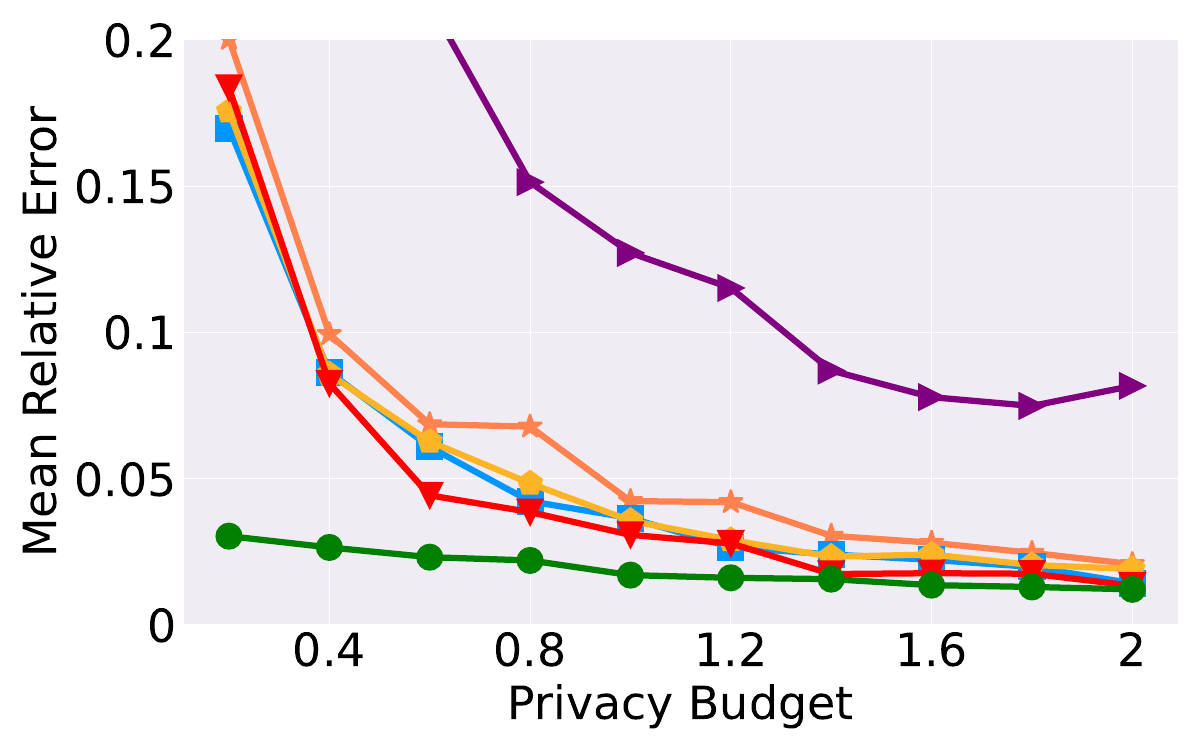}}
		\centerline{\footnotesize{(c) {\it POS}, $[1,100]$}}
	\end{minipage}

	\begin{minipage}{0.28\linewidth}
		\centerline{\includegraphics[width=\linewidth]{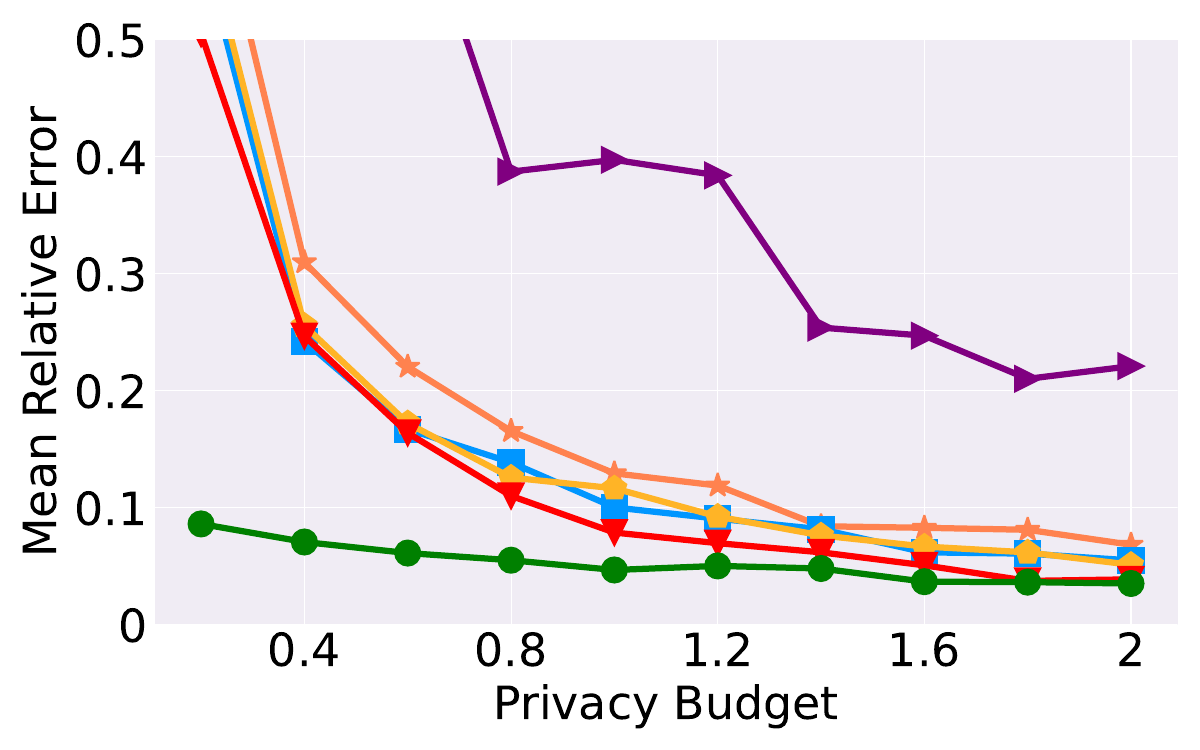}}
		\centerline{\footnotesize{(d) {\it Kosarak}, $[1,400]$}}
	\end{minipage}
	\begin{minipage}{0.28\linewidth}
		\centerline{\includegraphics[width=\linewidth]{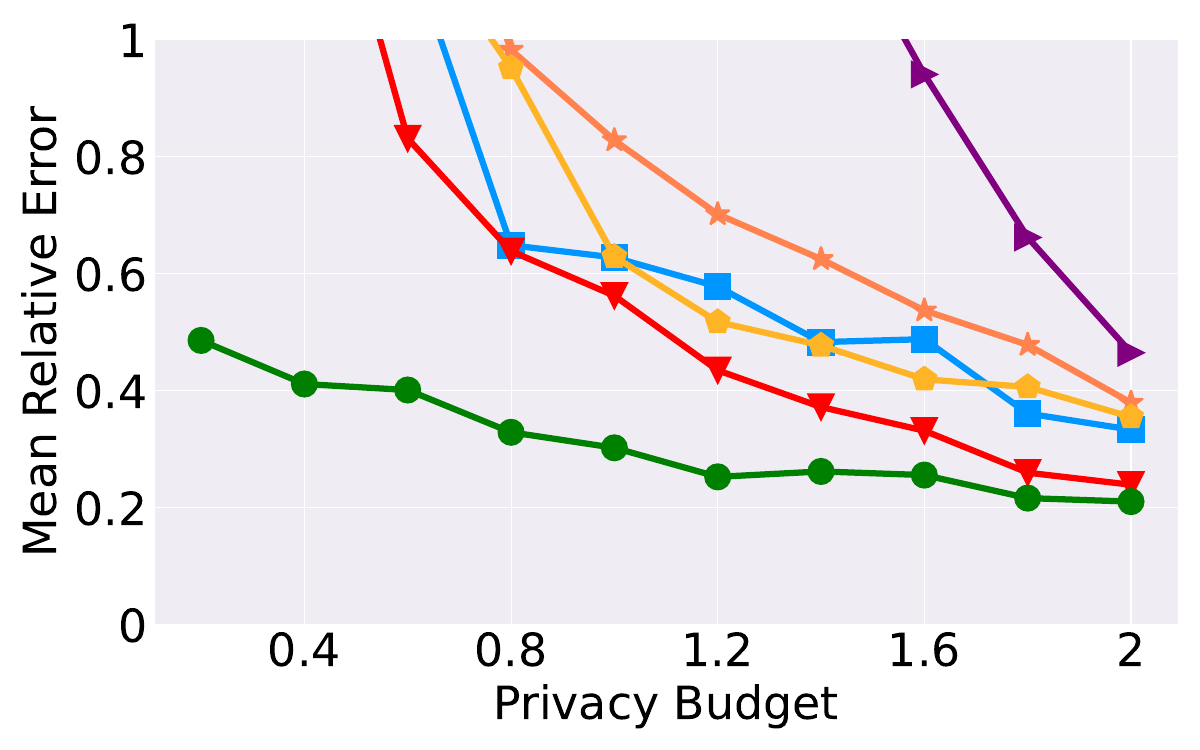}}
		\centerline{\footnotesize{(e) {\it OnlineRetail}, $[1,400]$}}
	\end{minipage}
	\begin{minipage}{0.28\linewidth}
		\centerline{\includegraphics[width=\linewidth]{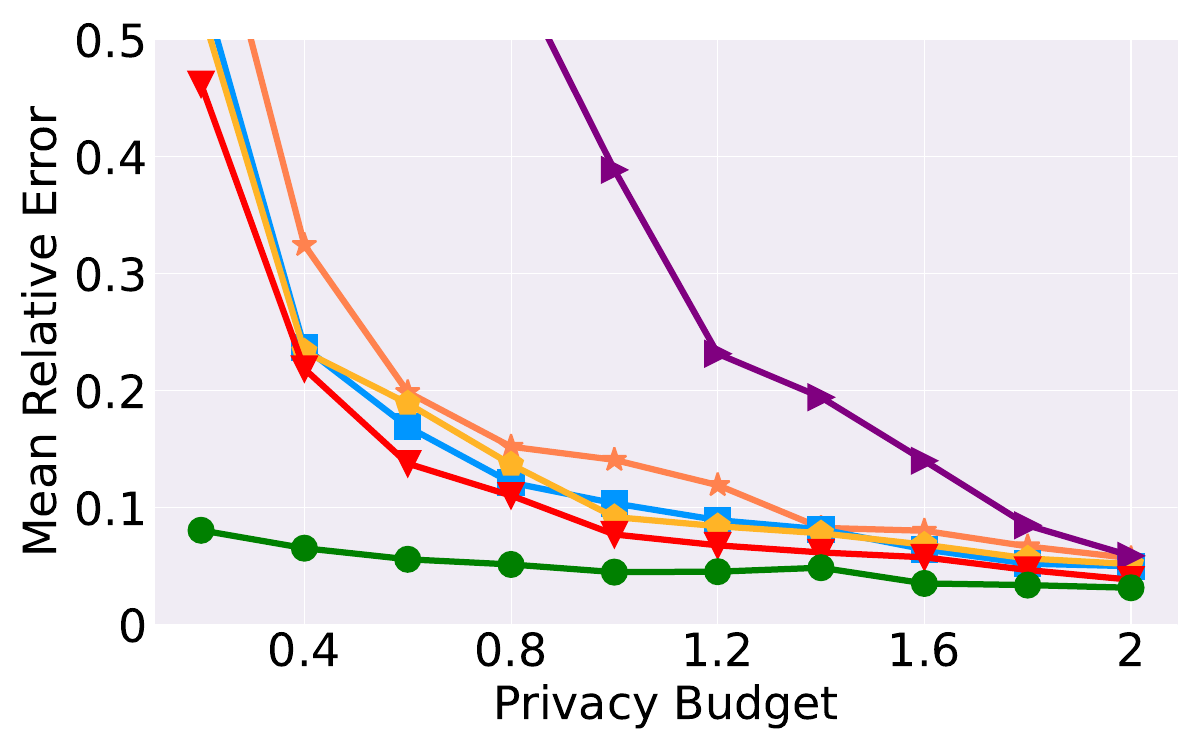}}
		\centerline{\footnotesize{(f) {\it POS}, $[1,400]$}}
	\end{minipage}
	
	\begin{minipage}{0.28\linewidth}
		\centerline{\includegraphics[width=\linewidth]{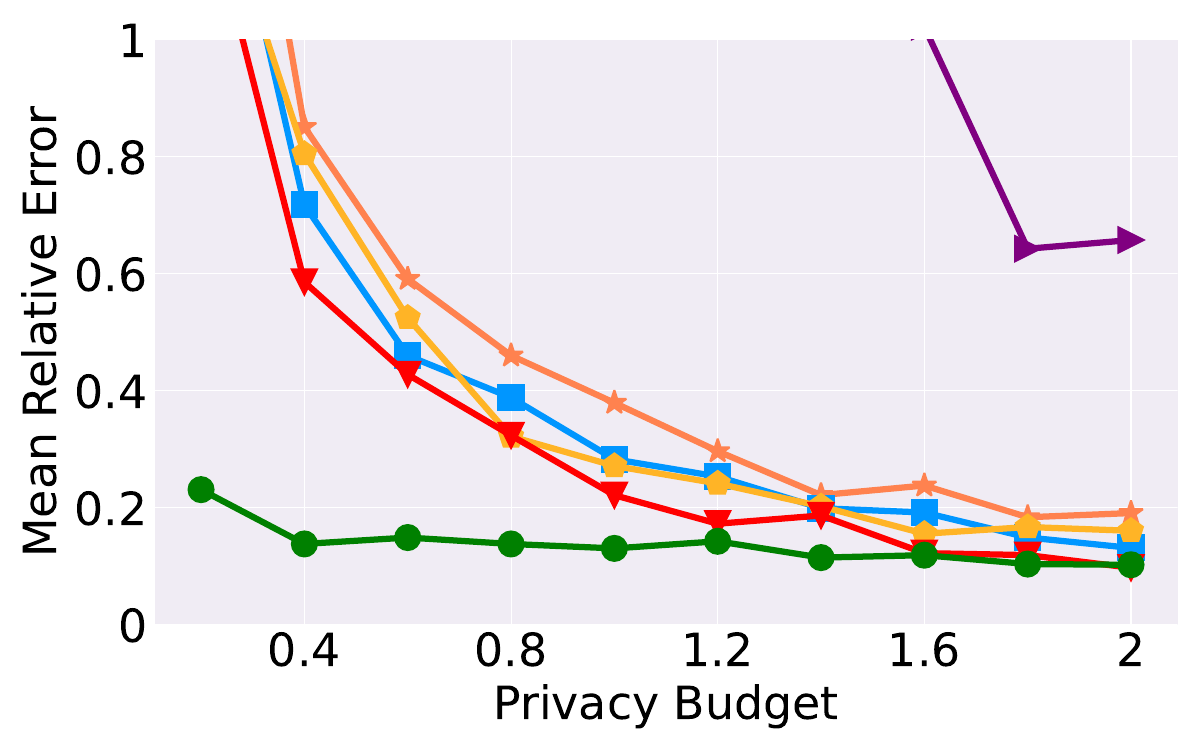}}
		\centerline{\footnotesize{(g) {\it Kosarak}, $[1,1600]$}}
	\end{minipage}
	\begin{minipage}{0.28\linewidth}
		\centerline{\includegraphics[width=\linewidth]{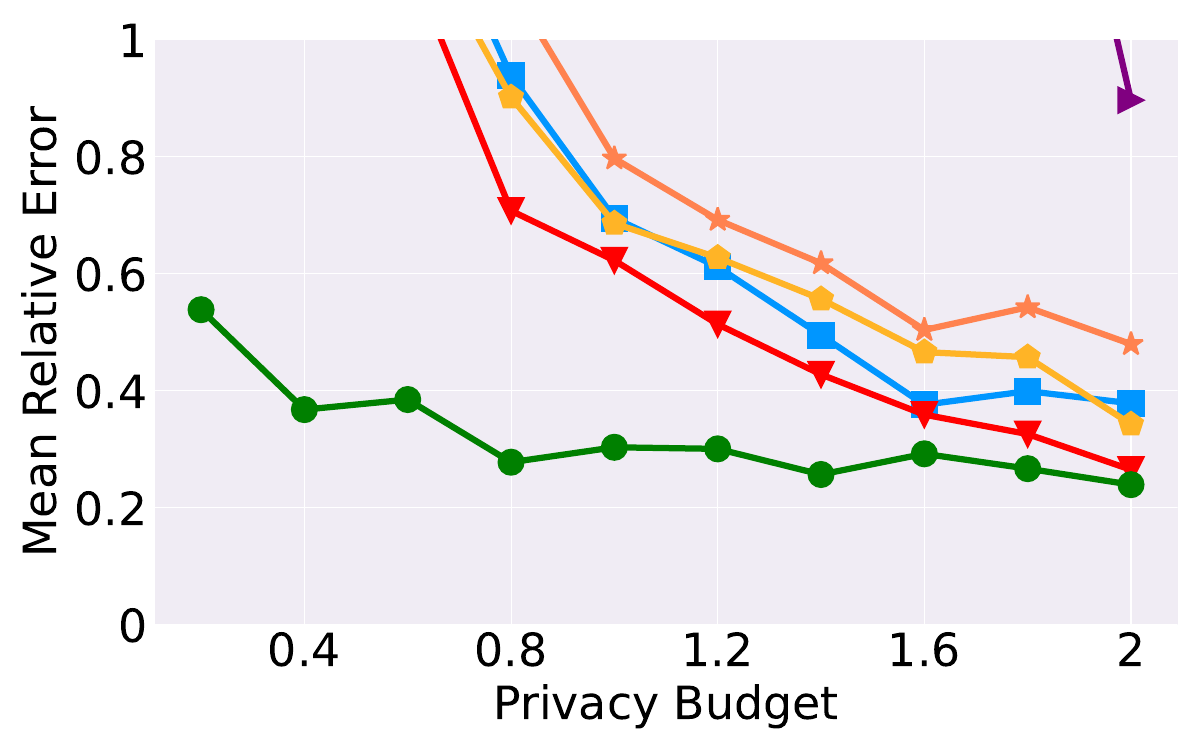}}
		\centerline{\footnotesize{(h) {\it OnlineRetail}, $[1,1600]$}}
	\end{minipage}
	\begin{minipage}{0.28\linewidth}
		\centerline{\includegraphics[width=\linewidth]{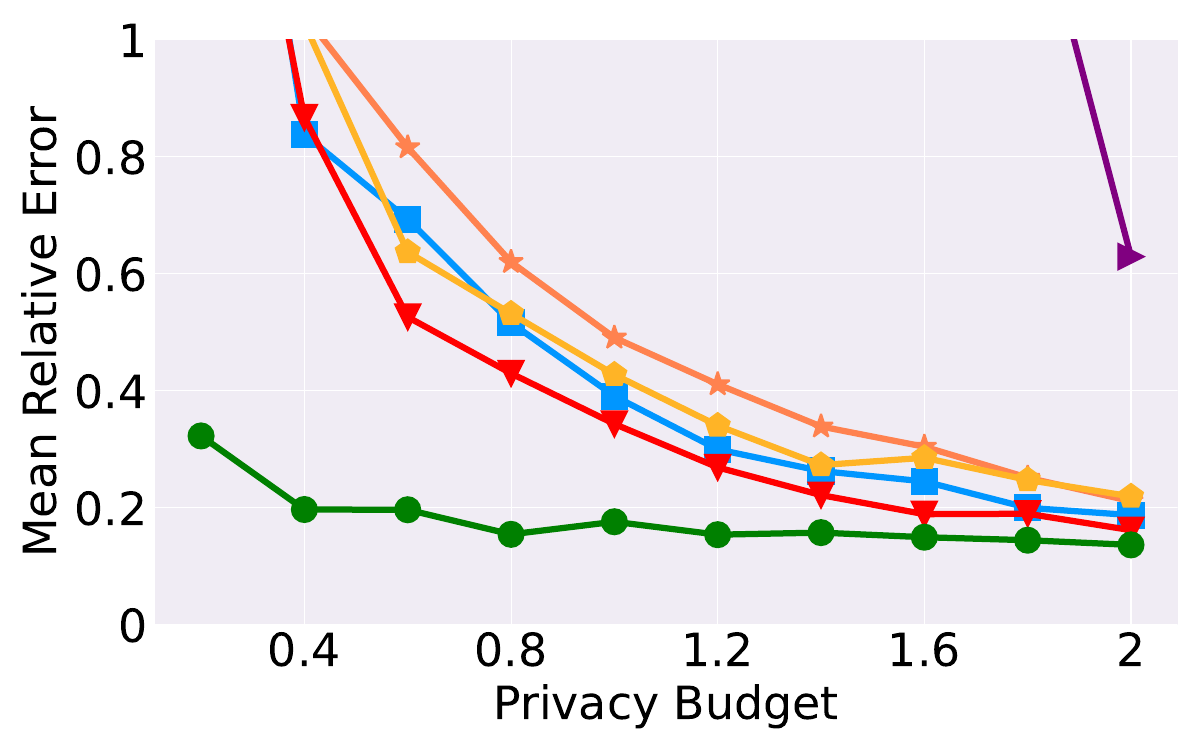}}
		\centerline{\footnotesize{(i) {\it POS}, $[1,1600]$}}
	\end{minipage}
	\caption{Overall performance on real-world datasets with varying privacy budgets.}
	\label{fig:varying privacy budget}
\end{figure*}

\subsection{Experiment Setup}

\noindent
{\bf Datasets.} We conduct experiments on three real datasets.
\begin{itemize}
	\item
	{\it Kosarak}~\footnote{http://fimi.uantwerpen.be/data/}
	contains click-stream dataset from a Hungarian on-line news portal, which contains $990,002$ users. The item domain size is $41,270$. 
	
	\item
	{\it OnlineRetail}~\footnote{https://archive.ics.uci.edu/ml/datasets/}  
	 is transformed from the Online Retail dataset, which contains $541,909$ users. The item domain is $2,603$. 
	
	\item 
	{\it POS}~\cite{zheng2001real} is a dataset on merchant transactions, which contains $515,595$ users. The item domain is $1,657$.
	
\end{itemize}

\noindent
{\bf Competitors.} 
We compare our proposed solution \underline{CRIAD} with existing value-perturbation LDP solutions, namely Numerical Value Perturbation (\underline{NVP}) and Padding-and-Sampling Perturbation (\underline{PSP}) introduced in Section~\ref{sec:numerical_ldp}. Additionally, to demonstrate the superiority of randomized index over randomized response (\underline{RR}), we also implement RR, which replaces Step \textcircled{\small 2} of Figure~\ref{fig:overview} by sampling a bit from the encoded vector produced by Eq.~\ref{eq:encoding}, perturbing it by RR~\cite{warner1965randomized}, and reporting the sanitized bit for aggregation. 

\noindent
{\bf Parameter Setting.} For NVP, we implement it by integrating three state-of-the-art perturbation mechanisms, namely Laplace Mechanism (LM), Piecewise Mechanism (PM) and Square Wave mechanism (SW), which are denoted by NVP-LM, NVP-PM, and NVP-SW, respectively. For PSP, we follow the existing work~\cite{wang2018locally} and use the 90th percentile of the users' itemset sizes of the query category as the padding length $\eta$. To estimate this value, $10\%$ of users report the length through Optimal Local Hashing (OLH)~\cite{wang2017locally} in advance. For CRIAD, to derive the optimal parameter setting of $m$, $s$ and $g$ by Theorem~\ref{thm:rid_plus_parameter}, we allocate $10\%$ of the users randomly to estimate the distribution of $\pi_t$ via SW mechanism~\cite{li2020estimating}. Note that this is for parameter selection only, and will not inject any noise to the original data. 

 
\noindent
{\bf Experiment Design.} We design three sets of experiments to evaluate different methods. The first set compares the overall performance of CRIAD and its competitors across three datasets by varying the privacy budget. The second set studies the impact of category sizes on the performance of different methods. The third set validates the effectiveness of Theorem~\ref{thm:rid_plus_parameter} for setting the optimal parameters for the numbers of dummies $m$, samples $s$ and groups $g$. 


\noindent
{\bf Metrics.}
To evaluate the result accuracy, we employ the Mean Relative Error ~\cite{mood1950introduction}, which quantifies the average difference between the estimated result $\tilde{Q}(c)$ and the ground truth $Q(c)$. Formally,

\begin{equation}
	MRE(c)=\frac{1}{N}\sum\nolimits_{N}\frac{\left|\tilde{Q}(c)-Q(c) \right| }{Q(c)}
\end{equation}
where $N$ represents the number of trials for each experiment. In our study, $N$ is set to $100$.

We conduct experiments using Python 3.11.5 and the Numpy 1.24.3 library on a desktop equipped with an Intel Core i5-13400F 1.50 GHz CPU and 64GB of RAM, running Windows 11.

\subsection{Overall Results}

In this subsection, we investigate the overall performance of different methods across three real-world datasets with varying privacy budgets. On each dataset, we evaluate three subset counting queries with categories set to $[1,100]$, $[1,400]$, and $[1,1600]$, respectively. The privacy budget varies from $0.2$ to $2.0$, with a step size of $0.2$. 
Figure~\ref{fig:varying privacy budget} shows the results, where MRE of all methods decreases as the privacy budget increases. Overall, CRIAD performs the best, followed by RR, NVP and finally PSP. 
The gap between CRIAD and the competitors is particularly significant for smaller privacy budget (e.g., $\epsilon<1.2$). 
This is because, a small privacy budget potentially leads to larger estimation variance, resulting in higher MRE. However, CRIAD is capable of selecting optimal parameters (e.g., by reducing the number of dummy items) to minimize the variance according to the given privacy budget, thanks to Theorem~\ref{thm:rid_plus_parameter}. 
In particular, with a small privacy budget of $\epsilon=0.1$, CRIAD outperforms the second-best method (i.e., RR) by a factor of at least 5.
On the other hand, we found PSP is much less effective compared to other methods. This is because, in each dataset, despite the large size of the category, the number of items owned by each user is small, which can lead to a large $\eta$ and ultimately result in large estimation error.


\begin{figure*}[ht]
	\centering
	\begin{minipage}{0.55\linewidth}
		\centerline{\includegraphics[width=\linewidth]{figures/legend.pdf}}
		\vspace{0.05in}
	\end{minipage}
	
	\begin{minipage}{0.28\linewidth}
		\centerline{\includegraphics[width=\linewidth]{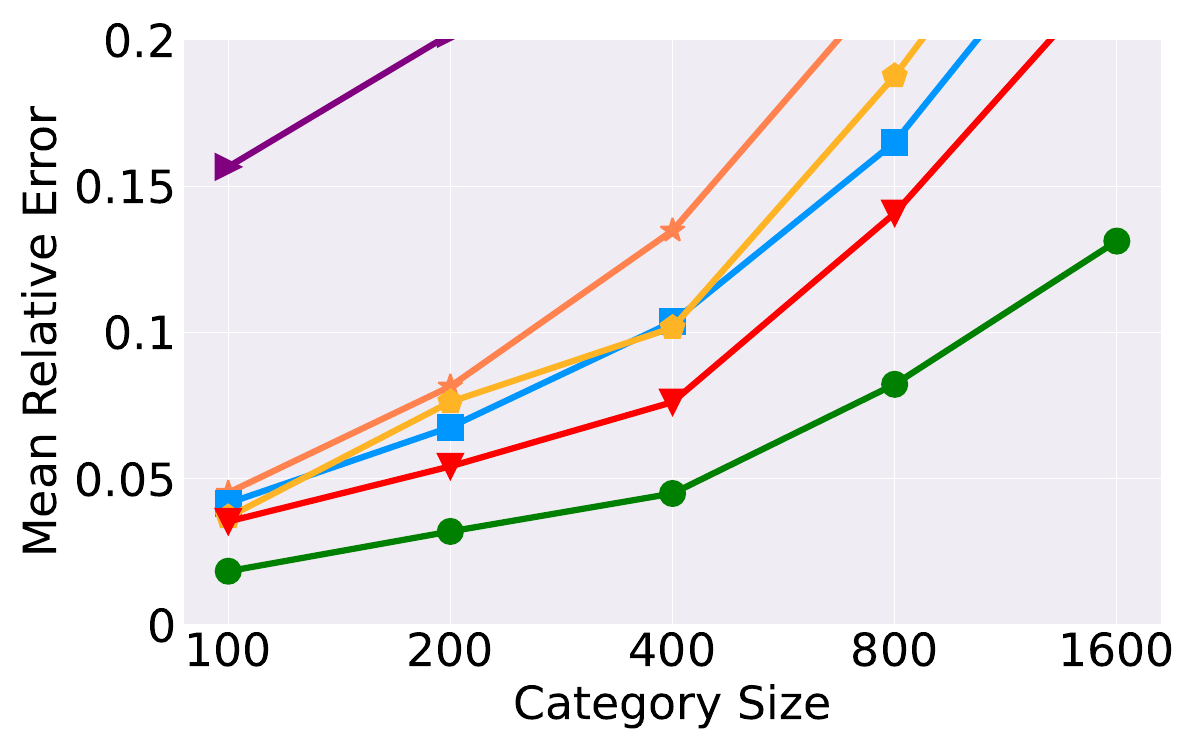}}
		\centerline{\footnotesize{(a) {\it Kosarak}}}
	\end{minipage}
	\begin{minipage}{0.28\linewidth}
		\centerline{\includegraphics[width=\linewidth]{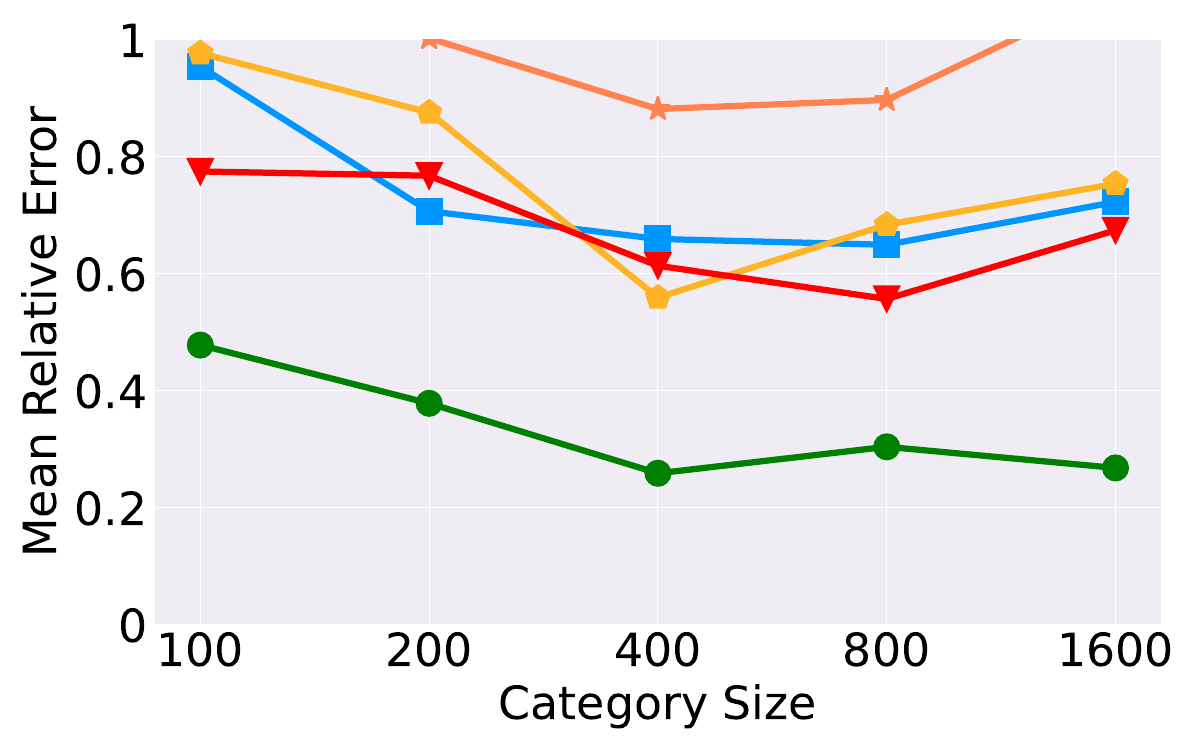}}
		\centerline{\footnotesize{(b) {\it OnlineRetail}}}
	\end{minipage}
	\begin{minipage}{0.28\linewidth}
		\centerline{\includegraphics[width=\linewidth]{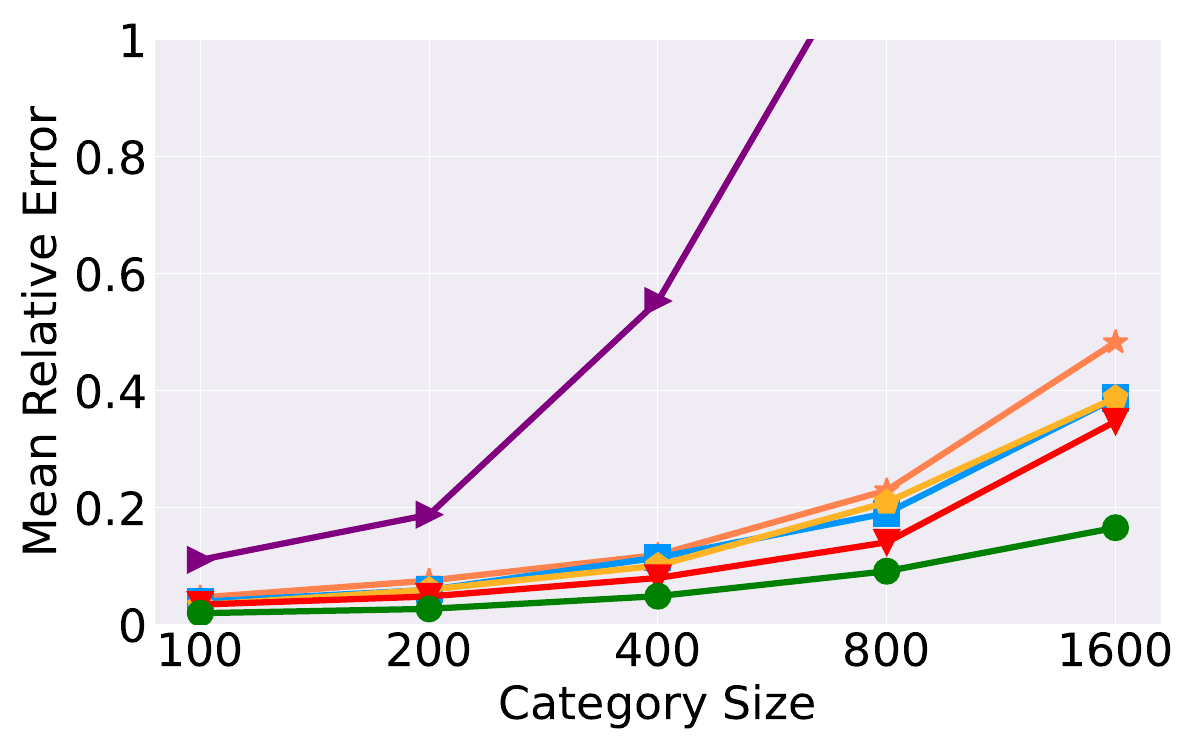}}
		\centerline{\footnotesize{(c) {\it POS}}}
	\end{minipage}
	
	\caption{MRE of different methods with varying category size ($\epsilon=1$).}
	\label{fig:varying size}
\end{figure*}

\subsection{Impact of Category Size}

In this subsection, we study the impact of the category size on the performance of different methods, while fixing the privacy budget at $\epsilon=1$. Specifically, the category size varies from 100 from 1600, resulting in the category domain from $[1,100]$ to $[1,1600]$ respectively. 
Figure~\ref{fig:varying size} presents the results across various category sizes, where CRIAD consistently delivers the best performance in all cases, while PSP always exhibits the highest MRE, which even exceeds $1$ in some cases, especially on {\it OnlineRetail} dataset. 

On the other hand, as shown in Figures~\ref{fig:varying size}(a) and \ref{fig:varying size}(c), we observe that the MRE of different methods increases with the category size on both {\it Kosarak} and {\it POS} datasets. This can be attributed to the distribution of items in both datasets, where the majority of items are concentrated in smaller category indexes. Consequently, as the category size increases, the number of items owned by users within the query category remains relatively stable, which ultimately leads to worse results. In {\it OnlineRetail} dataset, the items are more evenly distributed across users, resulting in a relatively consistent MRE across different category sizes. Overall, the performance gap between CRIAD and the other methods becomes more pronounced as the category size increases, which suggests that CRIAD scales effectively with larger category sizes.

\subsection{Effectiveness of the Optimal Parameter Setting}

In this subsection, we validate the effectiveness of Theorem~\ref{thm:rid_plus_parameter} for determining the optimal parameters, in terms of the numbers of dummies $m$, samples $s$ and groups $g$. For this study, we fix the privacy budget at $\epsilon=1$ and set the category size $d=400$. Note that the estimation of $\pi_t$ in Theorem~\ref{thm:rid_plus_parameter} introduces variability in parameter selection. Therefore, we conduct $100$ independent experiments across three datasets. Table~\ref{table:msgcount} presents the three most frequent $(m,s,g)$ combinations and their occurrences out of 100. We observe that, under the condition of $\epsilon=1$ and $d=400$, our strategy consistently sets $(m,s,g)=(148,1,1)$ with the highest probability. 

\begin{table}[h]
	\scriptsize
	\caption{Occurrences of the three most frequent parameter combinations out of 100 experiments}
	\label{table:msgcount}
	\centering
	\begin{tabular}{c|c|c|c}
		\hline
		$\bm{(m,s,g)}$            & {\bf Kosarak}     & {\bf OnlineRetail} & {\bf POS}         \\ \hline
		(148,1,1)          &60           & 63           & 58          \\ \hline
		(244,2,1)          & 9           & 4            & 3           \\ \hline
		(288,3,1)          & 2           & 2            & 5           \\ \hline
	\end{tabular}
\end{table}

Then we fix $m$ at $148$, $244$, and $288$ respectively, and enumerate all feasible $(m,s,g)$ combinations that satisfy the given privacy budget. The results of MRE of each combination over three datasets are presented in Table~\ref{table:sg}, where the results of the selected parameter combinations by Theorem~\ref{thm:rid_plus_parameter} is shown as bold (i.e., parameter combinations in Table~\ref{table:msgcount}). Overall, we observe that those selected $(m,s,g)$ combinations yield a relatively lower relative error than most of the cases. 
Although the combination $(148,3,2)$ yields the smallest relative error, the difference is not substantial compared to the $(148,1,1)$ case, particularly on the {\it Kosarak} and {\it OnlineRetail} datasets. It is also noteworthy that when $g=2$, the computational cost almost doubles. Additionally, we observe that for fixed values of $m$ and $g$, a larger $s$ results in a decreasing MRE. This trend is explained by Theorem~\ref{thm:rid_plus_parameter} that, while satisfying $\epsilon$-LDP, a larger $s$ corresponds to a smaller expected squared error.


\begin{table}[h]
	\centering
	\scriptsize
	\caption{Results of MRE of parameter combinations over three dataset, with $d=400$, $\epsilon=1$.} 
	\label{table:sg}
	\begin{tabular}{c|c|ccc}
		\hline
		$\bm{m}$ & $\bm{(s,g)}$  & {\bf Kosarak}  & {\bf OnlineRetail}  & {\bf POS} \\
		\hline
		
		\multirow{4}{*}{$m=148$}
		& (1,1)&   \textbf{0.052} &\textbf{0.343}  &\textbf{0.060} \\
		\cline{2-5}
		& (1,2)  & 0.078          & 0.593          & 0.074 \\
		\cline{2-5}
		& (2,2) & 0.052          & 0.341          & 0.045 \\
		\cline{2-5}
		& (3,2) & 0.050  & 0.337     & 0.043 \\
		\hline
		
		\multirow{2}{*}{$m=244$}
		& (1,1)& 0.093 &0.596  &0.072 \\
		\cline{2-5}
		& (2,1)  & \textbf{0.369}  & \textbf{0.593}    & \textbf{0.070} \\
		\hline
		
		\multirow{3}{*}{$m=288$}
		& (1,1)& 0.109 &0.468  &0.082 \\
		\cline{2-5}
		& (2,1)& 0.067 &0.273  &0.074 \\
		\cline{2-5}
		& (3,1)  & \textbf{0.040}  & \textbf{0.199}    & \textbf{0.048} \\
		\hline
%
	\end{tabular}
\end{table}

To further investigate the effectiveness of parameter setting, we then fix $s=1$ or $g=1$ respectively, and randomly select $100$ combinations of $(m,s,g)$ that satisfy the given privacy budget $\epsilon=1$. The results are presented in Figure~\ref{fig:msg}, where combinations with larger MRE than the case of $(148,1,1)$ are marked in red, while those with smaller MRE are in blue.

%
%
%
%
%
%

\begin{figure}[h]
	\centering	
	\begin{minipage}{0.49\linewidth}
		\centerline{\includegraphics[width=\linewidth]{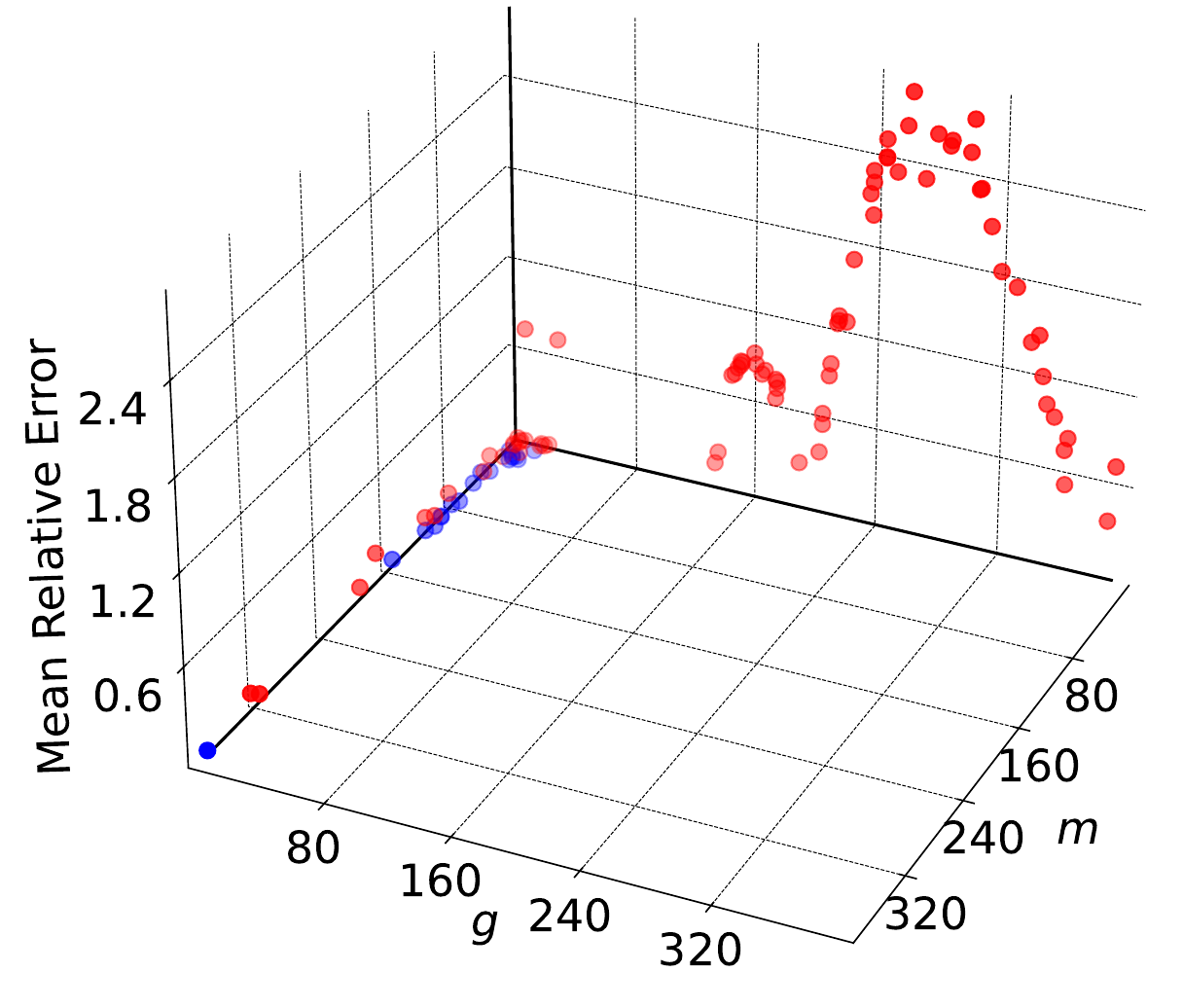}}
		\centerline{\footnotesize{(a) {\it Kosarak}, $\epsilon=1$, $s=1$}}
	\end{minipage}
	\begin{minipage}{0.49\linewidth}
		\centerline{\includegraphics[width=\linewidth]{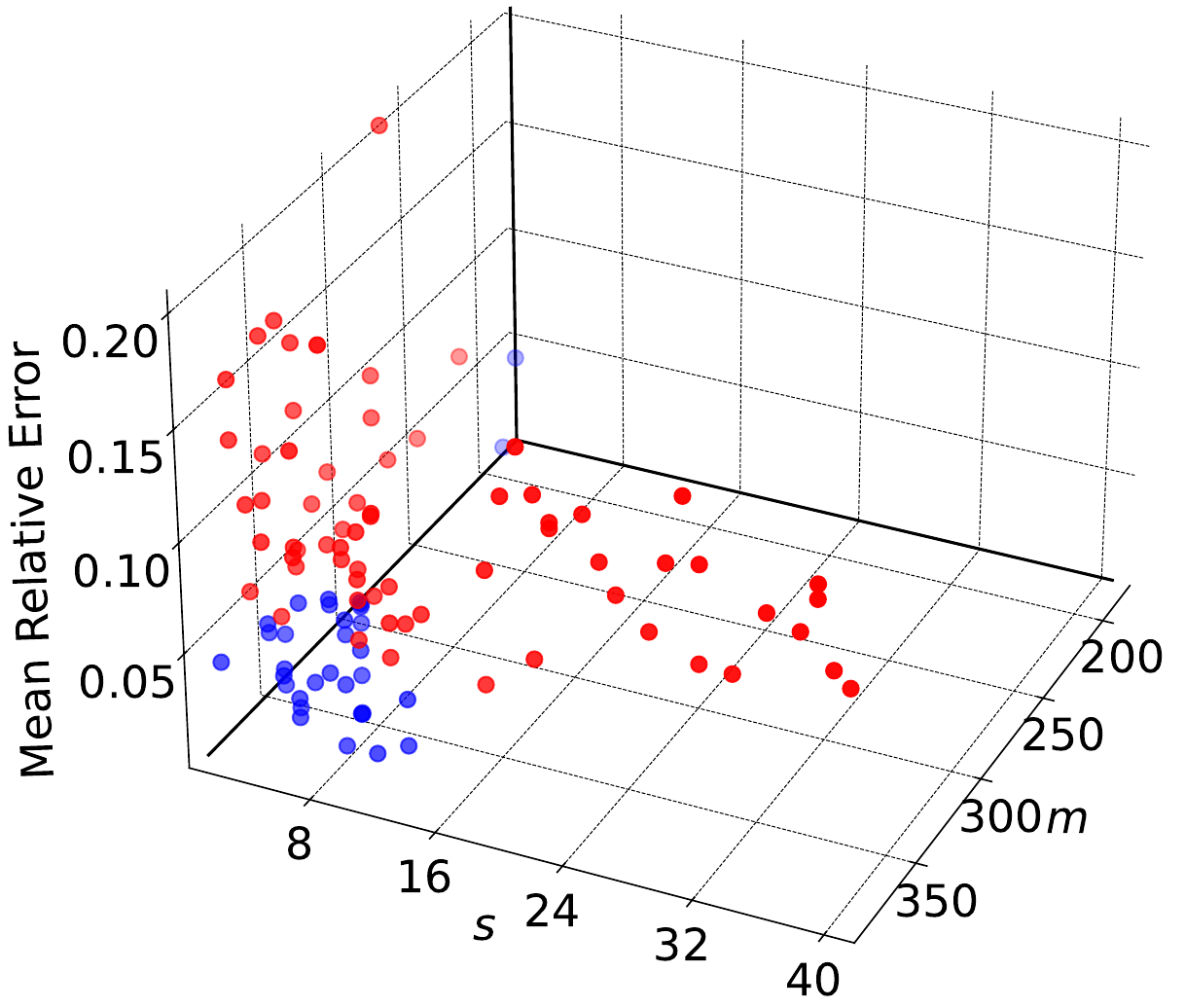}}
		\centerline{\footnotesize{(b) {\it Kosarak}, $\epsilon=1$, $g=1$ }}
	\end{minipage}

	\begin{minipage}{0.49\linewidth}
		\centerline{\includegraphics[width=\linewidth]{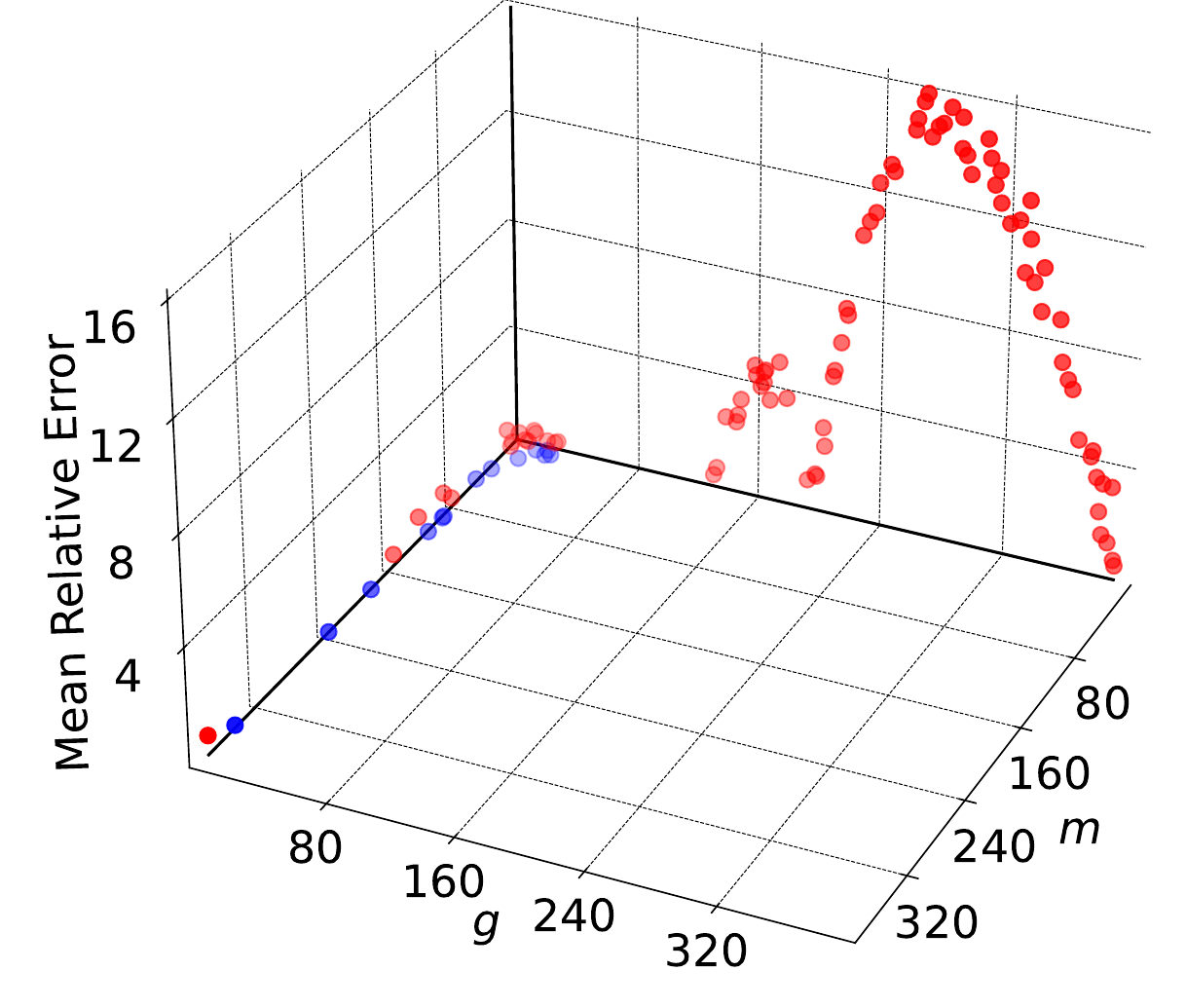}}
		\centerline{\footnotesize{(c) {\it OnlineRetail}, $\epsilon=1$, $s=1$}}
	\end{minipage}
	\begin{minipage}{0.49\linewidth}
		\centerline{\includegraphics[width=\linewidth]{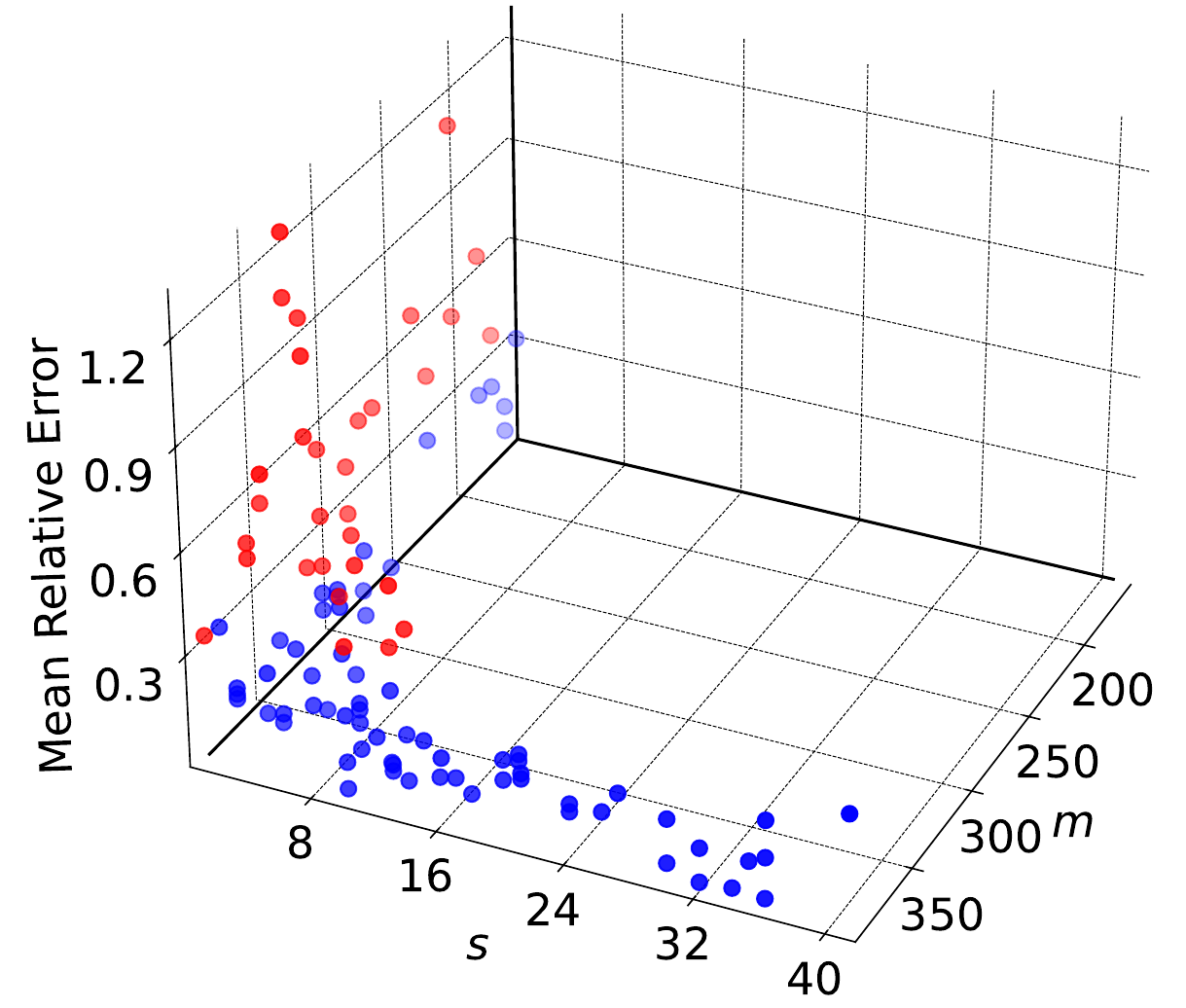}}
		\centerline{\footnotesize{(d) {\it OnlineRetail}, $\epsilon=1$, $g=1$}}
	\end{minipage}

	\begin{minipage}{0.49\linewidth}
		\centerline{\includegraphics[width=\linewidth]{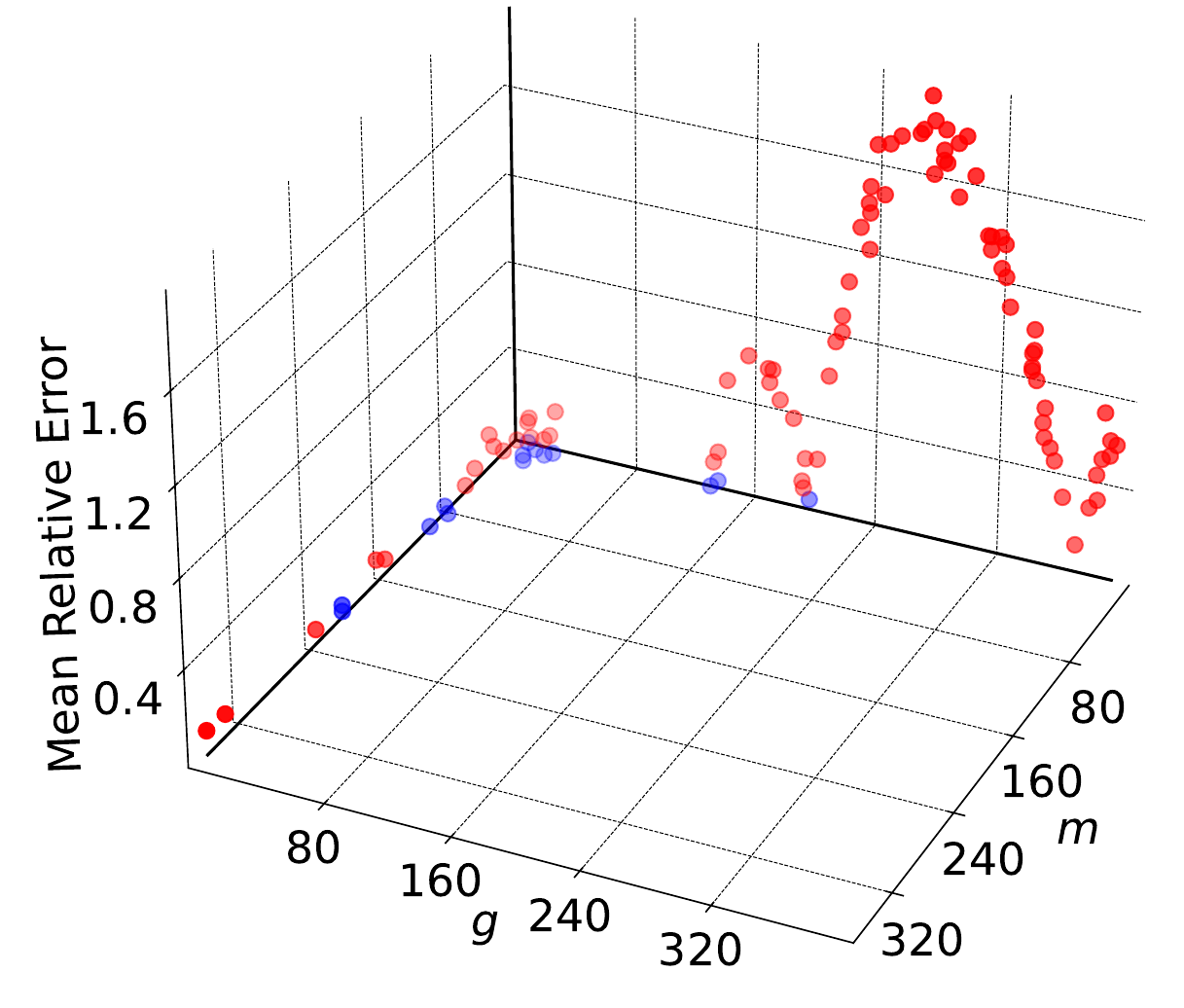}}
		\centerline{\footnotesize{(e) {\it POS}, $\epsilon=1$, $s=1$}}
	\end{minipage}
	\begin{minipage}{0.49\linewidth}
		\centerline{\includegraphics[width=\linewidth]{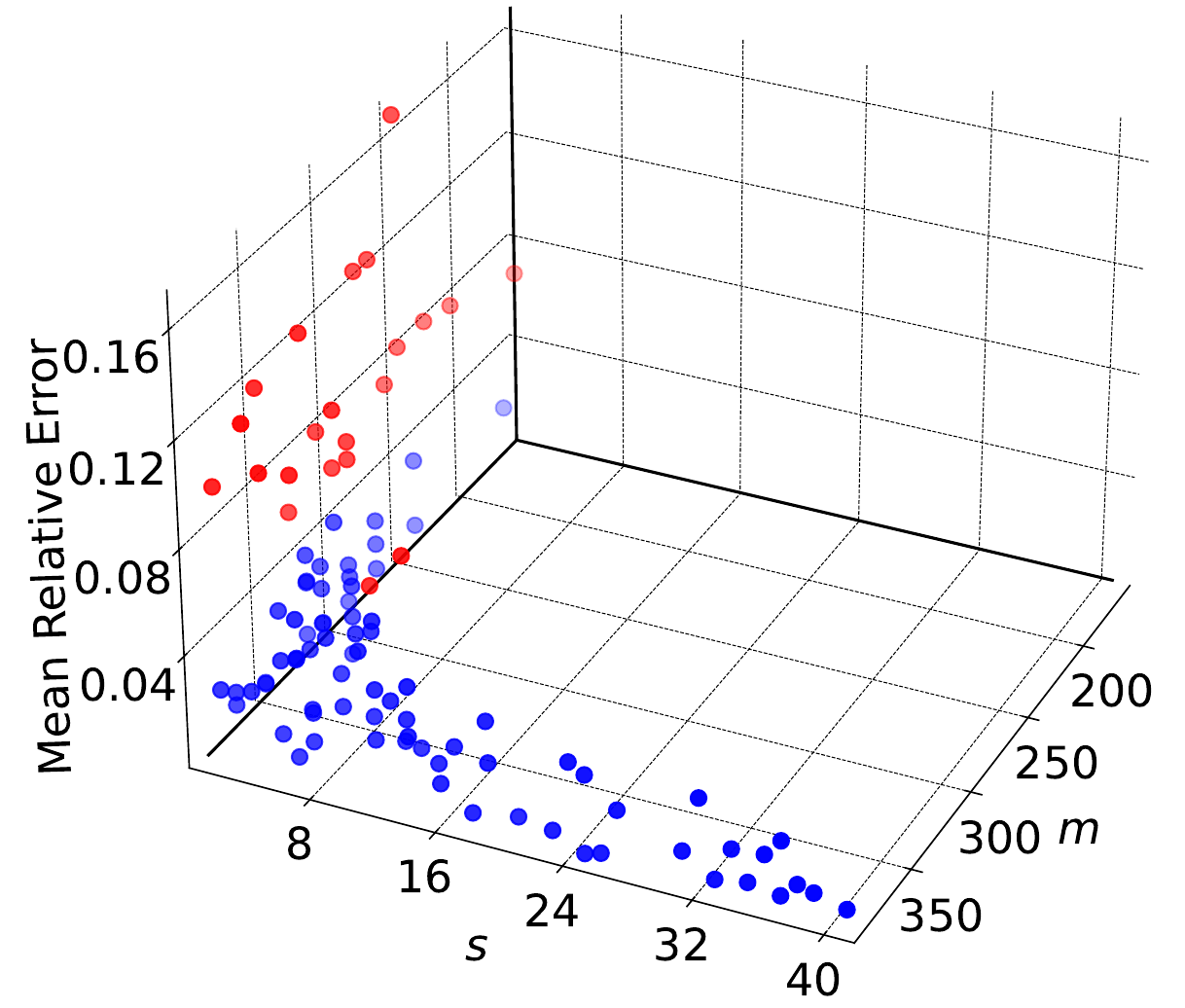}}
		\centerline{\footnotesize{(f) {\it POS}, $\epsilon=1$, $g=1$}}
	\end{minipage}	
	\caption{MRE of 100 random parameter combinations of $(m,s,g)$.}
	\label{fig:msg}
\end{figure}

As shown in Figures~\ref{fig:msg}(a), (c), and (e), we find that when fixing the number of samples at $s=1$, setting a smaller number of groups $g$ tends to achieve a lower MRE. This is to ensure more valid samples for the estimation within each group. As $g$ increases, the number of dummy items $m$ is significantly reduced, since it is constrained by the decreasing group size. Nevertheless, the case of $(148,1,1)$ selected by Theorem~\ref{thm:rid_plus_parameter} still achieves lower MRE than most of parameter combinations.  

Figure~\ref{fig:msg}(b) shows that when $m>300$ and $s<16$, the corresponding parameter combinations outperform $(148,1,1)$ on {\it Kosarak} dataset. Additionally, as evidenced by Table~\ref{table:msgcount}, besides the $(148,1,1)$ combination, Theorem~\ref{thm:rid_plus_parameter} also frequently sets parameters within this range with a high probability on {\it Kosarak}. Figures~\ref{fig:msg}(d) and (f) demonstrate that on {\it  OnlineRetail} and {\it POS} datasets, parameter combinations with $m>350$ and $s<40$ also achieve lower MRE than $(148,1,1)$. Similarly, Table~\ref{table:msgcount} indicates that, besides the combination$(148,1,1)$, Theorem~\ref{thm:rid_plus_parameter} frequently sets parameters within this specified range with considerable probability. Overall, although Theorem~\ref{thm:rid_plus_parameter} does not always determine the optimal setting when $g$ is fixed, it reliably suggests better-than-average parameter configurations and frequently identifies optimal combinations.

\section{Related Work}
\label{sec:related_work}

Differential privacy was first proposed in the centralized setting~\cite{dwork2006calibrating,dwork2014algorithmic,fu2024dpsur}.
To avoid relying on a trusted data collector, local differential privacy (LDP) was proposed to let each user perturb her data locally~\cite{duchi2013local}. In the literature, many LDP techniques have been proposed for various statistical collection tasks, such as frequency of categorical values~\cite{erlingsson2014rappor, kairouz2014extremal, bassily2015local, wang2017locally}, and mean of numerical values~\cite{ding2017collecting, wang2019collecting}. Recently, the research focus in LDP has been shifted to more complex tasks, such as heavy hitter identification~\cite{bassily2017practical, bun2018heavy}, itemset mining~\cite{wang2018locally}, marginal release~\cite{cormode2018practical, zhang2018calm}, time series data analysis~\cite{ye2023stateful,wang2021continuous,bao2021cgm}, and data poisoning attacks~\cite{huang2024ldpguard, du2023differential}. In what follows, we review existing LDP works that are relevant to ours, namely LDP mechanisms for categorical data, numerical data and set-value data, respectively.   

{\bf LDP Mechanisms for Categorical Data}. 
There are a line of LDP work developed for categorical value perturbation. Randomized Response (RR)~\cite{warner1965randomized} is the most straightforward mechanism for binary value, and a generalized form of RR (a.k.a., kRR)~\cite{kairouz2014extremal} is then proposed to deal with a value with domain size $d>2$. To alleviate large perturbation noise along with the increasing domain size, Wang \emph{et al.} propose Optimized Unary Encoding (OUE)~\cite{wang2017locally} which achieves better utility. Besides, several perturbation protocols are also proposed in the literature, including RAPPOR~\cite{erlingsson2014rappor}, SHist~\cite{bassily2015local} and subset selection~\cite{ye2018optimal}. 

{\bf LDP Mechanisms for Numerical Data}. 
Designing LDP protocols for numerical values also attracts great attention from the researchers. As with centralized DP, Laplace Mechanism~\cite{dwork2006calibrating} can be adopted in the local setting. On the other hand, Duchi \emph{et al.} propose a solution for mean estimation over numerical values~\cite{duchi2014privacy}. To address computation and space complexity of it, an improved method~\cite{duchi2018minimax} is then proposed to perturb any numerical input into a binary output according to a certain probability. Ding \emph{et al.}~\cite{ding2017collecting} also propose mechanisms to continuously collect telemetry data. More recently, Wang \emph{et al.}~\cite{wang2019collecting} propose Piecewise Mechanism (PM) to improve estimation accuracy, and Li \emph{et al.}~\cite{li2020estimating} propose Square-wave (SW) mechanism to support numerical distribution estimation.  

{\bf Frequency Estimation over Set-value Data}. As a relevant problem to this paper, frequency estimation over set-value data has been widely studies in the context of LDP. 
LDPMiner~\cite{qin2016heavy}, together with a \emph{padding-and-sampling} protocol, is the first solution for this problem. Wang \emph{et al.} \cite{wang2018locally} further study the privacy amplification effect of padding-and-sampling protocol with kRR, and use it for frequent itemset mining. Wang \emph{et al.}~\cite{wang2018privset} propose PrivSet to estimate both item distribution and set size distribution over set-value data. 
Besides, some other work also consider the setting where each user possesses a set of key-value pairs~\cite{ye2019privkv, gu2020pckv}. 
Nevertheless, existing work over set-value setting all consider a full-domain item distribution or heavy hitter identification~\cite{wang2019locally, du2024top}, both of which are different from the problem studied in this work.

\section{Conclusion}
\label{sec:conclusion}

In this work, we study the problem of answering subset counting queries on set-value data. Unlike existing studies that perturb the original values to ensure an LDP guarantee, we propose an alternative approach that leverages the deniability of randomized indexes without perturbing the values. Our design, named CRIAD, satisfies a rigorous LDP guarantee while achieving higher accuracy than all existing methods. Furthermore, by integrating a multi-dummy, multi-sample, and multi-group strategy, CRIAD is optimized into a fully scalable solution that offers flexibility across various privacy requirements and domain sizes. 

As for further work, we plan to extend CRIAD to more complex scenarios, such as the federated setting, where queries involve users across multiple parties, and item exhibit heterogeneity.




%

\bibliographystyle{abbrv}
\bibliography{ref}

\begin{thebibliography}{10}

\bibitem{bao2021cgm}
E.~Bao, Y.~Yang, X.~Xiao, and B.~Ding.
\newblock {CGM}: an enhanced mechanism for streaming data collection with local
  differential privacy.
\newblock {\em PVLDB}, 14(11):2258--2270, 2021.

\bibitem{bassily2015local}
R.~Bassily and A.~Smith.
\newblock Local, private, efficient protocols for succinct histograms.
\newblock In {\em STOC}, pages 127--135. ACM, 2015.

\bibitem{bassily2017practical}
R.~Bassily, U.~Stemmer, A.~G. Thakurta, et~al.
\newblock Practical locally private heavy hitters.
\newblock In {\em NIPS}, pages 2285--2293, 2017.

\bibitem{bun2018heavy}
M.~Bun, J.~Nelson, and U.~Stemmer.
\newblock Heavy hitters and the structure of local privacy.
\newblock In {\em PODS}, pages 435--447. ACM, 2018.

\bibitem{cormode2018practical}
G.~Cormode, T.~Kulkarni, and D.~Srivastava.
\newblock Marginal release under local differential privacy.
\newblock In {\em SIGMOD}, pages 131--146. ACM, 2018.

\bibitem{ding2017collecting}
B.~Ding, J.~Kulkarni, and S.~Yekhanin.
\newblock Collecting telemetry data privately.
\newblock In {\em NIPS}, pages 3574--3583, 2017.

\bibitem{du2024top}
R.~Du, Q.~Ye, Y.~Fu, H.~Hu, and K.~Huang.
\newblock Top-k discovery under local differential privacy: An adaptive
  sampling approach.
\newblock {\em IEEE Transactions on Dependable and Secure Computing}, 2024.

\bibitem{du2023differential}
R.~Du, Q.~Ye, Y.~Fu, H.~Hu, J.~Li, C.~Fang, and J.~Shi.
\newblock Differential aggregation against general colluding attackers.
\newblock In {\em ICDE}, pages 2180--2193. IEEE, 2023.

\bibitem{duchi2013local}
J.~C. Duchi, M.~I. Jordan, and M.~J. Wainwright.
\newblock Local privacy and statistical minimax rates.
\newblock In {\em FOCS}, pages 429--438. IEEE, 2013.

\bibitem{duchi2014privacy}
J.~C. Duchi, M.~I. Jordan, and M.~J. Wainwright.
\newblock Privacy aware learning.
\newblock {\em Journal of the ACM}, 61(6):1--57, 2014.

\bibitem{duchi2018minimax}
J.~C. Duchi, M.~I. Jordan, and M.~J. Wainwright.
\newblock Minimax optimal procedures for locally private estimation.
\newblock {\em Journal of the American Statistical Association},
  113(521):182--201, 2018.

\bibitem{dwork2006differential}
C.~Dwork.
\newblock Differential privacy.
\newblock In {\em ICALP}, pages 1--12. Springer, 2006.

\bibitem{dwork2006calibrating}
C.~Dwork, F.~McSherry, K.~Nissim, and A.~Smith.
\newblock Calibrating noise to sensitivity in private data analysis.
\newblock In {\em TCC}, pages 265--284. Springer, 2006.

\bibitem{dwork2014algorithmic}
C.~Dwork, A.~Roth, et~al.
\newblock The algorithmic foundations of differential privacy.
\newblock {\em Foundations and Trends{\textregistered} in Theoretical Computer
  Science}, 9(3--4):211--407, 2014.

\bibitem{erlingsson2014rappor}
{\'U}.~Erlingsson, V.~Pihur, and A.~Korolova.
\newblock Rappor: Randomized aggregatable privacy-preserving ordinal response.
\newblock In {\em CCS}, pages 1054--1067. ACM, 2014.

\bibitem{fu2024dpsur}
J.~Fu, Q.~Ye, H.~Hu, Z.~Chen, L.~Wang, K.~Wang, and X.~Ran.
\newblock Dpsur: accelerating differentially private stochastic gradient
  descent using selective update and release.
\newblock {\em Proceedings of the VLDB Endowment}, 17(6):1200--1213, 2024.

\bibitem{gu2020pckv}
X.~Gu, M.~Li, Y.~Cheng, L.~Xiong, and Y.~Cao.
\newblock {PCKV:} locally differentially private correlated key-value data
  collection with optimized utility.
\newblock In {\em USENIX Security}, 2020.

\bibitem{huang2024ldpguard}
K.~Huang, G.~Ouyang, Q.~Ye, H.~Hu, B.~Zheng, X.~Zhao, R.~Zhang, and X.~Zhou.
\newblock {LDPGuard}: Defenses against data poisoning attacks to local
  differential privacy protocols.
\newblock {\em IEEE Transactions on Knowledge and Data Engineering},
  36(7):3195--3209, 2024.

\bibitem{kairouz2014extremal}
P.~Kairouz, S.~Oh, and P.~Viswanath.
\newblock Extremal mechanisms for local differential privacy.
\newblock In {\em NIPS}, pages 2879--2887, 2014.

\bibitem{kasiviswanathan2011can}
S.~P. Kasiviswanathan, H.~K. Lee, K.~Nissim, S.~Raskhodnikova, and A.~Smith.
\newblock What can we learn privately?
\newblock {\em SIAM Journal on Computing}, 40(3):793--826, 2011.

\bibitem{li2016differential}
N.~Li, M.~Lyu, D.~Su, and W.~Yang.
\newblock Differential privacy: From theory to practice.
\newblock {\em Synthesis Lectures on Information Security, Privacy, \& Trust},
  8(4):1--138, 2016.

\bibitem{li2020estimating}
Z.~Li, T.~Wang, M.~Lopuha{\"a}-Zwakenberg, N.~Li, and B.~{\v{S}}koric.
\newblock Estimating numerical distributions under local differential privacy.
\newblock In {\em SIGMOD}, pages 621--635, 2020.

\bibitem{mcsherry2009privacy}
F.~McSherry.
\newblock Privacy integrated queries: an extensible platform for
  privacy-preserving data analysis.
\newblock In {\em SIGMOD}, pages 19--30. ACM, 2009.

\bibitem{mood1950introduction}
A.~M. Mood.
\newblock Introduction to the theory of statistics.
\newblock 1950.

\bibitem{qin2016heavy}
Z.~Qin, Y.~Yang, T.~Yu, I.~Khalil, X.~Xiao, and K.~Ren.
\newblock Heavy hitter estimation over set-valued data with local differential
  privacy.
\newblock In {\em CCS}, pages 192--203. ACM, 2016.

\bibitem{thakurta2017learning}
A.~G. Thakurta, A.~H. Vyrros, U.~S. Vaishampayan, G.~Kapoor, J.~Freudiger,
  V.~R. Sridhar, and D.~Davidson.
\newblock Learning new words, Mar.~14 2017.
\newblock US Patent 9,594,741.

\bibitem{wang2019collecting}
N.~Wang, X.~Xiao, Y.~Yang, J.~Zhao, S.~C. Hui, H.~Shin, J.~Shin, and G.~Yu.
\newblock Collecting and analyzing multidimensional data with local
  differential privacy.
\newblock In {\em ICDE}, 2019.

\bibitem{wang2018privset}
S.~Wang, L.~Huang, Y.~Nie, P.~Wang, H.~Xu, and W.~Yang.
\newblock {PrivSet}: Set-valued data analyses with locale differential privacy.
\newblock In {\em INFOCOM}, pages 1088--1096. IEEE, 2018.

\bibitem{wang2017locally}
T.~Wang, J.~Blocki, N.~Li, and S.~Jha.
\newblock Locally differentially private protocols for frequency estimation.
\newblock In {\em USENIX Security}, pages 729--745, 2017.

\bibitem{wang2021continuous}
T.~Wang, J.~Q. Chen, Z.~Zhang, D.~Su, Y.~Cheng, Z.~Li, N.~Li, and S.~Jha.
\newblock Continuous release of data streams under both centralized and local
  differential privacy.
\newblock In {\em CCS}, pages 1237--1253, 2021.

\bibitem{wang2018locally}
T.~Wang, N.~Li, and S.~Jha.
\newblock Locally differentially private frequent itemset mining.
\newblock In {\em S\&P}, pages 127--143. IEEE, 2018.

\bibitem{wang2019locally}
T.~Wang, N.~Li, and S.~Jha.
\newblock Locally differentially private heavy hitter identification.
\newblock {\em IEEE Transactions on Dependable and Secure Computing},
  18(2):982--993, 2019.

\bibitem{warner1965randomized}
S.~L. Warner.
\newblock Randomized response: A survey technique for eliminating evasive
  answer bias.
\newblock {\em Journal of the American Statistical Association},
  60(309):63--69, 1965.

\bibitem{ye2018optimal}
M.~Ye and A.~Barg.
\newblock Optimal schemes for discrete distribution estimation under locally
  differential privacy.
\newblock {\em IEEE Transactions on Information Theory}, 64(8):5662--5676,
  2018.

\bibitem{ye2023stateful}
Q.~Ye, H.~Hu, K.~Huang, M.~H. Au, and Q.~Xue.
\newblock Stateful switch: Optimized time series release with local
  differential privacy.
\newblock In {\em INFOCOM}. IEEE, 2023.

\bibitem{ye2019privkv}
Q.~Ye, H.~Hu, X.~Meng, and H.~Zheng.
\newblock Priv{KV}: Key-value data collection with local differential privacy.
\newblock In {\em S\&P}, pages 317--331. IEEE, 2019.

\bibitem{zhang2018calm}
Z.~Zhang, T.~Wang, N.~Li, S.~He, and J.~Chen.
\newblock {CALM}: Consistent adaptive local marginal for marginal release under
  local differential privacy.
\newblock In {\em CCS}, pages 212--229. ACM, 2018.

\bibitem{zheng2001real}
Z.~Zheng, R.~Kohavi, and L.~Mason.
\newblock Real world performance of association rule algorithms.
\newblock In {\em SIGKDD}, pages 401--406, 2001.

\end{thebibliography}


\end{document}